\newtheorem{theorem}{Theorem}
\newtheorem{lemma}{Lemma}
\newtheorem{conjecture}{Conjecture}
\newtheorem{corollary}{Corollary}
\newcommand{\RNA}{\mathsf{RNA}}
\newcommand{\CG}{\mathsf{CG}}
\newcommand{\CNG}{\mathsf{CNG}}
\newcommand{\CLG}{\mathsf{CLG}}
\newcommand{\LG}{\mathsf{LG}}
\newcommand{\NNG}{\mathsf{NG}}
\newcommand{\dLCS}{\delta_{\operatorname{LCS}}}
\begin{document}
\date{}
\title{Hardness of RNA Folding Problem with Four Symbols\thanks{
A preliminary version of this paper was presented at the 27th Annual Symposium on Combinatorial Pattern Matching (CPM 2016)
Tel Aviv, Israel, June 27--29, 2016~\cite{chang:LIPIcs:2016:6089}.
Supported by NSF grants CCF-1217338, CNS-1318294,
and CCF-1514383. E-mail: cyijun@umich.edu.}}


\author{Yi-Jun Chang
\\ University of Michigan}


\maketitle


\begin{abstract}
An  RNA sequence is a string composed of four types of nucleotides,  $A, C, G$, and $U$. The goal of the RNA folding problem is to find a maximum cardinality set of crossing-free pairs of the form $\{A,U\}$ or $\{C,G\}$ in a given RNA sequence. The problem is central in bioinformatics and has received much attention over the years.  Abboud, Backurs, and Williams (FOCS 2015) demonstrated a conditional lower bound for a generalized version of the RNA folding problem based on a conjectured hardness of the $k$-clique problem. Their lower bound requires the RNA sequence to have at least 36 types of symbols, making the result not applicable to the RNA folding problem in real life (i.e., alphabet size 4). In this paper, we present an improved lower bound that works  for the alphabet size 4 case.

We also investigate the Dyck edit distance problem, which is a string problem closely related to RNA folding. We demonstrate a reduction from RNA folding to  Dyck edit distance with alphabet size 10. This leads to a much simpler proof of the conditional lower bound for Dyck edit distance problem given  by Abboud, Backurs, and Williams (FOCS 2015), and lowers the alphabet size requirement.
\end{abstract}

\noindent Keywords:  RNA folding; Dyck edit distance; Longest common subsequence; Conditional lower bound.

\setcounter{page}{1}

\section{Introduction \label{sec.intro}}

An {\em RNA sequence} is a string composed of four types of nucleotides,  namely $A, C, G$, and $U$. Given an RNA sequence, the goal of the {\em RNA folding} problem is to find a maximum cardinality set of crossing-free pairs of nucleotides, where all the pairs are either $\{A,U\}$ or $\{C,G\}$. The problem is central in bioinformatics, and it has found application in predicting the secondary structure of RNA molecules, which is of importance in molecular biology; see e.g.,~\cite{FG10} for more details.

It is well-known that the RNA folding problem can be solved in  $O(n^3)$ time via dynamic programming~\cite{DEKM98}. Due to the importance of the problem in practice, there has been a long line of research aiming at improving the runtime, practically or theoretically~\cite{A99,FG10,PTZZ11,PZTZ13,VGF13}. Based on log-shaving techniques such as the four-Russian method,
the time complexity of ${O}\left(\frac{n^3}{\log^2 n}\right)$ can be obtained~\cite{PZTZ13}.

Whether the RNA folding problem can be solved in truly sub-cubic time (i.e.,  ${O}(n^{3-\epsilon})$ time for some constant $\epsilon > 0$) had been a major open problem until recently. In 2016, Bringmann, Grandoni, Saha, and Williams~\cite{BFSW16} showed that the problem can be solved in randomized $O(n^{2.8244})$ time and deterministic $O(n^{2.8603})$ time via fast bounded-difference min-plus matrix multiplication. The RNA folding problem can be reduced to min-plus matrix multiplication~\cite{BFSW16}. The current state-of-the-art algorithm for min-plus matrix multiplication~\cite{R14} takes $O(n^3 / c^{\sqrt{\log n}})$ time, for some constant $c > 1$. Using this algorithm, one immediately obtains an RNA folding algorithm with the same time complexity. Bringmann, Grandoni, Saha, and Williams observed that the min-plus matrix multiplication instance resulting from the reduction has the ``bounded differences'' property, and they showed that bounded-difference min-plus matrix multiplication can be solved in truly sub-cubic time~\cite{BFSW16}.

The algorithm of~\cite{BFSW16} uses fast matrix multiplication, which does not perform very efficiently in practice.
It is still an open question whether there is a combinatorial, non-algebraic, truly sub-cubic time algorithm for RNA folding.

\subsection{Conditional Lower Bounds}
A popular way to show hardness of a problem is to demonstrate a lower bound conditioned on some widely accepted hypothesis.

\begin{conjecture} [Strong Exponential Time Hypothesis] \label{c-1}
There exist no $\epsilon > 0$ and $k_0 > 0$ such that $k$-SAT with $n$ variables can be solved in time $O(2^{(1-\epsilon)n})$ for all $k > k_0$.
\end{conjecture}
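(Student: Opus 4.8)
The plan is to recognize, first and foremost, that this statement is a \emph{conjecture} rather than a theorem, so there is no proof to propose in the usual sense: SETH is a hypothesis that we adopt as a working assumption and from which conditional lower bounds are subsequently derived. A genuine proof would constitute a resolution of one of the central open problems in fine-grained complexity, and would in particular yield unconditional circuit or algorithmic lower bounds far beyond the reach of current techniques. Accordingly, what I would write here is not a derivation but an account of the evidence that makes treating SETH as plausible reasonable, together with a note on how it is actually used downstream.

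The principal piece of supporting evidence is the state of the art for $k$-SAT algorithms. Randomized procedures such as Sch\"{o}ning's algorithm and PPSZ solve $k$-SAT in time $2^{n(1 - c_k)}$, but the savings $c_k$ is of order $\Theta(1/k)$ and hence \emph{degrades to zero} as $k$ grows. Thus no single $\epsilon > 0$ is known to work uniformly for all sufficiently large $k$, which is exactly the impossibility that SETH asserts. Despite decades of effort this $1/k$ barrier has never been broken, and the persistent failure to obtain a uniform exponential speedup is the empirical backbone of the conjecture. I would present this as the motivation rather than as a step in an argument.

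The real obstacle, then, is not a missing technical lemma but the fact that refuting SETH would require a fundamentally new algorithmic idea for CNF-SAT, while establishing it unconditionally would demand lower bounds we simply do not know how to prove. For the purposes of this paper I would therefore \emph{assume} the conjecture and direct all effort to where it is actionable: feeding SETH (or, as the abstract indicates, the closely related $k$-clique hardness assumption on which the RNA folding reduction is actually built) into a reduction whose correctness and gadget construction are the genuine technical content. In short, the hard part lives entirely in the reduction and the four-symbol gadget design, not in the hypothesis itself, which is taken as given.
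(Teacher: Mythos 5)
You are right that this statement is a conjecture, not a theorem: the paper offers no proof of SETH, merely states it as a widely believed hypothesis and (as you note) actually bases its main reduction on the $k$-clique conjecture rather than on SETH itself. Your treatment --- declining to prove it, citing the $\Theta(1/k)$ savings barrier of known $k$-SAT algorithms as motivation, and deferring all technical work to the reductions --- matches the paper's stance exactly.
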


\begin{conjecture} [$k$-clique Conjecture] \label{c-2}
There exist no $\epsilon > 0$ and $k_0 > 0$ such that $k$-clique on graphs with $n$ nodes  can be solved in time $O\left(n^{(\omega - \epsilon) k/3}\right)$ for all $k > k_0$, where $\omega < 2.373$ is the matrix multiplication exponent.
\end{conjecture}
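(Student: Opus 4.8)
The final statement is a \emph{conjecture} — a fine-grained hardness hypothesis — rather than a theorem, so strictly speaking there is no proof to propose: establishing it outright would require proving a polynomial running-time lower bound for an explicit problem in a general model of computation, something we currently cannot do even at the super-linear level. The realistic task is therefore not to prove the statement but to explain why $\tilde{\mathcal{O}}(n^{\omega k/3})$ is the natural threshold and why beating it by a polynomial factor seems hard, exactly as one argues when motivating SETH (Conjecture~\ref{c-1}) or the 3SUM and APSP conjectures.

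First I would recall the matching \emph{upper} bound, since the conjecture asserts its optimality. The reduction of Ne\v{s}et\v{r}il and Poljak detects a $k$-clique by splitting the vertices into $\lceil k/3 \rceil$ groups and forming an auxiliary graph whose nodes are the cliques inside each group; a $k$-clique then corresponds to a triangle in a graph on $\mathcal{O}(n^{\lceil k/3\rceil})$ nodes, and triangle detection on $N$ nodes costs $\mathcal{O}(N^{\omega})$ via matrix multiplication. For $k$ divisible by $3$ this is exactly $\mathcal{O}(n^{\omega k/3})$, so the forbidden exponent $(\omega-\epsilon)k/3$ is precisely the claim that this decades-old algorithm admits no polynomial speed-up for all large $k$.

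Next I would marshal the supporting evidence. The key points are: (i) no improvement over the Ne\v{s}et\v{r}il--Poljak running time is known despite sustained effort; (ii) the conjecture behaves consistently as $\omega$ varies, and in the extreme case $\omega = 2$ the analogous $k=3$ statement — triangle detection in $n^{2-o(1)}$ time — is \emph{unconditionally} tight, since a dense graph can force any algorithm to inspect $\Omega(n^2)$ edges; and (iii) it is a by-now standard pillar of the fine-grained complexity landscape, used in the same way as, and robustly alongside, SETH.

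The main obstacle is the same one that turns every such statement into a conjecture: we possess no method for proving polynomial lower bounds against arbitrary (and in particular algebraic, matrix-multiplication-based) algorithms, so the hypothesis can only be \emph{supported}, never settled, with present tools. Accordingly, its role in this paper is not to be proven but to be \emph{assumed}: the real work lies in the reduction in the opposite direction — transferring this clique-hardness, via gadgets and surrounding sequences, into a conditional lower bound for RNA folding over a four-symbol alphabet.
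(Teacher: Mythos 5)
You are right that this statement is a conjecture (a fine-grained hardness hypothesis), and the paper offers no proof of it either --- it is stated only as an assumption from which the conditional lower bounds for RNA folding are derived. Your discussion of the matching Ne\v{s}et\v{r}il--Poljak upper bound and the supporting evidence is accurate and consistent with the role the conjecture plays in the paper.
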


For instance, assuming Strong Exponential Time Hypothesis (SETH), the following bounds are unattainable for any $\epsilon > 0$:
(i) an ${O}(n^{k-\epsilon})$ algorithm for $k$-dominating set problem~\cite{PR10};
(ii) an ${O}(n^{2-\epsilon})$ algorithm for dynamic time warping, longest common subsequence, and edit distance~\cite{ABV15*,BI14,BK15};
(iii) an ${O}(m^{2-\epsilon})$ algorithm for ($3/2 - \epsilon$)-approximating the diameter of a graph with $m$ edges~\cite{RV13}.

As remarked in~\cite{ABV15}, it is straightforward to reduce the longest common subsequence (LCS) problem on binary strings to the RNA folding problem as follows. Given two binary strings $X, Y$,  let $\hat{X} \in {\{A,C\}}^{|X|}$ be a string defined as $\hat{X}[i] = A$ if $X[i] = 0$, and $\hat{X}[i] = C$ if $X[i] = 1$; similarly, let  $\hat{Y} \in {\{G,U\}}^{|Y|}$ be a string defined as $\hat{Y}[i] = U$ if $Y[i] = 0$, and $\hat{Y}[i] = G$ if $Y[i] = 1$. Then we have an 1-1 correspondence between   RNA foldings of $\hat{X} \circ \hat{Y}^R$ (i.e., concatenation of $\hat{X}$ and the reversal of $\hat{Y}$) and  common subsequences of $X$ and $Y$. It has been shown in~\cite{BK15} that there is no ${O}(n^{2-\epsilon})$-time algorithm for LCS on binary strings, assuming SETH. Thus, we immediately obtain the same conditional lower bound for RNA folding.

Abboud, Backurs, and Williams demonstrated  a higher conditional lower bound for a generalized version of the RNA folding problem (which coincides with the RNA folding problem when the alphabet size is 4) from the $k$-clique Conjecture~\cite{ABV15}.

\begin{theorem} [\cite{ABV15}] \label{thm-1}
If the generalized RNA folding problem on sequences of length $n$ with alphabet size 36  can be solved in $T(n)$ time, then $3k$-clique on graphs with $|V|=n$ can be solved in $T\left( O(n^{k + 2} \log n) \right)$ time.
\end{theorem}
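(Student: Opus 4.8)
The plan is to give a fine-grained reduction that encodes a $3k$-clique instance as a single RNA sequence whose maximum folding value crosses a fixed threshold if and only if the graph contains a $3k$-clique. The first step is to reformulate the clique search in a form tailored to a nested sequence gadget. I would think of a $3k$-clique as a collection of $k$ vertex-disjoint \emph{triangles} $t_1,\dots,t_k$ that are pairwise ``fully connected'' (every one of the nine edges between two triangles is present) and internally complete. Bundling vertices into triangles is what will later let a single compatibility test between two blocks be expressed as a matching/pairing condition over a \emph{constant} number of symbol types, which is the intended source of the fixed alphabet size $36$. Detecting a $3k$-clique is then equivalent to selecting one triangle per ``block'' so that all $\binom{k}{2}$ block-pairs are compatible.

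The second step is the gadget design. Exploiting the same principle as the simple LCS-to-RNA reduction in the excerpt---that folding a sequence against a \emph{reversed} copy simulates alignment, and that each complementary base pair (of type $\{A,U\}$ or $\{C,G\}$, now with extra symbol types in the generalized alphabet) contributes exactly one to the folding---I would build for each candidate triangle $t$ a \emph{node gadget}: a short string over the $36$-symbol alphabet encoding the adjacency profile of $t$, arranged so that placing the gadget of $t$ opposite the reversed, complemented gadget of $t'$ yields the maximum number of pairings precisely when $t$ and $t'$ are compatible. Distinct symbol types are used to tag coordinates and separators so that a nucleotide can only bond with its intended complement; the generous alphabet is exactly the tool that prevents ``cheating'' bonds across unrelated coordinates.

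The third step assembles these node gadgets hierarchically using the alignment-gadget lemma of Bringmann and K\"{u}nnemann. I would build the sequence by $k$ levels of recursion: a level-$i$ gadget is formed from roughly $n$ shifted copies of level-$(i-1)$ gadgets separated by guard strings, so that an optimal folding is forced to commit to one triangle per block and to propagate that choice consistently through all levels, with the total folding value equal to a fixed offset plus the number of compatible block-pairs. Each level should multiply the length by a factor of about $n$ on top of base gadgets of size roughly $n^2\log n$ that encode the pairwise edge information, which is what yields the claimed length $O(n^{k+2}\log n)$; a clique then exists iff the maximum folding attains its global maximum, and by the conjectured clique bound (Conjecture~\ref{c-2}) this forces $T$ to be essentially cubic.

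The hard part, as in every folding lower bound, will be \emph{soundness}: proving that the maximum folding is always realized by a ``canonical'' folding respecting gadget boundaries, so that no adversarial crossing-free matching can manufacture spurious value by bonding across different levels or coordinates and thereby fake a clique. I expect to handle this through the alignment-gadget machinery together with the large alphabet and carefully chosen guard and separator blocks, arguing that any deviation from the intended nested structure strictly loses pairings. A secondary technical point is calibrating the threshold and the per-level blowup so that the contributions of the $\binom{k}{2}$ compatibility checks add up cleanly and the final length comes out to exactly $O(n^{k+2}\log n)$.
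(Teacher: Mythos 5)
The paper does not prove Theorem~\ref{thm-1} itself (it is imported from \cite{ABV15}), but its own Theorem~\ref{thm-2} reproduces and refines that construction, so the intended argument is clear: one enumerates all $O(n^k)$ $k$-cliques, encodes each as a flat gadget of length $O(n^2\log n)$, concatenates the gadgets in three groups separated by guard blocks, and arranges matters so that an optimal folding ``unblocks'' exactly one clique gadget per group and performs the three pairwise $2k$-clique tests $t_\alpha$--$t_\beta$, $t_\alpha$--$t_\gamma$, $t_\beta$--$t_\gamma$ in three disjoint contiguous segments. Your proposal deviates from this in a way that breaks. You decompose the $3k$-clique into $k$ vertex-disjoint triangles and require all $\binom{k}{2}$ pairs of selected triangle gadgets to be tested for compatibility. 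But a crossing-free matching between blocks laid out on a line can only realize an outerplanar pattern of block-to-block interactions, so the $\binom{k}{2}$ pairwise tests cannot all be carried out simultaneously once $k\geq 5$ (as $K_5$ is not outerplanar). The entire point of the three-$k$-cliques decomposition is that three blocks admit all three pairwise interactions without crossings; this is why the construction selects three large cliques rather than many small pieces.

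A second problem is the assembly mechanism. There is no $k$-level recursion in the actual proof: the sequence is a flat concatenation of $|\mathcal{C}_k| = O(n^k)$ clique gadgets, and the $O(n^{k+2}\log n)$ length is simply the number of gadgets times the gadget length. Your recursive scheme, in which an optimal folding is supposed to ``commit to one triangle per block and propagate that choice consistently through all levels,'' has no supporting mechanism: a folding is a single crossing-free matching, and the alignment-gadget lemma of Bringmann and K\"unnemann yields one LCS-like comparison between two designated strings, not a way to thread a consistent selection through nested levels. Your soundness plan (canonical foldings, guard and separator blocks) is the right kind of concern, but it cannot rescue a decomposition whose interaction pattern is combinatorially unrealizable. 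Finally, the alphabet size $36$ in \cite{ABV15} arises from the many distinct wall/separator symbols used to police gadget boundaries, not from bundling vertices into triangles.
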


Therefore, an ${O}(n^{\omega - \epsilon})$-time algorithm for the generalized RNA folding with alphabet size at least 36 will disprove the $k$-clique Conjecture, yielding  a breakthrough to the parameterized complexity of the $k$-clique problem.
The current state-of-the-art algorithm for $k$-clique takes ${{O}}\left(n^{\omega k/3}\right)$ time (when $k$ is a multiple of 3), and it requires the use of fast matrix multiplication~\cite{EG04}. For combinatorial, non-algebraic algorithms for $k$-clique, the current state-of-the-art upper bound is ${{O}}\left(\frac{n^k}{\log^k n}\right)$~\cite{V09}. Therefore, an ${O}(n^{3- \epsilon})$-time combinatorial algorithm for RNA folding would imply a breakthrough for combinatorial algorithms for $k$-clique.

\subsection{Our Contribution}
Due to its alphabet size requirement, Theorem~\ref{thm-1} is not applicable to the RNA folding problem in real life (i.e., alphabet size 4). It is unknown whether the generalized RNA folding for alphabet size $4$  admits a faster algorithm than the case for alphabet size $> 4$. There are plenty of string algorithms whose runtime depends on the alphabet size (e.g., string matching with mismatches~\cite{AL91} and jumbled indexing~\cite{ACLL14,CL15}). Note that when the alphabet size is 2, the generalized RNA folding problem can be trivially solved in linear time.
In this paper, we improve upon Theorem~\ref{thm-1} by showing the same conditional lower bound still for the case of alphabet size 4.
Note that we also get an ${O}(n)$ factor improvement inside $T(\cdot)$, though it does not affect the conditional lower bound.

\begin{theorem}  \label{thm-2}
If the RNA folding problem on sequences in ${\{A,C,G,U\}}^n$ can be solved in $T(n)$ time, then $3k$-clique on graphs with $|V|=n$ can be solved in $T\left( O(n^{k + 1} \log n)\right)$ time.
\end{theorem}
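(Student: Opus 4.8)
The plan is to follow the high-level reduction behind Theorem~\ref{thm-1}, in which a $3k$-clique instance on a graph $G=(V,E)$ is encoded into a single RNA sequence whose maximum folding value is as large as possible exactly when $G$ contains a $3k$-clique. First I would fix a tripartition of $V$ and, for each of the three parts, enumerate the $O(n^{k})$ candidate $k$-cliques; each candidate is realized as a short ``clique gadget'', and the gadgets of two different parts are designed so that a nested pairing between them scores maximally precisely when the two $k$-cliques are mutually adjacent (their union induces a $2k$-clique). The crossing-free constraint of RNA folding is exactly what makes the three-way combination possible in one string: by laying the three lists out and using the reversal trick ($X\circ Y^{R}$) together with nesting, a single crossing-free folding can simultaneously certify all three pairwise adjacency conditions, so that the global optimum is attained iff three mutually compatible $k$-cliques---hence a $3k$-clique---can be chosen at once.

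The heart of the paper, and the step I would spend the most care on, is realizing these gadgets over the four-letter alphabet $\{A,C,G,U\}$ rather than the $36$ symbols of Theorem~\ref{thm-1}. For the inner ``coordinate tests'' I would invoke the lemma of Bringmann and K\"unnemann, which produces alignment gadgets with tightly predictable score behavior while spending only a constant number of symbols; plugging their construction in as the core of each clique gadget supplies the required comparisons at $O(1)$ alphabet cost. These gadgets alone, however, cannot control the folding: with only the two complementary pairs $\{A,U\}$ and $\{C,G\}$ available there is very little room to forbid unintended pairings, whereas the $36$-symbol construction could simply give each role its own private color. I would therefore wrap the gadgets in carefully designed ``frame'' sequences---large blocks of guaranteed-pairable letters acting as a structural backbone---whose role is to make every folding that does \emph{not} respect the intended block-to-block alignment strictly suboptimal, so that reading off an optimal folding always recovers a consistent choice of one $k$-clique per part.

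The main obstacle is precisely this loss of ``colors'': I must argue a clean two-sided bound showing that (i) any genuine $3k$-clique yields a folding of a prescribed target value, and (ii) \emph{no} folding can exceed that value unless it corresponds to a valid clique selection, even though the surrounding frames and the limited alphabet create many alternative nested pairings a priori. Establishing (ii) is where the frame sequences must be reasoned about globally rather than gadget-by-gadget, since a cheating folding could try to trade score between the backbone and the gadgets. Once both bounds are in place, it remains to verify the bookkeeping: the total sequence length is $O(n^{k+1}\log n)$---one factor of $n^{k}$ for enumerating the $k$-cliques, with the remaining $n\log n$ absorbed by the gadget sizes and frames---which yields the claimed $\mathcal{O}\!\left(T\!\left(n^{k+1}\log n\right)\right)$ bound and, in particular, improves the length exponent of Theorem~\ref{thm-1} from $k+2$ to $k+1$.
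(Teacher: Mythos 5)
Your outline matches the paper's strategy --- three blocks of clique gadgets built from the Bringmann--K\"{u}nnemann alignment lemma, pairwise $2k$-clique tests via LCS, and long ``frame'' runs that trap all but one gadget per block --- but two things keep it from being a proof. First, a concrete misstep: the paper does \emph{not} fix a tripartition of $V$. It enumerates the single set $\mathcal{C}_k$ of all $k$-cliques of $G$ and repeats that entire list in each of the three blocks; a $3k$-clique is then witnessed by three elements of $\mathcal{C}_k$ whose pairwise unions are $2k$-cliques. If you restrict each block to $k$-cliques lying inside one part of a fixed tripartition, you miss every $3k$-clique that does not split as $k{+}k{+}k$ across your chosen parts, so that step fails on general graphs (it could be patched with color-coding, but that is unnecessary and not what the paper does).

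Second, the lower-bound direction --- your item (ii) --- is exactly where all of the paper's technical work lives, and your proposal only names the difficulty without resolving it. The paper instantiates the frames as an explicit length hierarchy $\ell_0 \ll \ell_1 \ll \ell_2 \ll \ell_3 \ll \ell_4$ (runs $0^{\ell_4}$ between blocks, ${0'}^{\ell_3}$ between consecutive gadgets, and deliberately asymmetric paddings ${1}^{\ell_2}$, ${1'}^{2\ell_2}$, ${0}^{\ell_1}$, ${0'}^{\ell_1}$ inside the $\alpha$, $\beta$, $\gamma$ gadget variants), and then Sections~\ref{ss-3} and~\ref{ss-4} carry out a case analysis showing that in some optimal folding every ${0'}^{\ell_3}$ is absorbed by a $0^{\ell_4}$, that at most one gadget per block can interact across blocks, and that the residual interaction decomposes into exactly the three intended LCS computations. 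None of this follows from the high-level picture: with only two complementary letter pairs a folding could, for instance, match a separator run into a gadget's interior or link two gadgets of the same block, and excluding such foldings depends on the specific numerical choices of the $\ell_i$ and of the $\ell_2$ versus $2\ell_2$ multiplicities. Without constructing those sequences and proving those exclusions, the claimed equivalence between $\mathrm{RNA}(S_G)$ and the existence of a $3k$-clique is not established.
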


In what follows, we briefly overview the proof of Theorem~\ref{thm-2}.
At a high level, our reduction (from $3k$-clique to RNA folding) follows the approach in \cite{ABV15}. Given a graph $G$, enumerate all $k$-cliques, and each of them is encoded as some gadgets. All the gadgets are then put together to form an RNA sequence. The goal is to ensure that an optimal RNA folding corresponds to choosing three $k$-cliques that form a $3k$-clique, given that the underlying graph admits a $3k$-clique.

Intuitively, in order to force the gadgets to be matched in a desired manner in an optimal RNA folding, we have to build various ``walls'' that prevent undesired pairings.
The main challenge is that we have to achieve this goal using merely 4 types of symbols.
Our main tool is to use the technique ``alignment gadget'' developed in~\cite{BK15}, whose original purpose is to prove that longest common subsequence and other edit distance problems are SETH-hard even on binary strings. We apply this tool as a black box during the construction of the RNA sequence.



\paragraph{Dyck Edit Distance.}
The RNA folding problem can be alternatively defined as follows. Given a string $S$, delete the minimum number of letters in $S$ to transform it into another string $S'$ in the language defined by the grammar $\mathbf{S} \rightarrow \mathbf{SS},  A\mathbf{S}U, U\mathbf{S}A, C\mathbf{S}G, G\mathbf{S}C, \epsilon$ (empty string). The {\em Dyck edit distance problem}~\cite{S14,S15*}, which asks for the minimum number of edits to transform a given string to a well-balanced parentheses of $s$ different types, admits a similar formulation. Due to the similarity, the Dyck edit distance problem was shown to admit the same conditional lower bound as Theorem~\ref{thm-1}~\cite{ABV15}.
Their conditional lower bound requires the alphabet size to be at least $48$.
In this paper, we present  a simple reduction from RNA folding to Dyck edit distance.

\begin{theorem}\label{thm-3}
If Dyck edit distance problem on sequences of length $n$ with alphabet size 10 can be solved in $T(n)$ time, then the RNA folding problem on sequences in ${\{A,C,G,U\}}^n$ can be solved in ${O}(T(n))$ time.
\end{theorem}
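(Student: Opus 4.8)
The plan is to exhibit an explicit, linear-time map $\phi$ sending an RNA sequence $s \in \{A,C,G,U\}^n$ to a parenthesis string $\phi(s)$ over an alphabet of $10$ symbols (five bracket types) whose Dyck edit distance is an affine function of the RNA folding value of $s$; inverting this relation and making one call to the assumed $T(n)$-time Dyck algorithm then solves RNA folding in $\mathcal{O}(T(n))$ time. Two features separate the problems and must be bridged: RNA pairing is \emph{unoriented} (an $A$ may pair a $U$ lying on either side), whereas Dyck brackets are oriented; and RNA folding charges only deletions, whereas Dyck edit distance also allows insertions and substitutions. It is convenient to work with the combinatorial form of Dyck edit distance: it equals the length of the string minus the maximum, over non-crossing matchings of its positions, of a gain in which a pair $(i,j)$ with $i<j$ scores $2$ when $D_i,D_j$ are an opener and a closer of the same type, and $1$ when a single substitution can turn them into such a pair.

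The core of the reduction is a gadget that turns each unoriented RNA pair into an oriented Dyck pair. I would map each nucleotide to exactly two brackets, an opener immediately followed by a closer of a \emph{different} type, allocating four types so that a properly typed Dyck pair forms precisely when it joins two complementary nucleotides, one type per orientation: type $1$ realizes the configuration ``$A$ on the left, $U$ on the right'' (so $A \mapsto (_1 )_2$), type $2$ realizes ``$U$ on the left, $A$ on the right'' (so $U \mapsto (_2 )_1$), and symmetrically types $3,4$ for the $C,G$ pair. This encoding is faithful for two structural reasons. First, since within a gadget the opener precedes the closer and the two carry different types, a gadget's own two symbols can never form a properly typed pair, so there are no spurious self-matches. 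Second, the two arcs that would let a single nucleotide serve simultaneously as a left partner and a right partner necessarily \emph{cross}, so a non-crossing matching can never double-use a nucleotide. Consequently every properly typed non-crossing matching corresponds to a non-crossing matching of complementary nucleotides, i.e.\ to an RNA folding, and a maximum such matching has exactly $\mathrm{RNA}(s)$ pairs.

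This yields $\mathrm{DyckED}(\phi(s)) = 2\,(n - \mathrm{RNA}(s)) + c$ for a fixed constant $c$, provided the optimal Dyck editing never gains from the extra operations (substitutions and insertions) nor from gain-$1$ cross-gadget matches; ensuring this is exactly the role of the fifth bracket type, deployed as \emph{walls} that separate and frame the gadgets, and is the reason the alphabet is $10$ rather than $8$. Granting this, the easy direction is immediate: an RNA folding with $p$ pairs realizes $p$ properly typed Dyck pairs and deletes the remaining symbols, giving $\mathrm{DyckED}(\phi(s)) \le 2\,(n-p) + c$. The reverse inequality follows from the structural facts above, so $\mathrm{RNA}(s)$ can be read off from a single Dyck computation on a string of length $\mathcal{O}(n)$ built in linear time.

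The main obstacle is precisely the normalization/exchange argument behind the walls: one must show that an optimal Dyck editing can be rewritten, without increasing its cost, into one that uses only properly typed pairs respecting gadget boundaries — in other words, that the wall brackets make substitutions, insertions, nucleotide double-use, and spurious gain-$1$ matches all non-beneficial. Designing the walls so that any would-be cheap substitution is forced to cross a wall pairing (and hence cannot coexist with the matching it tries to improve), and verifying the resulting case analysis, is where the real work lies; the remaining pieces — the bijection between proper matchings and RNA foldings, and the arithmetic inversion of the affine identity — are routine.
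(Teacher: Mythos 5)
Your high-level plan (encode each nucleotide by brackets, use two bracket types per complementary pair so that each orientation of an unordered RNA pair becomes a properly oriented Dyck pair, and read $\mathrm{RNA}(s)$ off an affine relation) is in the right spirit, and the orientation trick is close to what the paper does. But there is a genuine gap, and it is exactly the part you defer. First, the concrete two-symbol gadget already fails on small instances. Take $s=AU$, so $\phi(s)=(_1\,)_2\,(_2\,)_1$. Matching positions $1$ and $4$ (the intended type-$1$ match) strands the subword $)_2\,(_2$ at positions $2,3$; this is a closer followed by an opener, which is precisely the one configuration that cannot be repaired by a single substitution, so both symbols must be deleted and the cost is $2$. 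But $\phi(AA)$ also has cost $2$ (two wrong-type opener--closer pairs, each fixed by one substitution), so the Dyck edit distance fails to distinguish a string with one RNA pair from one with none, and no relation of the form $a\cdot\mathrm{RNA}(s)+b$ can hold. (Your stated slope of $2$ is also inconsistent with a length-$2n$ encoding in which each RNA pair contributes only one matched bracket pair.) Second, the entire burden of ruling out beneficial substitutions, insertions, and gain-$1$ cross-gadget pairs is placed on ``walls'' of a fifth type whose design and case analysis you explicitly leave open; as the example shows, the damage is not only cross-gadget cheating but leakage \emph{inside} a legitimate arc, which walls placed between gadgets do not address.

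The paper's proof avoids the normalization/exchange argument you identify as ``the real work'' by a different accounting. Each nucleotide is mapped to a gadget containing \emph{two} copies of its matching letter (e.g.\ $0\mapsto aeb'aeb'$ and $0'\mapsto bba'a'$), with a spacer letter $e$ inserted so that in the forward direction every symbol of every gadget lands in a matched or substituted pair and the deletion set is empty; one RNA pair then yields exactly two matched Dyck pairs, giving cost $|S_{\mathrm{Dyck}}|/2-2\,\mathrm{RNA}(S)$. For the converse it does \emph{not} prove that a gadget cannot be double-used; it observes that the graph on gadgets induced by the matched pairs has maximum degree $2$, joins only complementary gadgets, and has no odd cycles, so a maximum matching in it recovers an RNA folding of size at least half the number of matched Dyck pairs --- and this factor-of-two loss is exactly cancelled by the factor of two built into the forward direction. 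Substitutions are dispatched by the single inequality $|A_S|+|D|\ge(|S_{\mathrm{Dyck}}|-2|A_M|)/2$, with no walls and no case analysis. To salvage your approach you would need to redesign the gadgets so that a legitimate match never strands a closer--opener pair, which is essentially what the paper's redundant gadgets and the spacer $e$ accomplish.
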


Combining Theorem~\ref{thm-2} and Theorem~\ref{thm-3}, we obtain the following corollary.

\begin{corollary} \label{cor-1}
If the Dyck edit distance problem on sequences of length $n$ with alphabet size 10  can be solved in $T(n)$ time, then $3k$-clique on graphs with $|V|=n$ can be solved in $T\left( O(n^{k + 1} \log n)\right)$ time.
\end{corollary}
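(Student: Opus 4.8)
The plan is simply to chain the two reductions already established, using the fast Dyck edit distance algorithm as the innermost black box. Suppose we are handed an algorithm $\mathcal{A}$ solving the Dyck edit distance problem on sequences of length $n$ over an alphabet of size $10$ in time $T(n)$. My first step would be to invoke Theorem~\ref{thm-3}: its reduction transforms any RNA folding instance in $\{A,C,G,U\}^n$ into a single Dyck edit distance instance of size $\mathcal{O}(n)$ with alphabet size $10$, and recovers the RNA folding value from the Dyck answer. Feeding that instance to $\mathcal{A}$ yields an algorithm for RNA folding on $\{A,C,G,U\}^n$ running in time $T_{\mathrm{RNA}}(n) = \mathcal{O}(T(n))$.

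The second step would be to plug this RNA folding algorithm into Theorem~\ref{thm-2}. That theorem reduces $3k$-clique on an $n$-vertex graph to a single RNA folding instance over $\{A,C,G,U\}$ of length $N = \mathcal{O}(n^{k+1}\log n)$; note that the alphabet produced is exactly the four-letter alphabet required by $T_{\mathrm{RNA}}$, so no further adaptation is needed. Running $T_{\mathrm{RNA}}$ on this instance and substituting the bound from the first step gives a $3k$-clique algorithm running in time $\mathcal{O}\!\left(T_{\mathrm{RNA}}(N)\right) = \mathcal{O}\!\left(T\left(n^{k+1}\log n\right)\right)$, which is precisely the claimed bound.

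The only point requiring any care — and it is the closest thing to an obstacle here — is the clean composition of the two asymptotic running-time bounds. To absorb the hidden constant from $T_{\mathrm{RNA}}(n) = \mathcal{O}(T(n))$ into the outer $\mathcal{O}(\cdot)$ after the substitution $n \mapsto N$, I would assume $T$ is non-decreasing in the input length, which we may do without loss of generality by replacing $T$ with $n \mapsto \max_{m \le n} T(m)$ if necessary. With this convention the constant factors merely multiply rather than interfere, and the stated bound follows directly. No new gadgetry, alphabet enlargement, or case analysis is involved: the corollary is genuinely just the formal transitivity of the two reductions.
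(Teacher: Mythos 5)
Your proposal is correct and matches the paper exactly: the paper derives Corollary~\ref{cor-1} by simply composing Theorem~\ref{thm-3} (Dyck edit distance solves RNA folding with only constant-factor overhead) with Theorem~\ref{thm-2} (RNA folding solves $3k$-clique on instances of length $\mathcal{O}(n^{k+1}\log n)$), which is precisely your chain of reductions. Your added remark about assuming $T$ non-decreasing is a reasonable tidying of the asymptotic composition that the paper leaves implicit.
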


This improves upon the conditional lower bound in~\cite{ABV15} (reducing the alphabet size from $48$ to $10$), and it also simplifies the proof (the original proof in~\cite{ABV15} takes about 9 pages).

\section{Preliminaries \label{sec.prelim}}

Given a set of letters $\Sigma$, the set $\Sigma'$ is defined as $\{x' \ | \ x \in \Sigma\}$. It is required that $\Sigma \cap \Sigma' = \emptyset$, and $\forall x, y \in \Sigma,  (x \neq y) \rightarrow (x' \neq y')$. Therefore, $|\Sigma'| = |\Sigma|$ and $|\Sigma \cup \Sigma'| = 2|\Sigma|$.

For any string $X = (x_1, \ldots, x_k) \in \Sigma^k$, we write $p(X)$ to denote $(x_1', \ldots, x_k')$. The letter $p$ stands for the prime symbol. We denote the reversal of the sequence $X$ as $X^R$. The concatenation of two sequences $X, Y$ is denoted as $X \circ Y$, or simply $XY$. We write {\em substring} to denote a contiguous subsequence.
We say that two pairs of indices $(i_1,j_1)$, $(i_2,j_2)$, with $i_1 < j_1$ and $i_2 < j_2$, form a {\em crossing pair} if
$$\left( \{i_1, j_1\} \cap \{i_2,j_2\} \neq \emptyset \right) \vee
\left( i_1 < i_2 < j_1 < j_2 \right) \vee
\left( i_2 < i_1 < j_2 < j_1\right).$$

\paragraph{Generalized RNA Folding.}
Given a string $S \in (\Sigma \cup \Sigma')^n$,  an {\em RNA folding} of $S$ is a set $A \subseteq \{(i,j) \ | \ 1 \leq i < j \leq n \}$ meeting the following two conditions.
\begin{itemize}
\item $A$ does not contain any crossing pair.
\item For each $(i,j) \in A$, either $S[i] \in \Sigma$ and $S[j] = S[i]'$, or  $S[j] \in \Sigma$ and $S[i] = S[j]'$ is true.
\end{itemize}
The goal of the generalized RNA folding problem is to find a maximum cardinality RNA folding $A^\ast$.
We write $\RNA(S) = |A^\ast|$, where $A^\ast$ is any  maximum cardinality RNA folding of $S$.
Any RNA folding $A$ satisfying $|A| = \RNA(S)$ is said to be {\em optimal}.

In the paper we only focus on the generalized RNA folding problem with four types of letters, i.e. $\Sigma = \{0,1\}, \Sigma' = \{0',1'\}$, which coincides with the RNA folding problem for alphabet $\{A,C,G,U\}$.
With a slight abuse of notation, sometimes we  write $(S[i], S[j])$ to denote a pair $(i,j) \in A$. The notation $\{\cdot,\cdot\}$ is used to indicate an unordered pair.

\paragraph{Longest Common Subsequence.} Given $X \in \Sigma^n$ and $Y \in \Sigma^m$, we define $\dLCS (X,Y)$ as the minimum number of letters from $X$ and $Y$ needed to be deleted to make them identical.
That is, $\dLCS (X,Y) = n+m - 2k$, where $k$ is the length of the longest common subsequence (LCS) of $X$ and $Y$. Observe that $\RNA(X \circ p(Y^R)) = (n+m - \dLCS (X,Y))/2$ equals the length of LCS of $X$ and $Y$. Hence the LCS problem can be viewed as the RNA folding problem with some structural constraint on the RNA sequence.

\paragraph{Alignment Gadgets.}
In~\cite{BK15}, a SETH-based conditional lower bound for  LCS   with $|\Sigma| = 2$ was shown. A key technique in their approach is a function that transforms an instance of an alignment problem between two sets of sequences to an instance of the LCS problem, which is briefly described as follows.

Let $\mathbf{X} = (X_1, \ldots, X_n)$ and $\mathbf{Y} = (Y_1, \ldots, Y_m)$ be two  linearly ordered sets of sequences of alphabet $\Sigma$. We assume that $n \geq m$. An {\em alignment} is a set $A = \{(i_1, j_1), (i_2, j_2), \ldots$, $(i_{|A|}, j_{|A|})\}$ with $1 \leq i_1 < i_2 < \cdots < i_{|A|} \leq n$ and $1 \leq j_1 < j_2 < \cdots < j_{|A|} \leq m$. An alignment $A$ is called {\em structural} iff $|A| = m$ and $i_{m} = i_1 + m - 1$. That is, all sequences in $\mathbf{Y}$ are matched, and the matched positions in $\mathbf{X}$ are contiguous. The set of all alignments is denoted by $\mathcal{A}_{n,m}$, and  the set of all structural alignments is denoted by $\mathcal{S}_{n,m}$.
The {\em cost} of an alignment $A$ (with respect to $\mathbf{X}$ and $\mathbf{Y}$) is defined as
$$ \delta(A) = \sum_{(i,j) \in A} \dLCS(X_i, Y_j) + (m - |A|) \max_{i,j} \dLCS(X_i, Y_j). $$
That is, unaligned parts of $\mathbf{Y}$ are penalized by $\max_{i,j} \dLCS(X_i, Y_j)$.
Given a sequence $X$, the {\em type} of $X$ is defined as $(|X|, \sum_i X[i])$, where each letter is assumed to be a number. Note that if the alphabet is $\Sigma = \{0, 1\}$ (which is the case in this paper), then $\sum_i X[i]$ is the number of occurrences of $1$ in $X$.
The following lemma was proved in \cite{BK15}.\footnote{See Lemma~4.3 in the arXiv version (1504.01431v2) of~\cite{BK15}.}

\begin{lemma} [\cite{BK15}] \label{lem-1}
Let $\mathbf{X} = (X_1, \ldots, X_n)$ and $\mathbf{Y} = (Y_1, \ldots, Y_m)$ be two  linearly ordered sets of binary strings such that $n \geq m$. All $X_i$ are of type $\mathcal{T}_{X} = (\ell_{X}, s_{X})$, and all $Y_i$ are of type $\mathcal{T}_Y = (\ell_{Y}, s_{Y})$. There are two binary strings $S_X = \mathrm{GA}_{{X}}^{m, \mathcal{T}_{Y}}(X_1, \ldots, X_n)$, $S_Y = \mathrm{GA}_{{Y}}^{n, \mathcal{T}_{X}}(Y_1, \ldots, Y_m)$ and an integer $C$ meeting the following requirements.
\begin{itemize}
\item $\min_{A \in \mathcal{A}_{n,m}} \delta(A) \leq \dLCS(S_X, S_Y) - C \leq \min_{A \in \mathcal{S}_{n,m}} \delta(A)$.
\item the integer $C$ and the types of $S_X$ and $S_Y$ only depend on $n, m, \mathcal{T}_{X}, \mathcal{T}_{Y}$.
\item $S_X, S_Y$, and $C$ can be calculated in time ${O}((n+m)(\ell_X +\ell_Y))$; hence $|S_X|$ and $|S_Y|$ are both ${O}((n+m)(\ell_{X} +\ell_{Y}))$.
\end{itemize}
\end{lemma}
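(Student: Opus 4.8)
The plan is to realize the set-alignment problem as a single LCS instance by \emph{interleaving} the input strings with long identical separator blocks drawn from the binary alphabet. Concretely, I would fix one separator string $G$ (a run of a fixed binary pattern) of length $L \gg \max(\ell_X,\ell_Y)$ and set
\[
S_X = G\,\tilde X_1\,G\,\tilde X_2\,G\cdots G\,\tilde X_n\,G,\qquad
S_Y = G\,\tilde Y_1\,G\,\tilde Y_2\,G\cdots G\,\tilde Y_m\,G,
\]
where $\tilde X_i$ and $\tilde Y_j$ are $X_i$ and $Y_j$ equipped with a fixed amount of binary padding. The hypothesis that all $X_i$ share the type $\mathcal T_X$ and all $Y_j$ share the type $\mathcal T_Y$ is essential here: it makes the windows interchangeable, so the penalty $\max_{i,j}\delta_{\text{LCS}}(X_i,Y_j)$ is a single number determined by $\mathcal T_X,\mathcal T_Y$, and it lets me choose the padding so that deleting an entire unmatched $Y$-window costs exactly this penalty. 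The constant $C$ will absorb the fixed deletion cost of the $n-m$ surplus windows of $S_X$ together with their separators --- cost that no common subsequence of this shape can avoid --- so that $C$ depends only on $n,m,\mathcal T_X,\mathcal T_Y$ and the block length, as demanded by the second bullet.

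For the \emph{upper} bound $\delta_{\text{LCS}}(S_X,S_Y)-C\le \min_{A\in\mathcal S_{n,m}}\delta(A)$, I would take an optimal structural alignment $A$ and convert it into an explicit common subsequence: match the $m+1$ guard blocks of $S_Y$ against the $m+1$ guards bounding the contiguous window $\tilde X_{i_1},\dots,\tilde X_{i_m}$ of $S_X$, and inside each aligned window take a longest common subsequence of $X_{i_j}$ and $Y_j$. A direct length count then shows that this common subsequence has deletion distance exactly $\delta(A)+C$; since a structural alignment has $|A|=m$ the penalty term vanishes, and the inequality follows by optimality of the LCS.

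The \emph{lower} bound $\min_{A\in\mathcal A_{n,m}}\delta(A)\le \delta_{\text{LCS}}(S_X,S_Y)-C$ is the delicate direction and the main obstacle. Starting from an optimal common subsequence of $S_X$ and $S_Y$, I would first run an exchange argument to normalize it so that it respects the separator structure: every matched guard character of $S_Y$ lies in a single guard block of $S_X$ and no match crosses a separator boundary. This is where $L \gg \max(\ell_X,\ell_Y)$ pays off --- each guard is so long that it is never profitable to give up guard matches in return for extra window matches, so after the exchanges all $m+1$ guards of $S_Y$ are matched monotonically to $m+1$ guards of $S_X$. These matched guards cut $S_X$ into windows and induce a monotone partial assignment of the $Y_j$'s to the $X_i$'s, that is, a general alignment $A\in\mathcal A_{n,m}$ whose aligned windows contribute $\sum_{(i,j)\in A}\delta_{\text{LCS}}(X_i,Y_j)$ and whose $m-|A|$ unmatched $Y$-windows each contribute the normalized penalty; hence $\delta_{\text{LCS}}(S_X,S_Y)\ge \delta(A)+C\ge \min_{A\in\mathcal A_{n,m}}\delta(A)+C$. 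The reason we obtain only a sandwich, rather than an equality, is exactly that this decoding yields an arbitrary (possibly non-contiguous) alignment, whereas the upper-bound construction is fed a structural one.

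Finally, the types of $S_X,S_Y$ and the value of $C$ are read off from the block length and the fixed padding, all functions of $n,m,\mathcal T_X,\mathcal T_Y$ only, giving the second bullet; and since each of the $\mathcal{O}(n+m)$ windows is written once with $\mathcal{O}(L)=\mathcal{O}(\ell_X+\ell_Y)$ padding, both strings and $C$ are computable in $\mathcal{O}((n+m)(\ell_X+\ell_Y))$ time, giving the third. I expect the exchange argument in the lower bound to demand the most care, as it must rule out subtle matchings that split one guard of $S_Y$ across two guards of $S_X$ or trade separator length for window content.
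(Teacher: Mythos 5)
First, a point of comparison: the paper does not prove this lemma at all --- it is imported verbatim as Lemma~4.3 of \cite{BK15} and used strictly as a black box, so your sketch can only be measured against Bringmann and K\"unnemann's own argument. At the level of architecture your plan matches theirs: separator blocks interleaved with padded content windows, an encoding direction that turns a structural alignment into an explicit common subsequence, and a decoding direction that normalizes an optimal common subsequence until it respects the block structure and then reads off a (possibly non-contiguous) alignment. Your explanation of why the result is a sandwich rather than an equality, and of where the type hypothesis enters, is also correct.

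The genuine gap is exactly the step you defer to the end. Over a binary alphabet there is no fresh separator symbol: your guard $G$ is itself a string of $0$s and $1$s, so guard characters of $S_Y$ can be matched into the content windows of $S_X$ and vice versa, and taking $L \gg \max(\ell_X,\ell_Y)$ does not by itself make this unprofitable --- a single guard of $S_Y$ of length $L$ can be split across two guards of $S_X$ while absorbing content in between, or can harvest the (up to $s_X$ per window) matching symbols from many consecutive windows. Ruling this out is the entire technical content of the lemma, and it cannot be done with one homogeneous guard string: the actual construction uses blocks of \emph{both} symbols with lengths tuned against $\ell_X,\ell_Y,s_X,s_Y$ (this is where the type condition is used quantitatively, not just to make windows ``interchangeable''), together with an accounting argument bounding how many characters any guard can gain by straying into content. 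A second unaddressed tuning issue sits in your penalty claim: for the decoded alignment to satisfy $\delta_{\text{LCS}}(S_X,S_Y) \ge \delta(A) + C$, every unmatched $Y$-window must provably incur at least $\max_{i,j}\delta_{\text{LCS}}(X_i,Y_j)$ lost characters even when parts of it are matched into neighboring guards of $S_X$; asserting that the padding ``can be chosen'' to make this so is precisely the normalized-gadget construction you would still have to supply. As written, the proposal is a correct outline of the known proof with its crux omitted.
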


In Lemma~\ref{lem-1}, $\mathrm{GA}_{{X}}^{m, \mathcal{T}_{Y}}(X_1, \ldots, X_n)$ is a function of $X_1, \ldots, X_n$ parameterized by $m$ and $\mathcal{T}_{Y}$; and $\mathrm{GA}_{{Y}}^{n, \mathcal{T}_{X}}(Y_1, \ldots, Y_m)$ is a function of $Y_1, \ldots, Y_m$ parameterized by $n$ and $\mathcal{T}_{X}$. 

Intuitively, computing an optimal alignment (or an optimal structural alignment) of two sets of sequences is at least as hard as computing a longest common subsequence. Lemma~\ref{lem-1} gives a reduction from the computation of a number $s$ with $\min_{A \in \mathcal{A}_{n,m}} \delta(A) \leq s \leq \min_{A \in \mathcal{S}_{n,m}} \delta(A)$ (which can be regarded as an approximately optimal alignment) to a single LCS instance.

Whenever  Lemma~\ref{lem-1} is invoked in this paper, the two sets $\mathbf{X}$ and $\mathbf{Y}$ are chosen in such a way that $\min_{A \in \mathcal{A}_{n,m}} \delta(A) =  \min_{A \in \mathcal{S}_{n,m}} \delta(A)$.
In particular, we use Lemma~\ref{lem-1} as a black box to devise two encodings, the clique node gadget $\text{CNG}(t)$ and the clique list gadget $\text{CLG}(t)$, for any $k$-clique $t$ in a graph, in such a way that whether two $k$-cliques $t_1$ and $t_2$ form a $2k$-clique can be inferred from the value of $\dLCS (\text{CNG}(t_1), \text{CLG}(t_2))$.

\section{From Cliques to RNA Folding \label{sec.reduction}}

The goal of this section is to prove Theorem~\ref{thm-2}.
Let $G = (V,E)$ be a graph, and let $n = |V|$. We write $\mathcal{C}_k$ to denote the set of $k$-cliques in $G$. We fix $\Sigma = \{0, 1\}$. As in~\cite{ABV15}, we construct a sequence $S_G \in (\Sigma \cup \Sigma')^\ast$ such that we can decide whether $G$ has a $3k$-clique according to the value of ${\RNA}(S_G)$.
The building blocks in the construction of $S_G$ carry the same names as their analogues in~\cite{ABV15}, though they have different lower-level implementations.

The organization of this section is as follows. In Section~\ref{ss-1} we describe the two gadgets $\text{CNG}(t)$ and $\text{CLG}(t)$ for a $k$-clique $t$ based on the black box tool of Lemma~\ref{lem-1}. In Section~\ref{ss-2}, adapting the gadgets developed in Section~\ref{ss-1}, we present the definition of the binary sequence $S_G$.
In Section~\ref{ss-3}, we show that there exists an optimal RNA foldings of $S_G$ satisfying some good properties, and then we calculate the value of ${\RNA}(S_G)$ in Section~\ref{ss-4}.

\subsection{Testing $2k$-cliques via LCS \label{ss-1}}

We associate with each vertex $v \in V$  a unique ID in $\{0, 1, \ldots\ n-1\}$. Let $s_v$ be the binary encoding of the ID of $v$. Note that  $|s_v| = \lceil \log (n) \rceil$ for each vertex $v$. We define $\bar{v}$ as the binary string resulting from replacing each 0 in $s_v$ by 01 and replacing each 1 in $s_v$ by 10. Observe that (i) $\bar{v}$ is of type $\mathcal{T}_0 = (2\lceil \log (n) \rceil, \lceil \log (n) \rceil)$ for each $v \in V$, and (ii) $\dLCS (\bar{u}, \bar{v}) = 0$ if and only if $u = v$.


Let $v \in V$ be any vertex, and let $N(v) = \{{u_1}, {u_2},\ldots, {u_{|N(v)|}}\}$ be the set of neighbors of $v$. The {\em list gadget} $\LG(v)$ and the {\em node gadget} $\NNG(v)$ for the vertex $v$ are defined as follows.
\begin{align*}
\LG(v) &= \mathrm{GA}_{{X}}^{1, \mathcal{T}_0}\left(\bar{u}_1, \bar{u}_2, \ldots, \bar{u}_{|N(v)|}, 1^{\lceil \log (n) \rceil}0^{\lceil \log (n) \rceil}, \ldots, 1^{\lceil \log (n) \rceil}0^{\lceil \log (n) \rceil} \right),\\
    &\text{where the number of occurrences of $1^{\lceil \log (n) \rceil}0^{\lceil \log (n) \rceil}$ is $n - |N(v)|$.}\\
\NNG(v) &= \mathrm{GA}_{{Y}}^{n, \mathcal{T}_0}(\bar{v}).
\end{align*}

\begin{lemma}\label{lem-2}
There is a number $c_0$, depending only on $n$, such that for any two vertices $v_1, v_2 \in V$, we have $\{v_1, v_2\} \in E$  if and only if $\dLCS(\LG(v_1), \NNG(v_2)) = c_0 = \min_{v_1', v_2' \in V} \dLCS (\LG(v_1'), \NNG(v_2'))$.
\end{lemma}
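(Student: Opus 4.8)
The plan is to invoke Lemma~\ref{lem-1} directly, with $\mathbf{X}$ set to the length-$n$ list of binary strings sitting inside $\textrm{LG}(v_1)$---namely the neighbor encodings $\bar u_1,\dots,\bar u_{|N(v_1)|}$ followed by $n-|N(v_1)|$ copies of the filler $1^{\lceil\log n\rceil}0^{\lceil\log n\rceil}$---and $\mathbf{Y}=(\bar v_2)$, so that $S_X=\textrm{LG}(v_1)$ and $S_Y=\textrm{NG}(v_2)$. Every string in $\mathbf{X}$ has type $\mathcal{T}_0$ (the filler has length $2\lceil\log n\rceil$ with exactly $\lceil\log n\rceil$ ones), and the list always has exactly $n$ entries regardless of the degree of $v_1$; hence the parameters $n$, $m=1$, $\mathcal{T}_X=\mathcal{T}_Y=\mathcal{T}_0$ feeding Lemma~\ref{lem-1} depend only on $n$, and I would set $c_0$ to be the resulting integer $C$, which therefore depends only on $n$ as required.

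The key simplification is that here $m=1$, which collapses the two-sided bound of Lemma~\ref{lem-1} into an equality. A structural alignment in $\mathcal{S}_{n,1}$ is exactly a single pair $\{(i,1)\}$, of cost $\delta_{\text{LCS}}(X_i,\bar v_2)$, so $\min_{A\in\mathcal{S}_{n,1}}\delta(A)=\min_i\delta_{\text{LCS}}(X_i,\bar v_2)$. A general alignment in $\mathcal{A}_{n,1}$ is either such a single pair or the empty alignment, whose cost is the penalty $\max_i\delta_{\text{LCS}}(X_i,\bar v_2)\ge\min_i\delta_{\text{LCS}}(X_i,\bar v_2)$; hence $\min_{A\in\mathcal{A}_{n,1}}\delta(A)$ equals the same quantity. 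Lemma~\ref{lem-1} then forces
$$\delta_{\text{LCS}}\bigl(\textrm{LG}(v_1),\textrm{NG}(v_2)\bigr)=C+\min_{1\le i\le n}\delta_{\text{LCS}}(X_i,\bar v_2),$$
so it only remains to read off when the right-hand minimum vanishes.

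Finally I would carry out the two-case analysis. If $\{v_1,v_2\}\in E$ then $v_2=u_i$ for some neighbor, so $\bar v_2=\bar u_i$ is one of the $X_i$ and the minimum is $0$, giving value exactly $C=c_0$. If $\{v_1,v_2\}\notin E$ then $\bar v_2$ matches none of the list entries: it differs from every neighbor encoding $\bar u_i$ by the stated property that $\delta_{\text{LCS}}(\bar u,\bar v)=0$ iff $u=v$, and it differs from the filler because $1^{\lceil\log n\rceil}0^{\lceil\log n\rceil}$ is not a concatenation of the blocks $01,10$ once $\lceil\log n\rceil\ge 2$ and hence is not a valid encoding $\bar w$ of any vertex. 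Since all the strings involved share type $\mathcal{T}_0$, each $\delta_{\text{LCS}}(X_i,\bar v_2)$ is even, so the minimum is at least $2>0$ and the value strictly exceeds $C$. Thus $\delta_{\text{LCS}}(\textrm{LG}(v_1),\textrm{NG}(v_2))=C$ holds exactly for edges and exceeds $C$ otherwise, which both establishes the claimed equivalence with $c_0=C$ and identifies $C$ as the global minimum over all pairs (assuming $E\neq\emptyset$). The one point demanding care is this filler argument: I must ensure that no filler block can spuriously align to $\bar v_2$ at zero cost, which is precisely where the explicit choice of the non-encodable pattern $1^{\lceil\log n\rceil}0^{\lceil\log n\rceil}$ is essential---this is the main obstacle, the remainder being a direct unwinding of the $m=1$ special case of Lemma~\ref{lem-1}.
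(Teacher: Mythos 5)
Your proposal is correct and follows essentially the same route as the paper: instantiate Lemma~\ref{lem-1} with $\mathbf{X}$ the padded neighbor list and $\mathbf{Y}=(\bar v_2)$, observe that $m=1$ collapses the two-sided bound to an equality (the paper notes any non-empty alignment is structural; you additionally handle the empty alignment explicitly), and then read off the edge/non-edge dichotomy from whether $\bar v_2$ occurs in $\mathbf{X}$, using that the filler $1^{\lceil\log n\rceil}0^{\lceil\log n\rceil}$ is not a valid vertex encoding. Your extra observations (evenness of $\delta_{\text{LCS}}$ for equal types, the $\lceil\log n\rceil\ge 2$ caveat) are harmless refinements of the same argument.
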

\begin{proof}
Let  $v_1, v_2 \in V$.
Let $N(v_1) = \{{u_1, u_2, \ldots, u_{|N(v_1)|}}\}$. Define the two sequences of binary strings $\mathbf{X}$ and $\mathbf{Y}$ as follows.
\begin{align*}
\mathbf{X} &= \left(\bar{u}_1, \bar{u}_2, \ldots, \bar{u}_{|N(v_1)|}, 1^{\lceil \log (n) \rceil}0^{\lceil \log (n) \rceil}, \ldots, 1^{\lceil \log (n) \rceil}0^{\lceil \log (n) \rceil}\right),\\
    &\text{where  the number of occurrences of $1^{\lceil \log (n) \rceil}0^{\lceil \log (n) \rceil}$ is $n - |N(v_1)|$.}\\
\mathbf{Y} &= (\bar{v}_2).
\end{align*}
Note that $|\mathbf{X}| = n$ and $|\mathbf{Y}|=1$; we have $\LG(v) = \mathrm{GA}_{{X}}^{1, \mathcal{T}_0}(\mathbf{X})$ and
$\NNG(v) = \mathrm{GA}_{{Y}}^{n, \mathcal{T}_0}(\mathbf{Y})$.

In view of Lemma~\ref{lem-1}, we have $\min_{A \in \mathcal{A}_{n,1}} \delta(A) \leq \dLCS(\LG(v_1)$, $\NNG(v_2)) - C \leq \min_{A \in \mathcal{S}_{n,1}} \delta(A)$, for some number $C$ whose value depends on $|\mathbf{X}|, |\mathbf{Y}|$, and $\mathcal{T}_0$. As these parameters depend solely on $n$, the number $C$ also depends only on $n$ (i.e., the choice of the two vertices $v_1$ and $v_2$ does not affect $C$).
We claim that setting $c_0 = C$ suffices to prove the lemma.

Since $|\mathbf{Y}|=1$, any non-empty alignment between $\mathbf{X}$ and $\mathbf{Y}$ is structural. This implies that $\dLCS(\LG(v_1)$, $\NNG(v_2)) - C = \min_{A \in \mathcal{A}_{n,1}} \delta(A) = \min_{A \in \mathcal{S}_{n,1}} \delta(A)$.

For the case $\{v_1, v_2\} \in E$, since $\bar{v}_2$ is contained in $\mathbf{X}$, clearly $\min_{A \in \mathcal{S}_{n,m}} \delta(A) = 0$. For the case $\{v_1, v_2\} \not\in E$, $\bar{v}_2$ does not appear in $\mathbf{X}$, so $\min_{A \in \mathcal{S}_{n,m}} \delta(A) > 0$. Note that $1^{\lceil \log (n) \rceil}0^{\lceil \log (n) \rceil} \neq \bar{v}$, for each $v \in V$.
As a result, $\{v_1, v_2\} \in E$  if and only if $\dLCS(\LG(v_1)$, $\NNG(v_2)) = C = \min_{v_1', v_2' \in V} \dLCS (\LG(v_1'), \NNG(v_2'))$, as desired.
\end{proof}

In view of Lemma~\ref{lem-1}, the type of list gadgets and the type of node gadgets depends only on $n$; that is, they are independent of the underlying vertex $v$. Let $\mathcal{T}_X$ be the type of the list gadgets, and let $\mathcal{T}_Y$ be the type of the node gadgets.  For each $k$-clique $t = \{u_1, u_2, \ldots, u_k\}$, define the {\em clique node gadget} $\CNG(t)$ and the {\em clique list gadget} $\CLG(t)$ as follows.
\begin{align*}
\CLG(t)
&= \mathrm{GA}_{{X}}^{k^2, \mathcal{T}_Y}\left(\LG(u_1), \ldots, \LG(u_1), \LG(u_2), \ldots, \LG(u_2), \ldots, \LG(u_k), \ldots, \LG(u_k)\right),\\
&\text{where the number of occurrences of each $\LG(u_i)$ is $k$.}\\
\CNG(t) &=
\mathrm{GA}_{{Y}}^{k^2, \mathcal{T}_X}\big(
\NNG(u_1), \NNG(u_2), \ldots, \NNG(u_k),\\
&\hspace{2.1cm}\NNG(u_1), \NNG(u_2), \ldots, \NNG(u_k),\\
&\hspace{2.1cm}\ldots,\\
&\hspace{2.1cm}\NNG(u_1), \NNG(u_2), \ldots, \NNG(u_k)\big),\\
&\text{where the number of occurrences of each $\NNG(u_1), \NNG(u_2), \ldots, \NNG(u_k)$ is $k$.}
\end{align*}

\begin{lemma}\label{lem-3}
There is a number $c_1$, depending only on $n$ and $k$, such that for any two $k$-cliques $t_1, t_2 \in \mathcal{C}_k$, the set of vertices $t_1 \cup t_2$ form a $2k$-clique if and only if $\dLCS (\CLG(t_1), \CNG(t_2)) = c_1 = \min_{t_1', t_2' \in \mathcal{C}_k} \dLCS (\CLG(t_1'), \CNG(t_2'))$.
\end{lemma}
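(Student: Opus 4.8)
The plan is to invoke Lemma~\ref{lem-1} one more time, now on the \emph{outer} construction that assembles $\textrm{CLG}(t_1)$ and $\textrm{CNG}(t_2)$. Writing $t_1 = \{u_1,\ldots,u_k\}$ and $t_2 = \{w_1,\ldots,w_k\}$, I would take $\mathbf{X}$ to be the length-$k^2$ list $(\textrm{LG}(u_1),\ldots,\textrm{LG}(u_k))$ with each $\textrm{LG}(u_a)$ repeated $k$ times in a block, and $\mathbf{Y}$ to be the length-$k^2$ list obtained by repeating $(\textrm{NG}(w_1),\ldots,\textrm{NG}(w_k))$ exactly $k$ times. All entries of $\mathbf{X}$ share the type $\mathcal{T}_X$ and all entries of $\mathbf{Y}$ share the type $\mathcal{T}_Y$, and $n' = m' = k^2$, so the hypotheses of Lemma~\ref{lem-1} are met. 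It produces a constant $C'$ depending only on $k^2, \mathcal{T}_X, \mathcal{T}_Y$, hence only on $n,k$, with
$$ \min_{A \in \mathcal{A}_{k^2,k^2}} \delta(A) \ \leq\ \delta_{\text{LCS}}(\textrm{CLG}(t_1), \textrm{CNG}(t_2)) - C' \ \leq\ \min_{A \in \mathcal{S}_{k^2,k^2}} \delta(A), $$
where the entrywise terms of the cost $\delta$ are exactly the quantities $\delta_{\text{LCS}}(\textrm{LG}(u_a),\textrm{NG}(w_b))$ governed by Lemma~\ref{lem-2}. I then set $c_1 := k^2 c_0 + C'$.

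Next I would pin down the right end of the sandwich. Since $n' = m' = k^2$, the definition of a structural alignment forces $i_1 = 1$ and $i_{k^2} = k^2$, so the \emph{only} structural alignment is the identity $\{(i,i)\}_{i=1}^{k^2}$, giving $\min_{\mathcal{S}}\delta = \delta(\mathrm{id})$. The key combinatorial point is that the block grouping on the $\mathbf{X}$ side together with the cyclic repetition on the $\mathbf{Y}$ side makes the diagonal list every ordered pair exactly once: writing $i = (a-1)k + b$ shows that the identity pairs $\textrm{LG}(u_a)$ with $\textrm{NG}(w_b)$ precisely once for each $(a,b)\in\{1,\ldots,k\}^2$, so
$$ \delta(\mathrm{id}) = \sum_{a=1}^{k}\sum_{b=1}^{k} \delta_{\text{LCS}}(\textrm{LG}(u_a), \textrm{NG}(w_b)). $$
By Lemma~\ref{lem-2} each summand is at least $c_0$, with equality iff $\{u_a,w_b\}\in E$. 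Because no vertex is self-adjacent, ``every cross pair is an edge'' forces $t_1\cap t_2=\emptyset$ and, together with $t_1,t_2\in\mathcal{C}_k$, is equivalent to $t_1\cup t_2$ being a $2k$-clique. Hence $t_1\cup t_2$ is a $2k$-clique iff $\delta(\mathrm{id}) = k^2 c_0$, and otherwise $\delta(\mathrm{id}) \geq k^2 c_0 + 1$ (all the quantities being integers).

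The step I expect to be the main obstacle is controlling the \emph{left} end $\min_{\mathcal{A}}\delta$, so that the sandwich determines $\delta_{\text{LCS}}$ exactly rather than leaving a gap. The tool I would use is the uniform bound $\delta(A)\geq k^2 c_0$ for \emph{every} $A\in\mathcal{A}_{k^2,k^2}$: each aligned term is $\geq c_0$, and each of the $m-|A|$ unaligned elements is charged $\max_{a,b}\delta_{\text{LCS}}(\textrm{LG}(u_a),\textrm{NG}(w_b))\geq c_0$. When $t_1\cup t_2$ is a $2k$-clique this maximum equals $c_0$ and $\delta(\mathrm{id}) = k^2 c_0$, so both ends of the sandwich collapse to $k^2 c_0$ and $\delta_{\text{LCS}} = c_1$. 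When $t_1\cup t_2$ is not a $2k$-clique, some cross pair is a non-edge, so the penalty $\max_{a,b}\delta_{\text{LCS}} > c_0$; then any $A$ with $|A|<k^2$ already satisfies $\delta(A) > k^2 c_0$, while the unique $A$ with $|A| = k^2$ is the identity, which also exceeds $k^2 c_0$. Thus $\min_{\mathcal{A}}\delta > k^2 c_0$, which forces $\delta_{\text{LCS}} - C' > k^2 c_0$, i.e.\ $\delta_{\text{LCS}} > c_1$. Combining the two cases yields $\delta_{\text{LCS}}(\textrm{CLG}(t_1),\textrm{CNG}(t_2)) = c_1$ iff $t_1\cup t_2$ is a $2k$-clique; and since the sandwich gives $\delta_{\text{LCS}} - C' \geq \min_{\mathcal{A}}\delta \geq k^2 c_0$ for every pair, $c_1$ is a universal lower bound attained precisely on $2k$-clique pairs, which is exactly the asserted minimum.
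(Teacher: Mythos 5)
Your proposal is correct and follows essentially the same route as the paper's proof: the same choice of $\mathbf{X}$ and $\mathbf{Y}$, the same invocation of Lemma~\ref{lem-1} to get the sandwich with a constant $C'$ depending only on $n,k$, the same observation that the unique full (hence the unique structural) alignment is the identity enumerating every pair $(u_a, w_b)$ exactly once, and the same case split on whether $\max_{a,b}\delta_{\text{LCS}}$ equals or exceeds $c_0$ to collapse both ends of the sandwich. Your explicit remark that ``every cross pair is an edge'' also rules out $t_1 \cap t_2 \neq \emptyset$ is a small detail the paper leaves implicit, but it does not change the argument.
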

\begin{proof}
Let $t_1 = \{u_1, u_2, \ldots, u_k\}$, and let $t_2 = \{v_1, v_2, \ldots, v_k\}$ be two $k$-cliques.
Define the two sequences of binary strings $\mathbf{X}$ and $\mathbf{Y}$ as follows.
\begin{align*}
\mathbf{X}
&= \left(\LG(u_1), \ldots, \LG(u_1), \LG(u_2), \ldots, \LG(u_2), \ldots, \LG(u_k), \ldots, \LG(u_k)\right),\\
&\text{where the number of occurrences of each $\LG(u_i)$ is $k$.}\\
\mathbf{Y} &=
\big(
\NNG(v_1), \NNG(v_2), \ldots, \NNG(v_k),\\
&\hspace{0.7cm}\NNG(v_1), \NNG(v_2), \ldots, \NNG(v_k),\\
&\hspace{0.7cm}\ldots,\\
&\hspace{0.7cm}\NNG(v_1), \NNG(v_2), \ldots, \NNG(v_k)\big),\\
&\text{where the number of occurrences of each $\NNG(v_1), \NNG(v_2), \ldots, \NNG(v_k)$ is $k$.}
\end{align*}
Note that $|\mathbf{X}| = |\mathbf{Y}| = k^2$; we have
$\CLG(t) = \mathrm{GA}_{{X}}^{k^2, \mathcal{T}_Y}(\mathbf{X})$ and
$\CNG(t) = \mathrm{GA}_{{Y}}^{k^2, \mathcal{T}_X}(\mathbf{Y})$.
In view of Lemma~\ref{lem-2}, $\min_{{w_1}, {w_2} \in V} \dLCS (\LG({w_1}), \NNG({w_2})) = c_0$, and so $\min_{A \in \mathcal{A}_{k^2,k^2}} \delta(A) \geq k^2 c_0$.

In view of Lemma~\ref{lem-1}, $\min_{A \in \mathcal{A}_{k^2,k^2}} \delta(A) \leq \dLCS(\CLG(t_1)$, $\CNG(t_2)) - C \leq \min_{A \in \mathcal{S}_{k^2,k^2}} \delta(A)$, for some number $C$ whose value depends on $|\mathbf{X}|$, $|\mathbf{Y}|$, $\mathcal{T}_X$, and $\mathcal{T}_Y$. As these parameters depend solely on $n, k$, the number $C$ only depends on $n, k$ (i.e., the choice of the two $k$-cliques $t_1$ and $t_2$ does not affect $C$). We claim that setting $c_1 = C + k^2 c_0$ suffices to prove the lemma.

Consider the case $t_1 \cup t_2$ form a $2k$-clique.
That is, each $u_i \in t_1$ is adjacent to each $v_j \in t_2$.
 Thus, by Lemma~\ref{lem-2}, we have $\dLCS(X_i, Y_j) = c_0$, for all $i, j$.
Recall that $X_i$ denotes the $i$th string in $\mathbf{X}$, and $Y_j$ denotes the $j$th string in $\mathbf{Y}$.
The structural alignment $A^\ast = \{(i, i) \ | \ i \in \{1, 2, \ldots, k^2\}\} \in \mathcal{S}_{k^2,k^2}$ achieves the minimum possible cost $k^2 c_0$.
Thus, for the case $t_1 \cup t_2$ form a $2k$-clique, we have
$$\dLCS(\CLG(t_1), \CNG(t_2)) - C  =
\min_{A \in \mathcal{A}_{k^2,k^2}} \delta(A) =
\min_{A \in \mathcal{S}_{k^2,k^2}} \delta(A) =
  k^2 c_0,$$ and so
$\dLCS(\CLG(t_1), \CNG(t_2)) = C + k^2 c_0 = c_1$, as desired.

Next, consider the case $t_1 \cup t_2$ does not form a $2k$-clique.
That is, there exist two vertices $u_{i'} \in t_1$  and $v_{j'} \in t_2$ that are not adjacent.
by Lemma~\ref{lem-2}, we have $\dLCS(\LG(u_{i'}), \NNG(u_{j'})) > c_0$.
We claim that $\min_{A \in \mathcal{A}_{k^2,k^2}} \delta(A) > k^2 c_0$.
Suppose that there exists an alignment $A' \in \mathcal{A}_{k^2,k^2}$ such that  $\delta(A') = k^2 c_0$.
Then all $k^2$ strings in $\mathbf{Y}$ must be aligned, as each unaligned string in $\mathbf{Y}$ contributes a cost that is higher than $c_0$.
Thus, we must have $A' = \{(i, i) \ | \ i \in \{1, 2, \ldots, k^2\}\}$.
In order to have $\delta(A') = k^2 c_0$, we must have $\dLCS(X_i, Y_i) = c_0$, for all $i \in \{1, 2, \ldots, k^2\}$.
However,  $X_{j'+k(i'-1)} =  \LG(u_{i'})$ and $Y_{j'+k(i'-1)} =  \NNG(v_{j'})$, and so $\dLCS(X_{j'+k(i'-1)}, Y_{j'+k(i'-1)}) > c_0$, a contradiction.

Since $\min_{A \in \mathcal{A}_{k^2,k^2}} \delta(A) > k^2 c_0$ for the case $t_1 \cup t_2$ is not a $2k$-clique, we have
$$\dLCS (\CLG({t_1}), \CNG({t_2})) - C  \geq \min_{A \in \mathcal{A}_{k^2,k^2}} \delta(A) > k^2 c_0,$$ and so
$\dLCS (\CLG({t_1}), \CNG({t_2})) > C + k^2 c_0 = c_1$, as desired.
\end{proof}

The following lemma is a simple consequence of Lemma~\ref{lem-1}.

\begin{lemma}\label{lem-length}
There exist four integers $\ell_{\CNG, 0}$, $\ell_{\CNG, 1}$, $\ell_{\CLG, 0}$, and $\ell_{\CLG, 1} \in {O}(k^2 n \log n)$ such that for any $t \in \mathcal{C}_k$, and for any $b \in \{0,1\}$, we have (i) $\ell_{\CNG, b}$ is the number of occurrences of $b$ in $\CNG(t)$, and (ii)  $\ell_{\CLG, b}$ is the number of occurrences of $b$ in $\CLG(t)$.
\end{lemma}

\begin{proof}
As a consequence of Lemma~\ref{lem-1}, all $\CNG(t)$ have the same type, and all $\CLG(t)$ have the same type. Therefore, the existence of these four integers is guaranteed. We show that these numbers are ${O}(k^2 n \log n)$.
In view of Lemma~\ref{lem-1}, for each $v \in V$, both $\LG(v)$ and  $\NNG(v)$ have length at most $(n+1) \cdot (2 \lceil \log n \rceil +　2 \lceil \log n \rceil ) = {O} (n \log n)$. Applying Lemma~\ref{lem-1} again, the length of both  $\CNG(t)$ and $\CLG(t)$  for each $t \in \mathcal{C}_k$ is $(k^2 + k^2)({O} (n \log n) + {O} (n \log n)) = {O}(k^2 n \log n)$.
\end{proof}


\subsection{The RNA sequence $S_G$  \label{ss-2}}

Based on the parameters in Lemma~\ref{lem-length}, we define $\ell_0 = \ell_{\CNG, 0} + \ell_{\CNG, 1} +\ell_{\CLG, 0} +\ell_{\CLG, 1} = {O}(k^2 n \log n)$; for each $i \in \{1,2,3\}$, we set $\ell_i = 100 \ell_{i-1}$; and $\ell_4 = 100 |\mathcal{C}_k| \ell_3
= {n \choose k} {O}(k^2 n \log n)
= {O}(n^{k+1} \log n / (k-2)!)$.
The RNA sequence $S_G$ is then defined as follows.
$$S_G = 0^{\ell_4} \left[{0'}^{\ell_3} \underset{{t \in \mathcal{C}_k}}{\bigcirc}  \left( \CG_\alpha(t){0'}^{\ell_3} \right) \right]
0^{\ell_4} \left[{0'}^{\ell_3} \underset{{t \in \mathcal{C}_k}}{\bigcirc}  \left(  \CG_\beta(t) {0'}^{\ell_3} \right)\right]
0^{\ell_4} \left[{0'}^{\ell_3} \underset{{t \in \mathcal{C}_k}}{\bigcirc}  \left(  \CG_\gamma(t) {0'}^{\ell_3} \right)\right],$$
where
\begin{align*}
\CG_\alpha(t) &= {1'}^{2 \ell_2} p({\CLG(t)}^R) {0'}^{\ell_1}  {1}^{\ell_2} {0}^{\ell_1} \CNG(t) {1}^{\ell_2},\\
\CG_\beta(t) &= {1'}^{\ell_2} p({\CLG(t)}^R) {0'}^{\ell_1}  {1'}^{2\ell_2} {0'}^{\ell_1} p(\CNG(t)) {1'}^{\ell_2},\\
\CG_\gamma(t) &= {1}^{\ell_2} {\CLG(t)}^R {0}^{\ell_1}  {1}^{\ell_2} {0}^{\ell_1} \CNG(t) {1}^{2\ell_2}.
\end{align*}
For each $t \in \mathcal{C}_k$, and for each $x \in \{\alpha, \beta, \gamma\}$, the string $\CG_x(t)$ is called a {\em clique gadget}.
Note that $\CG_\alpha(t) \in {(\Sigma \cup \Sigma')}^\ast$, $\CG_\beta(t) \in {\Sigma'}^\ast$, and $\CG_\gamma(t) \in {\Sigma}^\ast$.
The length of this RNA sequence is $|S_G| = {O}(|\mathcal{C}_k| \ell_0) = {O}(n^{k+1} \log n / (k-2)!)$.
Before proceeding further, we present a simple lower bound on ${\RNA}(S_G)$ by constructing an RNA folding of $S_G$ as follows.

\begin{description}
\item[Step~1: Matching the Letters in ${0'}^{\ell_3}$.]
Given some pairings between the letters in  ${0'}^{\ell_3}$ and the letters in  ${0}^{\ell_4}$, a clique gadget $C$ is said to be {\em blocked} if all letters within $C$ can only be paired up with the letters within the same clique gadget or the letters in ${0}^{\ell_4}$.
In particular, a clique gadget that is blocked is unable to participate in the RNA folding with other clique gadgets.

We link all $0'$ in all ${0'}^{\ell_3}$ to some $0$ in some ${0}^{\ell_4}$ in such a way that for each $x \in \{\alpha, \beta, \gamma\}$, there is exactly one clique gadget $\CG_\alpha(t_x)$ that is not blocked, among all clique gadgets $\{\CG_x(t) \ | \ t \in \mathcal{C}_k\}$.
The three clique gadgets $\CG_\alpha(t_\alpha)$, $\CG_\beta(t_\beta)$, and $\CG_\gamma(t_\gamma)$ that are not blocked are called the {\em selected} clique gadgets. See Fig.~\ref{fig-1}.
This step makes $3(|\mathcal{C}_k|+1) \ell_3$ pairs.

\item[Step~2: Matching the Letters in a Blocked Clique Gadget.]
We pair up the letters in each blocked clique gadget as follows.
For each blocked $\CG_\alpha(t)$, we pair up as many $\{1', 1\}$ pairs as possible within the clique gadget;
this gives us  $2 \ell_2 + \min(\ell_{\CLG, 1}, \ell_{\CNG, 1})$ pairs.
For each blocked $\CG_\beta(t)$, we pair up all $0'$ to some $0$ in ${0}^{\ell_4}$;
this gives us  $2 \ell_1 + \ell_{\CLG, 0} + \ell_{\CNG, 0}$ pairs.
For each blocked $\CG_\gamma(t)$, no pairing can be made. See Fig.~\ref{fig-2}.
In this step,  $(|\mathcal{C}_k|-1)( 2 \ell_1 + 2 \ell_2
 + \min (\ell_{\CLG, 1}, \ell_{\CNG, 1})
 + \ell_{\CLG, 0} + \ell_{\CNG, 0} )$ pairs are produced.

\item[Step~3: Matching the Letters in the Three Selected Clique Gadget.]
For the three clique gadgets  $\CG_\alpha(t_\alpha)$, $\CG_\beta(t_\beta)$, and $\CG_\gamma(t_\gamma)$ that are not blocked,
We pair up the letters in these clique gadgets in such a way that
\begin{itemize}
\item all letters in ${1'}^{2 \ell_2}$, ${1}^{2 \ell_2}$, ${1'}^{\ell_2}$, ${1}^{\ell_2}$, ${0'}^{\ell_1}$, and ${0}^{\ell_1}$ are matched, and
    \item for each $(x,y) \in \{(\alpha,\beta),(\alpha,\gamma),(\beta,\gamma)\}$,  $\frac{1}{2} (  \ell_0  - \dLCS (\CLG(t_x), \CNG(t_y)) )$ number of pairs are made between the two gadgets $\CLG(t_x)$ and $\CNG(t_y)$.
\end{itemize}
See Fig.~\ref{fig-3}. Recall that $\frac{1}{2} (  \ell_0  - \dLCS (\CLG(t_x), \CNG(t_y)) )$ is the length of the LCS between $\CLG(t_x)$ and $\CNG(t_y)$.
The total number of pairs made in this step is
\begin{align*}
&6 \ell_2 + 3 \ell_1
+   \frac{1}{2} \left(  \ell_0  - \dLCS (\CLG(t_\alpha), \CNG(t_\beta)) \right) \\
&+    \frac{1}{2} \left(  \ell_0  - \dLCS (\CLG(t_\alpha), \CNG(t_\gamma)) \right)
+    \frac{1}{2}  \left( \ell_0  - \dLCS (\CLG(t_\beta), \CNG(t_\gamma)) \right).
    \end{align*}
\end{description}

\begin{figure}
\begin{center}
    \includegraphics[width=1\textwidth]{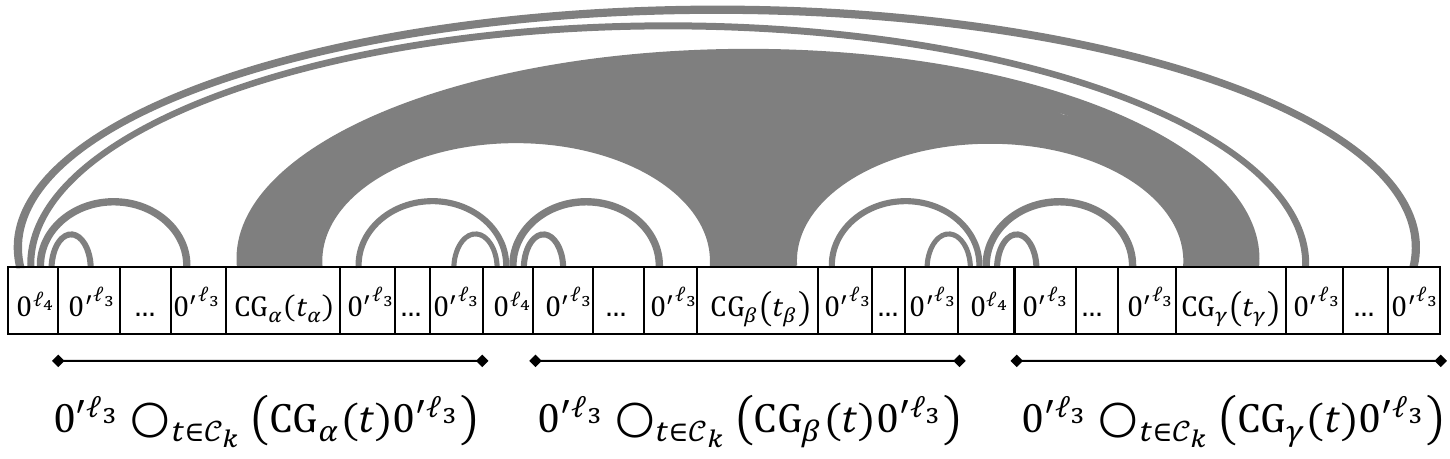}
    \end{center}
    \caption{The three selected clique gadgets and the matchings between ${0'}^{\ell_3}$ and ${0}^{\ell_4}$.
    } \label{fig-1}
\end{figure}

\begin{figure}
\begin{center}
    \includegraphics[width=1\textwidth]{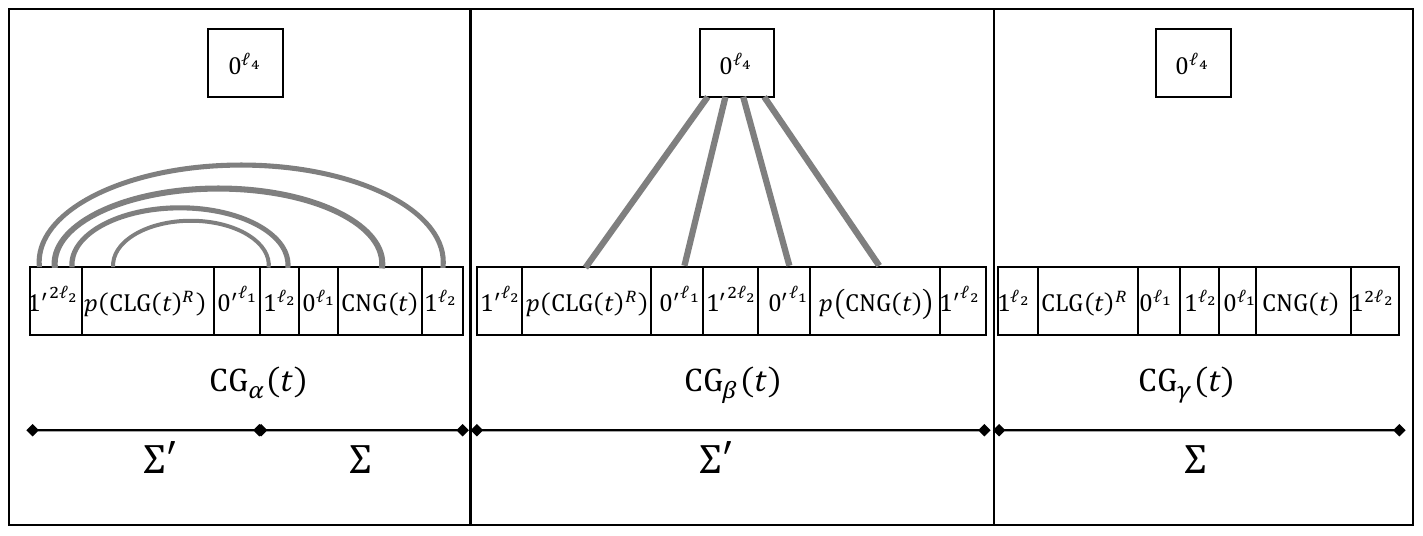}
    \end{center}
    \caption{The matchings between a blocked clique gadget and ${0}^{\ell_4}$.
    } \label{fig-2}
\end{figure}

\begin{figure}
\begin{center}
    \includegraphics[width=1\textwidth]{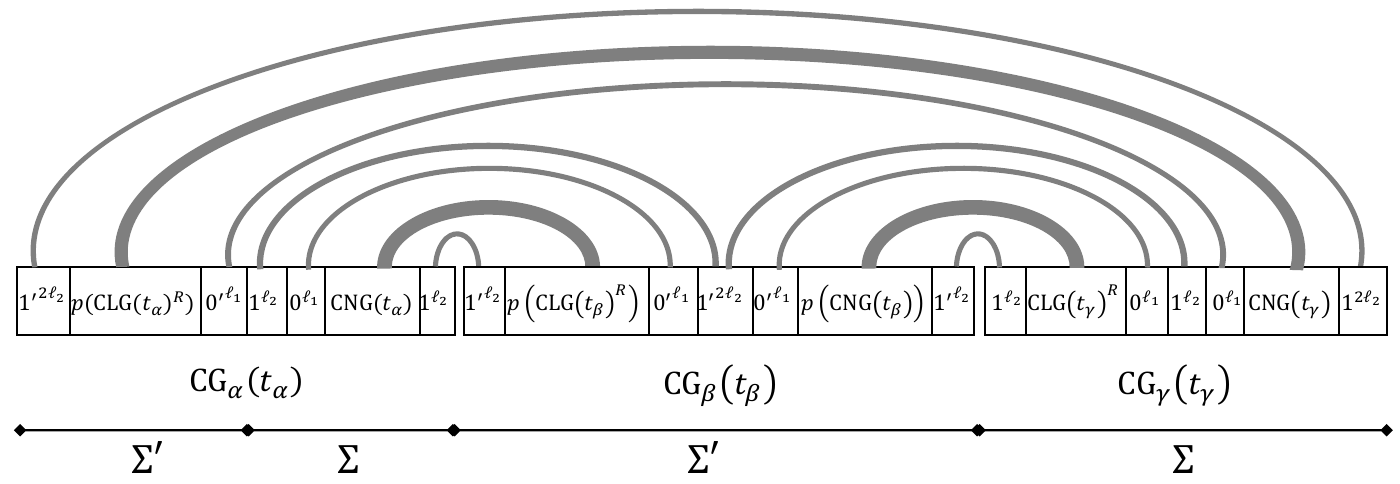}
    \end{center}
    \caption{The matchings within the three selected clique gadgets.
    } \label{fig-3}
\end{figure}

In view of the above discussion, we define the following two numbers.
\begin{align*}
m_1 &= 3(|\mathcal{C}_k| +1) \ell_3 + (|\mathcal{C}_k|-1) ( 2 \ell_1 + 2 \ell_2
 + \min (\ell_{\CLG, 1}, \ell_{\CNG, 1})
 + \ell_{\CLG, 0} + \ell_{\CNG, 0} ), \\
m_2 &= 6 \ell_2 + 3 \ell_1 + \frac{3}{2} \ell_0 - \min_{t_\alpha, t_\beta, t_\gamma \in \mathcal{C}_k} Q(t_\alpha, t_\beta, t_\gamma), \text{ where $Q(t_\alpha, t_\beta, t_\gamma)$ is defined as}\\
&\frac{1}{2}\left(
 \dLCS (\CLG(t_\alpha), \CNG(t_\beta)) +
 \dLCS (\CLG(t_\alpha), \CNG(t_\gamma)) +
 \dLCS (\CLG(t_\beta), \CNG(t_\gamma))
 \right).
\end{align*}

The RNA folding given in the above construction has cardinality $m_1 + 6 \ell_2 + 3 \ell_1 + \frac{3}{2} \ell_0 -  Q(t_\alpha, t_\beta, t_\gamma)$, and so $m_1 + m_2$ is a lower bound of ${\RNA}(S_G)$.

\begin{lemma}\label{lem-5}
${\RNA}(S_G) \geq m_1 + m_2$.
\end{lemma}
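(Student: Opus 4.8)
\textbf{Proof proposal for Lemma~\ref{lem-5}.}

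The plan is to prove the lower bound $\text{RNA}(S_G) \geq m_1 + m_2$ by explicitly exhibiting a single RNA folding of $S_G$ whose cardinality is exactly $m_1 + m_2$; since $\text{RNA}(S_G)$ is the maximum over all valid foldings, producing one folding of this size immediately yields the inequality. The entire content of the lemma is therefore constructive: I must select three distinguished clique gadgets $\textrm{CG}_\alpha(t_\alpha)$, $\textrm{CG}_\beta(t_\beta)$, $\textrm{CG}_\gamma(t_\gamma)$ (choosing the cliques $t_\alpha, t_\beta, t_\gamma$ to be the minimizers of the sum of the three $\delta_{\text{LCS}}$ terms appearing in $m_2$), declare every other clique gadget ``blocked,'' and then account for the pairs contributed by each of the three pieces of the informal discussion, verifying that the total is $m_1 + m_2$ and that the whole pairing is crossing-free and respects the complementarity constraints.

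The construction proceeds in three stages, mirroring the three bullet points preceding the lemma. First I would build the ``blocking'' layer: I pair up all the $0'$ symbols in the various ${0'}^{\ell_3}$ blocks with $0$ symbols drawn from the three ${0}^{\ell_4}$ blocks, arranging the nesting so that every clique gadget except the three selected ones is enclosed (blocked) by a matched pair of such walls. Because $\ell_4 = 100|\mathcal{C}_k|\ell_3$ is large enough to absorb all the ${0'}^{\ell_3}$ blocks in each of the three segments, there is room to realize this; this layer contributes $3(|\mathcal{C}_k|+1)\ell_3$ pairs and must be drawn non-crossing (a nested, parenthesis-like matching). Second, for each blocked gadget I add the internal pairs described in the second bullet: a blocked $\textrm{CG}_\alpha(t)$ contributes $\ell_2 + \min(\ell_{\textrm{CLG},1}, \ell_{\textrm{CNG},1})$ (matching $\{1',1\}$ pairs, exploiting that $\ell_2 \gg \ell_1,\ell_0$), a blocked $\textrm{CG}_\beta(t)$ contributes $2\ell_1 + \ell_{\textrm{CLG},0} + \ell_{\textrm{CNG},0}$ (matching its $0'$ symbols outward to a ${0}^{\ell_4}$ wall), and a blocked $\textrm{CG}_\gamma(t)$ contributes nothing; summed over the $|\mathcal{C}_k|-1$ blocked gadgets of each type this is exactly the second summand of $m_1$. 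Third, for the three selected gadgets I construct the folding indicated in the third bullet, which uses up all of their ${1'}^{2\ell_2}, {1}^{2\ell_2}, {1'}^{\ell_2}, {1}^{\ell_2}, {0'}^{\ell_1}, {0}^{\ell_1}$ blocks and, crucially, realizes three longest common subsequences—between $\textrm{CLG}(t_\alpha)$ and $\textrm{CNG}(t_\beta)$, between $\textrm{CLG}(t_\alpha)$ and $\textrm{CNG}(t_\gamma)$, and between $\textrm{CLG}(t_\beta)$ and $\textrm{CNG}(t_\gamma)$—as crossing-free pairings via the identity $\text{RNA}(X \circ p(Y^R)) = (|X|+|Y|-\delta_{\text{LCS}}(X,Y))/2$ recalled in the Preliminaries. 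Choosing $t_\alpha,t_\beta,t_\gamma$ to minimize the sum of these three $\delta_{\text{LCS}}$ values makes this contribution equal to $m_2$.

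The main obstacle is not the pair-counting, which is routine bookkeeping, but verifying that all three layers can be assembled \emph{simultaneously} into one globally crossing-free folding respecting the complementarity rules. The delicate points are: (i) confirming that the three selected gadgets, which sit in the $\alpha$-, $\beta$-, and $\gamma$-segments respectively, admit a consistent non-crossing nesting in which the reversed-and-primed $\textrm{CLG}$ parts correctly align with the matching $\textrm{CNG}$ parts across segment boundaries—this is exactly what the placement of $p(\textrm{CLG}(t)^R)$, $\textrm{CNG}(t)$, $p(\textrm{CNG}(t))$, and ${\textrm{CLG}(t)}^R$ in the three definitions $\textrm{CG}_\alpha, \textrm{CG}_\beta, \textrm{CG}_\gamma$ is engineered to permit; (ii) checking that the outward $0'$-to-$0^{\ell_4}$ matchings of the blocked $\beta$-gadgets and of the blocking layer do not cross each other or the selected-gadget pairings, which follows from the left-to-right ordering of the segments and the nested structure of the walls; and (iii) confirming the complementarity of each declared pair, e.g.\ that the $\{0',0\}$ wall pairs and the $\{1',1\}$ internal pairs are genuinely of the allowed forms in $(\Sigma \cup \Sigma')$ with the primed symbol playing the role of the complement. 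Since the lemma only asserts a lower bound, I need exhibit just one such folding; once its crossing-freeness and validity are checked, summing the three contributions gives $m_1 + m_2$ and the inequality follows immediately.
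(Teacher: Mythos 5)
Your proposal is correct and follows essentially the same route as the paper: the paper's own ``proof'' of this lemma is precisely the three-layer explicit folding described in the discussion preceding it (the $0'^{\ell_3}$-to-$0^{\ell_4}$ blocking layer, the internal pairings of blocked gadgets, and the LCS-realizing pairings among the three selected gadgets), with crossing-freeness justified only by reference to the figures, exactly as you outline. One small bookkeeping note: a blocked $\textrm{CG}_\alpha(t)$ actually contributes $2\ell_2 + \min(\ell_{\textrm{CLG},1},\ell_{\textrm{CNG},1})$ pairs (as in Lemma~\ref{lem-4}(1) and in the definition of $m_1$), not $\ell_2 + \min(\ell_{\textrm{CLG},1},\ell_{\textrm{CNG},1})$ as you (and, admittedly, the paper's informal bullet) write.
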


We will ultimately show that ${\RNA}(S_G) = m_1 + m_2$.
Due to Lemma~\ref{lem-3}, if $t_\alpha \cup t_\beta \cup t_\gamma$ form a $3k$-clique, then $Q(t_\alpha, t_\beta, t_\gamma) = 3 c_1 / 2$; otherwise $Q(t_\alpha, t_\beta, t_\gamma) > 3 c_1 / 2$. Therefore, the number ${\RNA}(S_G) = m_1 + m_2$ offers sufficient information to decide whether $G$ has a $3k$-clique.
The following auxiliary lemma will be useful in subsequent discussion.

\begin{lemma}\label{lem-4}
The following statements are true for any two cliques $t,t' \in \mathcal{C}_k$.
\begin{enumerate}
\item $\RNA( 0^{\ell_4} \CG_\alpha(t) ) = 2 \ell_2 + \min (\ell_{\CLG, 1}, \ell_{\CNG, 1})$
\item $\RNA( 0^{\ell_4} \CG_\beta(t) ) = 2 \ell_1 + \ell_{\CLG, 0} + \ell_{\CNG, 0}$
\item $\RNA( 0^{\ell_4} \CG_\gamma(t) ) = 0$
\item $\RNA( 0^{\ell_4} \CG_\alpha(t) 0^{\ell_4} \CG_\beta(t') ) \leq 3.1 \ell_1 + 2 \ell_2$
\item $\RNA( 0^{\ell_4} \CG_\alpha(t) 0^{\ell_4} \CG_\gamma(t') ) \leq 1.1 \ell_1 + 2 \ell_2$
\item $\RNA( 0^{\ell_4} \CG_\beta(t) 0^{\ell_4} \CG_\gamma(t') ) \leq 1.1 \ell_1 + 4 \ell_2$
\end{enumerate}
\end{lemma}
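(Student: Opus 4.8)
The plan is to prove every bound by combining a crude symbol count with the magnitude hierarchy $\ell_2 = 100\ell_1 = 10^4\ell_0$ together with $\ell_{\textrm{CLG},b},\ell_{\textrm{CNG},b}\le \ell_0$, which makes all lower-order contributions negligible against a single factor of $\ell_2$ (or $\ell_1$). First I would record two facts, valid because each pair consumes exactly one primed and one unprimed symbol: the number $p_1$ of $\{1,1'\}$ pairs is at most $\min(\#1,\#1')$, and the number $p_0$ of $\{0,0'\}$ pairs is at most $\min(\#0,\#0')$, where the four counts are read off from the block descriptions of $\textrm{CG}_\alpha,\textrm{CG}_\beta,\textrm{CG}_\gamma$. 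For (3) every symbol of $0^{\ell_4}\textrm{CG}_\gamma(t)$ is unprimed, so no pair exists and $\text{RNA}=0$; for (2) the only unprimed symbols are the $0$'s of $0^{\ell_4}$, so the only pairs are $\{0,0'\}$ and the count $\#0'=2\ell_1+\ell_{\textrm{CLG},0}+\ell_{\textrm{CNG},0}$ is achievable by nesting every $0'$ inside the $0^{\ell_4}$ block. For (4) and (5) the second gadget is monochromatic in its priming ($\textrm{CG}_\beta\in{\Sigma'}^\ast$, $\textrm{CG}_\gamma\in\Sigma^\ast$), so one partner count collapses to the $O(\ell_0)$ coming from $\textrm{CG}_\alpha$ alone; the counts then give $2\ell_2+3\ell_1+O(\ell_0)\le 2\ell_2+3.1\ell_1$ and $2\ell_2+\ell_1+O(\ell_0)\le 2\ell_2+1.1\ell_1$, so pure counting suffices.

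The two statements where counting leaves an $\Theta(\ell_1)$ slack are (1) and (6), and there I would invoke the crossing constraint. For (1) the lower bound is immediate: discarding all $0$'s, the $\{1,1'\}$ pairs alone nest to realize $\min(\#1,\#1')=2\ell_2+\min(\ell_{\textrm{CLG},1},\ell_{\textrm{CNG},1})$. For the matching upper bound I would show that matching even one $0'$ is ruinous. Fix a matched $0'$ at position $d$ (which lies to the right of the block $B={1'}^{2\ell_2}$ and to the left of every $1$) with partner $0$ at $d''$. If $d''>d$, then any $1$ strictly between $d$ and $d''$ paired leftward crosses $(d,d'')$, so $B$ can only reach the $\le\ell_2+O(\ell_0)$ ones lying beyond $d''$; if $d''<d$ (the partner is in the leading $0^{\ell_4}$), then $B$ is sandwiched in $(d'',d)$ and every rightward pair of $B$ crosses $(d'',d)$, so $B$ is entirely unmatched. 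In either case $p_1\le\ell_2+O(\ell_0)$, whence $p_0+p_1\le\ell_2+\ell_1+O(\ell_0)<2\ell_2$, so the optimum is attained with no matched $0'$ and equals $2\ell_2+\min(\ell_{\textrm{CLG},1},\ell_{\textrm{CNG},1})$.

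Statement (6) is the main obstacle and needs the most care. Here $\textrm{CG}_\beta(t)$ is entirely primed while $\textrm{CG}_\gamma(t')$ and both $0^{\ell_4}$ blocks are entirely unprimed, so every pair joins a primed symbol of $\textrm{CG}_\beta$ to an unprimed symbol outside it; since $\textrm{CG}_\beta$ contains no $0$, each matched $0'$ pair must exit $\textrm{CG}_\beta$ either on the left (into the left $0^{\ell_4}$) or on the right (into the right $0^{\ell_4}$ or $\textrm{CG}_\gamma$). Because no $1$ lies to the left of $\textrm{CG}_\beta$, every matched $1'$ must exit rightward, and a short non-crossing argument shows the leftward-exiting (hence $0'$) positions form a prefix of $\textrm{CG}_\beta$ lying before the leftmost matched $1'$; as the first block of $\textrm{CG}_\beta$ is ${1'}^{\ell_2}$, once any of those symbols is matched there are no leftward exits at all. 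I would split on this: if none of the leading ${1'}^{\ell_2}$ is matched, we forfeit $\ell_2$ ones and crude counting already gives $p_0+p_1\le 3\ell_2+2\ell_1+O(\ell_0)<4\ell_2+1.1\ell_1$; otherwise all pairs exit rightward, and a rightward non-crossing matching of a contiguous primed block into the region to its right is exactly an LCS.

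The hard part will be the block-by-block analysis of this LCS, between the de-primed $\textrm{CG}_\beta$, namely $1^{\ell_2}0^{\ell_1}1^{2\ell_2}0^{\ell_1}1^{\ell_2}$ up to $O(\ell_0)$ of embedded $\textrm{CLG}/\textrm{CNG}$ noise, and the reverse of $0^{\ell_4}\textrm{CG}_\gamma(t')$, namely $1^{2\ell_2}0^{\ell_1}1^{\ell_2}0^{\ell_1}1^{\ell_2}0^{\ell_4}$ up to noise. I would argue by monotonicity that consuming all $4\ell_2$ ones forces the middle run $1^{2\ell_2}$ of the first string to draw ones from the leading run $1^{2\ell_2}$ of the second, which sits before the second string's first $0$-run; this leaves no room to align the first string's first $0$-run, so at most one of its two $0$-runs can be matched, i.e. at most $\ell_1+O(\ell_0)$ zeros accompany the ones. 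Sacrificing a $1$-run to gain the second $0$-run costs $\ell_2\gg\ell_1$ and is never profitable, so the total is at most $4\ell_2+\ell_1+O(\ell_0)\le 4\ell_2+1.1\ell_1$. The only delicate points are making the ``leftward exits form a prefix'' claim precise and controlling the $O(\ell_0)$ noise against the generous $0.1\ell_1$ slack, both routine given the hierarchy.
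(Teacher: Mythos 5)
Your proposal is correct, and for items (1)--(5) it is essentially the paper's own argument: (2)--(5) are the same symbol counts, and for (1) you, like the paper, show that any $\{0,0'\}$ pair sacrifices at least $\ell_2$ of the $1'$s in ${1'}^{2\ell_2}$ (splitting on which side the $0$ partner lies), so only $\{1,1'\}$ pairs survive and the nested matching of $\min(\#1,\#1')$ is optimal. Where you genuinely diverge is item (6). The paper also strips the $\mathcal{O}(\ell_0)$ of CLG/CNG noise first, but then runs a three-way case analysis on the fate of the \emph{first} ${0'}^{\ell_1}$ of $\textrm{CG}_\beta$ (paired as $(0,0')$, paired as $(0',0)$, or unpaired), each case losing either $\ell_2$ ones or $\ell_1$ zeros. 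You instead key on the leading ${1'}^{\ell_2}$: either it is untouched (forfeit $\ell_2$ and count), or your ``leftward exits all precede rightward exits'' observation kills every pairing into the left $0^{\ell_4}$, at which point the whole folding is an order-reversing matching, i.e.\ an LCS between two run-length-encoded strings, and a run-alignment argument (matching any zero of the first $0$-run of the de-primed $\textrm{CG}_\beta$ forces at least $\ell_2$ ones to be dropped) gives $4\ell_2+\ell_1+\mathcal{O}(\ell_0)$. Both are sound; the paper's is shorter, while your LCS reduction is a cleaner reusable viewpoint (it is the same mechanism the paper exploits later when evaluating $\text{RNA}(\textrm{CNG}(t)\,p(\textrm{CLG}(t')^R))$ in Lemma~\ref{lem-16}). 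The only spot to tighten in a full write-up is the sentence ``sacrificing a $1$-run to gain the second $0$-run costs $\ell_2$'': state it as the dichotomy that either the first $0$-run of the de-primed $\textrm{CG}_\beta$ is wholly unmatched (so at most $\ell_1+\mathcal{O}(\ell_0)$ zeros contribute) or at least $\ell_2$ ones are lost, which in either case stays below $4\ell_2+1.1\ell_1$.
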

\begin{proof}
The value of $\RNA(\cdot)$  for each of the six strings are calculated as follows.
\begin{description}
\item[(1)~$\RNA( 0^{\ell_4} \CG_\alpha(t) ) = 2 \ell_2 + \min (\ell_{\CLG, 1}, \ell_{\CNG, 1})$:] Pairing up as many $1$ to  $1'$ yields a matching of size $m = 2 \ell_2 + \min (\ell_{\CLG, 1}, \ell_{\CNG, 1})$. To see that it is optimal, it suffices to show that both $(0',0)$ and $(0,0')$ cannot appear in an optimal RNA folding.
    \begin{itemize}
    \item If the RNA folding contains $(0,0')$, then none of $1'$ can participate in the RNA folding. As the total number of $0'$ is $\ell_1 + \ell_{\CLG, 0}$, the size of RNA folding is at most $\ell_1 + \ell_{\CLG, 0} < m$.
    \item If the RNA folding contains $(0',0)$, then at most $\ell_{\CLG, 1}$ number of letters within the middle ${1}^{\ell_2}$ (the one between ${0'}^{\ell_1}$ and ${0}^{\ell_1}$) can participate in the RNA folding. It implies that the number of $(1',1)$ pairs  in the RNA folding is at most $\ell_{\CLG, 1} + \ell_2$. Hence the size of the RNA folding can be upper bounded by $(\ell_1 + \ell_{\CLG, 0}) + (\ell_{\CLG, 1} + \ell_2) < m$.
    \end{itemize}
\item[(2)~$\RNA( 0^{\ell_4} \CG_\beta(t) ) = 2 \ell_1 + \ell_{\CLG, 0} + \ell_{\CNG, 0}$:] Since there is no $1$, the equation follows from the fact that there are $ 2 \ell_1 + \ell_{\CLG, 0} + \ell_{\CNG, 0}$ occurrences of $0'$,  all of which can be matched to some $0$ without crossing.
\item[(3)~$\RNA( 0^{\ell_4} \CG_\gamma(t) ) = 0$:] It is impossible to produce any pair since there are no $0'$ and $1'$ in the string.
\item[(4)~$\RNA( 0^{\ell_4} \CG_\alpha(t) 0^{\ell_4} \CG_\beta(t') ) \leq 3.1 \ell_1 + 2 \ell_2$:] The value of $\RNA(\cdot)$ can be upper bounded by the number of $1$ and $0'$, which is $(2 \ell_2 + \ell_{\CNG, 1}) + (3 \ell_1 +　2\ell_{\CLG, 0} +　\ell_{\CNG, 0}) \leq 3.1 \ell_1 + 2 \ell_2$.
\item[(5)~$\RNA( 0^{\ell_4} \CG_\alpha(t) 0^{\ell_4} \CG_\gamma(t') ) \leq 1.1 \ell_1 + 2 \ell_2$:] The value of $\RNA(\cdot)$ can be upper bounded by the number of $1'$ and $0'$, which is $(2 \ell_2 + \ell_{\CLG, 1}) + (\ell_1 +　\ell_{\CLG, 0} ) \leq  1.1 \ell_1 + 2 \ell_2$.
\item[(6)~$\RNA( 0^{\ell_4} \CG_\beta(t) 0^{\ell_4} \CG_\gamma(t') ) \leq 1.1 \ell_1 + 4 \ell_2$:] Define the string $S = 0^{\ell_4} \circ \left( {1'}^{\ell_2}  {0'}^{\ell_1}  {1'}^{2\ell_2} {0'}^{\ell_1}{1'}^{\ell_2} \right) \circ   0^{\ell_4} \circ \left( {1}^{\ell_2}  {0}^{\ell_1}  {1}^{\ell_2} {0}^{\ell_1}  {1}^{2\ell_2} \right)$ as the result of removing the clique node gadgets and the clique list gadgets in $0^{\ell_4} \CG_\beta(t) 0^{\ell_4} \CG_\gamma(t')$. It is clear that $\RNA( 0^{\ell_4} \CG_\beta(t) 0^{\ell_4} \CG_\gamma(t') )$ $\leq 0.1 \ell_1 + \RNA(S)$, as the total length of the removed substrings can be upper bounded by $0.1 \ell_1$. Therefore, it suffices to show that $\RNA(S) \leq  \ell_1 + 4 \ell_2$.
    Let $A$ be any  RNA folding of $S$.

\begin{description}
\item[Case~1:] There is a pair $(0,0')\in A$ where the letter $0'$ comes from the first ${0'}^{\ell_1}$ in $S$. Clearly, the first substring $ {1'}^{\ell_2}$ cannot participate in any pairing. Therefore, $|A| \leq |{0'}^{\ell_1}  {1'}^{2\ell_2} {0'}^{\ell_1}{1'}^{\ell_2}| = 2 \ell_1 + 3 \ell_2 < \ell_1 + 4 \ell_2$.

\item[Case~2:] There is a pair $(0',0)\in A$ where the letter $0'$ comes from the first ${0'}^{\ell_1}$ in $S$.
Consider any pair $(1',1)$ such that the letter $1$ is in the substring ${1}^{2\ell_2}$ in $S$.
In order to have this pair not crossing any pair  $(0',0)\in A$, the letter $1'$ must be in the first substring ${1'}^{\ell_2}$ in $S$.
Therefore, at most half of the letters in the substring ${1}^{2\ell_2}$ can participate in the RNA folding $A$, and so $|A|$ is at most the total number of $0'$ and $1$ in $S$ minus $\ell_2$, i.e., $|A| \leq 2 \ell_1 + 3 \ell_2 < \ell_1 + 4 \ell_2$.

\item[Case~3:] The first ${0'}^{\ell_1}$ in $S$ does not participate in the RNA folding. In this case, we have $|A| \leq |{1'}^{\ell_2}  {1'}^{2\ell_2} {0'}^{\ell_1}{1'}^{\ell_2}| = \ell_1 + 4 \ell_2$.\qedhere
\end{description}
\end{description}
\end{proof}

Note that Lemma~\ref{lem-4}(1, 2, 3) implies that the RNA folding for blocked clique gadgets described in Fig.~\ref{fig-2} is optimal, and the optimal number of pairings is irrelevant to the underlying $k$-clique.

\subsection{Optimal RNA foldings of $S_G$ \label{ss-3}}

In this section, we show that there exists an optimal RNA folding of $S_G$ satisfying some good properties.
Let $A$ be an RNA folding of a string $S$, and let $S_1$ and $S_2$ be two disjoint substrings of $S$. Recall that a substring is a subsequence of consecutive elements. We write $S_1 \overset{A}\longleftrightarrow S_2$ if there exists a pair $\{x_1, x_2\} \in A$ with $x_1 \in S_1, x_2 \in S_2$.
Given an RNA folding $A$ of the string $S_G$, the two properties (P1) and (P2) are defined as follows.
\begin{description}
\item[(P1)] All $0'$ in all ${0'}^{\ell_3}$ are paired up with some  $0$ in some $0^{\ell_4}$ in $A$.
\item[(P2)] There exist $t_{\alpha}, t_{\beta}, t_{\gamma} \in \mathcal{C}_k$ such that the following holds. If $\CG_{u_1}(t_1) \overset{A}\longleftrightarrow \CG_{u_2}(t_2)$, then $\{(u_1, t_1), (u_2,t_2)\} \subseteq \{(\alpha,t_{\alpha}),(\beta, t_\beta), (\gamma, t_\gamma)\}$.
\end{description}
Intuitively, (P2) says that all clique gadgets are blocked by the  pairings between ${0'}^{\ell_3}$ and $0^{\ell_4}$, except the three selected clique gadgets  $\CG_\alpha(t_\alpha)$, $\CG_\beta(t_\beta)$, and $\CG_\gamma(t_\gamma)$, for some choices of three $k$-cliques $t_\alpha$, $t_\beta$, and $t_\gamma$.

\begin{lemma}\label{lem-6}
Let $A$ be any RNA folding of $S_G$.
Let $S_1$ be a substring ${0'}^{\ell_3}$ of $S_G$, and let $S_2$ be a substring $0^{\ell_4}$ of $S_G$.
If there is a pair in $A$ linking a letter $0'$ in $S_1$ to a letter $0$ in $S_2$, then there is another RNA folding $A'$ of $S_G$ with $|A'| \geq |A|$ where all letters in $S_1$  are paired up with letters in $S_2$.
\end{lemma}
\begin{proof}
The lemma immediately follows from the fact that $\ell_4$ is greater than the total number of $0'$ in $S_G$, which makes it possible to rematch all the letters in $S_1$ to  letters in $S_2$.
\end{proof}

Lemma~\ref{lem-7} shows that there is an optimal RNA folding $A$ of $S_G$ satisfying (P1).

\begin{lemma}\label{lem-7}
There is an optimal RNA folding of $S_G$ satisfying (P1).
\end{lemma}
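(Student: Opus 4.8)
The idea is to start from an arbitrary optimal folding of $S_G$ and transform it, without ever decreasing the number of pairs, into one enjoying Property~1, using Lemma~\ref{lem-6} as the workhorse. Among all optimal foldings of $S_G$, fix one, call it $A$, maximizing the potential $\Phi(A)$, defined as the number of symbols $0'$ occurring in the blocks ${0'}^{\ell_3}$ that are paired in $A$ with a symbol $0$ occurring in one of the three reservoirs $0^{\ell_4}$; such a maximizer exists since $\Phi$ is a bounded non-negative integer. I claim this extremal $A$ already satisfies Property~1, and argue by contradiction, assuming some block $S_1 = {0'}^{\ell_3}$ contains a $0'$ not matched to any $0^{\ell_4}$.

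I would split into two cases according to how $S_1$ behaves in $A$. First, if at least one $0'$ of $S_1$ is already paired with a $0$ in some reservoir $S_2 = 0^{\ell_4}$, then Lemma~\ref{lem-6} yields a folding $A'$ with $|A'| \ge |A|$ in which \emph{all} of $S_1$ is matched into $S_2$. Optimality of $A$ forces $|A'| = |A|$, so $A'$ is again optimal, yet it matches strictly more occurrences of $0'$ of $S_1$ into reservoirs; provided the reroute does not diminish the reservoir contribution of any other block (the room argument below), this gives $\Phi(A') > \Phi(A)$, contradicting maximality. Second, if no $0'$ of $S_1$ is paired into any reservoir, then every matched $0'$ of $S_1$ is attached to an \emph{internal} $0$ (inside a clique gadget or an $0^{\ell_1}$ run); here I would exhibit one improving move that creates a single $S_1$-to-reservoir pair, either lengthening the folding (contradicting optimality) or, after deleting the old internal pair, preserving $|A|$ while raising $\Phi$ (contradicting maximality).

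The resource powering every step is the inequality, already invoked in Lemma~\ref{lem-6}, that $\ell_4$ exceeds the total number of occurrences of $0'$ in $S_G$; hence each reservoir $0^{\ell_4}$ always retains free $0$'s to serve as match targets, and the layout puts the reservoir of each segment to the left of all the ${0'}^{\ell_3}$ blocks of that segment, so fresh pairs can be routed leftward into it. Once the contradiction is obtained for every candidate violating block, no $0'$ of any ${0'}^{\ell_3}$ block is left unmatched to a reservoir, which is exactly Property~1; as $A$ is optimal, the lemma follows.

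\noindent\textbf{Main obstacle.} The delicate point is preserving the non-crossing condition during the reroute of the second case, and confirming no loss of $\Phi$ in the first. An optimal folding may contain long-range pairs that straddle a reservoir, most notably the pairs realizing the longest common subsequence between a $\textrm{CNG}(\cdot)$ in one segment and a $\textrm{CLG}(\cdot)$ in the next, whose endpoints lie on opposite sides of an intervening $0^{\ell_4}$; a careless reroute of a ${0'}^{\ell_3}$ block into that same reservoir can cross such a pair. I expect to handle this just as Lemma~\ref{lem-6} does: the abundance of free $0$'s lets me pick match targets deep enough in the reservoir that the new pairs \emph{nest inside} rather than cross the straddling pairs, deleting at most a bounded number of genuinely conflicting pairs and re-adding at least as many from the reservoir so the cardinality does not drop. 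Showing that such a conflict-free reroute always exists and strictly increases $\Phi$ is the technical heart of the argument.
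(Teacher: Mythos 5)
Your proposal has a genuine gap, and it sits exactly where you flag ``the technical heart'': the case of a block $S_1={0'}^{\ell_3}$ none of whose letters reaches a reservoir. Your plan there is a local exchange --- delete a few conflicting pairs, add one (or all) $S_1$-to-reservoir pairs, and win either on cardinality or on the potential $\Phi$. But the non-crossing constraint can separate $S_1$ from \emph{every} $0^{\ell_4}$ by a large family of straddling pairs: for instance, pairs realizing LCS-type matchings between clique gadgets on opposite sides of $S_1$, or $0'$-to-$0$ pairs from \emph{other} ${0'}^{\ell_3}$ blocks in the same segment reaching across $S_1$ into a distant reservoir. The number of such straddling pairs is not bounded by a constant, nor even by $\ell_3$ in general, so ``deleting at most a bounded number of genuinely conflicting pairs and re-adding at least as many'' is an assertion, not an argument; the single improving move you posit need not exist, and the $\Phi$-extremal folding could in principle still have $z>0$ blocked blocks. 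A secondary, smaller issue is Case 1: rerouting all of $S_1$ into $S_2$ via Lemma~\ref{lem-6} may force deletions that lower other blocks' contributions to $\Phi$, and you only assume (``provided the reroute does not diminish\dots'') rather than prove that $\Phi$ strictly increases.

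The missing idea is that this lemma is not proved by local surgery but by a global budget count against the lower bound $\text{RNA}(S_G)\ge m_1+m_2$ from Lemma~\ref{lem-5}. The paper first normalizes via Lemma~\ref{lem-6} so that each ${0'}^{\ell_3}$ is either fully routed to one reservoir or not routed at all, lets $z$ be the number of unrouted blocks, and then partitions \emph{all} pairs of $A$: reservoir pairs contribute $(3(|\mathcal{C}_k|+1)-z)\ell_3$, every other pair touches some clique gadget, trapped gadgets contribute at most the values computed in Lemma~\ref{lem-4}, and the at most $6+2z$ untrapped gadgets contribute at most their lengths. Since every one of those secondary contributions is $<0.1\ell_3$, one gets $|A|<m_1-0.8z\ell_3+0.7\ell_3$, which is below $m_1$ whenever $z>0$. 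In other words, an unrouted block forfeits $\approx\ell_3$ pairs that cannot be recouped anywhere, because everything a $0'$ could pair with outside the reservoirs is smaller by a factor of $100$. Your proof needs some version of this quantitative comparison; without it the exchange argument cannot close.
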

\begin{proof}
Choose any RNA folding $A$ of $S_G$ with $|A| = \RNA(S_G)$. In view of  Lemma~\ref{lem-6}, we assume that for each substring ${0'}^{\ell_3}$ in $S_G$, either (i) all its letters are matched to letters in the same $0^{\ell_4}$, or (ii) none of its letters is matched to any letters in any $0^{\ell_4}$. Let $z$ be the number of ${0'}^{\ell_3}$ such that none of its letters is matched to any letters in any $0^{\ell_4}$.

Let $t \in \mathcal{C}_k$, and let $x \in \{\alpha, \beta, \gamma\}$. We say that $\CG_x(t)$ is {\em trapped} in $A$ if each letter in $\CG_x(t)$ is either (i) unmatched, (ii) matched to letters within $\CG_x(t)$, or (iii) matched to a letter in some $0^{\ell_4}$.
Note that a sufficient condition for a clique gadget $\CG_x(t)$ to be trapped is that all letters in its two neighboring ${0'}^{\ell_3}$  are  matched to letters in the same substring $0^{\ell_4}$.

Suppose that the clique gadget $\CG_x(t)$ is not trapped in $A$, then $\CG_x(t)$ falls into one of the following two cases.
\begin{description}
\item [Case 1:] The letters in the two neighboring substrings ${0'}^{\ell_3}$ of $\CG_x(t)$ are matched to letters in two distinct substrings $0^{\ell_4}$.
\item [Case 2:] A neighboring ${0'}^{\ell_3}$ of $\CG_x(t)$ is not matched to any $0^{\ell_4}$.
\end{description}
Observe that at most  $3$ clique gadgets belong to the first case, and at most $2z$  clique gadgets belong to the second case. Thus, the number of clique gadgets that are not trapped in $A$ is at most $3 + 2z$.
We derive an upper bound of $|A|$ as follows.
\begin{align*}
|A| & \leq (3(|\mathcal{C}_k|+ 1 ) - z) \ell_3  & (\text{matched }{0'}^{\ell_3}) \\
&\hspace{0.4cm}+
|\mathcal{C}_k| \bigg(
\max_{t \in \mathcal{C}_k } \RNA( 0^{\ell_4} \CG_\alpha(t) )
+ \max_{t\in \mathcal{C}_k}  \RNA( 0^{\ell_4} \CG_\beta(t) ) & (\text{trapped clique gadgets})  \\
& \hspace{1.4cm} +  \max_{t\in \mathcal{C}_k}  \RNA( 0^{\ell_4} \CG_\gamma(t) ) \bigg)\\
&\hspace{0.4cm} + (3 + 2z) \max_{t\in \mathcal{C}_k , x \in \{\alpha, \beta, \gamma\}} |\CG_x(t)|. & (\text{remaining clique gadgets})
\end{align*}
In view of the calculation in Lemma~\ref{lem-4}, $|A|$ is at most
$$m_1 - z \ell_3 +
\left( 2 \ell_2 + \min (\ell_{\CLG, 1}, \ell_{\CNG, 1}) + 2 \ell_1 + \ell_{\CLG, 0} + \ell_{\CNG, 0} \right) +
(3 +2z) \max_{t, x}|\CG_x(t)|.$$
Due to the two facts  (i) $ 2 \ell_2 + \min (\ell_{\CLG, 1}, \ell_{\CNG, 1})  + 2 \ell_1  + \ell_{\CLG, 0} + \ell_{\CNG, 0}  < 0.1\ell_3$, and (ii) the length of a clique gadget $< 0.1\ell_3$, we have:
$$|A| <  m_1 - 0.8 z \ell_3 + 0.4 \ell_3.$$
Thus, if $z>0$, then $|A| < m_1 < \RNA(S_G)$, contradicting the assumption that $A$ is optimal. Hence we must have $z = 0$, i.e., $A$ satisfies (P1).
\end{proof}

Next, we deal with property (P2). We need some terminologies for ease of notation.
For each $x \in \{\alpha, \beta, \gamma\}$, we call $\CG_x(t)$ a {\em type-$x$} clique gadget.
We say that the two clique gadgets $C_1$ and $C_2$ are {\em linked} in $A$ if $C_1 \overset{A}\longleftrightarrow C_2$.
We write $\mathcal{M}_{\alpha, \beta, \gamma}$ to denote the set of all RNA foldings $A$ of $S_G$ satisfying (P1) and (P2).
We write $\mathcal{M}_\alpha$ to denote the set of all RNA foldings $A$ of $S_G$ satisfying (P1) and the following condition ($\text{P2}_\alpha'$).
\begin{description}
\item[($\text{P2}_\alpha'$)] There exist $t_{\alpha,1}, t_{\alpha,2}, t_{\beta}, t_{\gamma} \in \mathcal{C}_k$ satisfying $t_{\alpha,1} \neq t_{\alpha,2}$ such that the following holds. If $\CG_{u_1}(t_1) \overset{A}\longleftrightarrow \CG_{u_2}(t_2)$, then $\{(u_1, t_1), (u_2,t_2)\} \in \{ \{(\alpha,t_{\alpha,1}),(\beta, t_\beta)\},\{(\alpha,t_{\alpha,2}),(\gamma, t_\gamma)\} \}$.
\end{description}
The two properties ($\text{P2}_\beta'$) and ($\text{P2}_\gamma'$), and the two sets $\mathcal{M}_\beta$ and $\mathcal{M}_\gamma$ are defined analogously.

\begin{lemma}\label{lem-8}
Let $A$ be an optimal RNA folding of $S_G$ satisfying (P1). For each $x \in \{\alpha, \beta, \gamma\}$, there do not exist two distinct type-$x$ clique gadgets $C_1$ and $C_2$ with $C_1 \overset{A}\longleftrightarrow C_2$.
\end{lemma}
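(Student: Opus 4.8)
The plan is to treat the three values of $x$ separately, exploiting that $\beta$ and $\gamma$ gadgets are ``pure'' while only $\alpha$ gadgets mix primed and unprimed symbols.

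For $x = \beta$ and $x = \gamma$ I would argue directly from the alphabet. Since $\textrm{CG}_\beta(t) \in {\Sigma'}^\ast$ and $\textrm{CG}_\gamma(t) \in \Sigma^\ast$, and a legal pair must join a letter of $\Sigma$ with its primed copy in $\Sigma'$, no single pair can have both endpoints inside two $\beta$ gadgets (all primed) or inside two $\gamma$ gadgets (all unprimed). Hence $C_1 \overset{A}\longleftrightarrow C_2$ cannot occur for these types, for any folding $A$ whatsoever.

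The real work is the case $x = \alpha$, and here the plan is a crossing argument built on Property 1. Suppose for contradiction that $C_1 = \textrm{CG}_\alpha(t_1)$ lies to the left of $C_2 = \textrm{CG}_\alpha(t_2)$ and that some pair $\{x_1, x_2\} \in A$ has $x_1 \in C_1$, $x_2 \in C_2$; then $x_1 < x_2$. First I would record the global layout of $S_G$: all $\alpha$ gadgets sit inside the first bracket, consecutive $\alpha$ gadgets are separated by a block ${0'}^{\ell_3}$, and the three $0^{\ell_4}$ blocks occur only at the bracket boundaries. Consequently the ${0'}^{\ell_3}$ separator immediately following $C_1$ lies entirely strictly between $x_1$ and $x_2$, while every $0^{\ell_4}$ block lies either to the left of $C_1$ or to the right of $C_2$, hence outside the interval $[x_1, x_2]$. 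Now I invoke Property 1: each $0'$ of that separator is matched to a $0$ in some $0^{\ell_4}$ block, so each such pair has one endpoint in $(x_1, x_2)$ and the other outside $[x_1, x_2]$, which interleaves with $\{x_1, x_2\}$. This is a crossing, contradicting that $A$ is a valid RNA folding.

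I expect the only point needing care to be the geometric bookkeeping in the $\alpha$ case—namely verifying that no $0^{\ell_4}$ block can fall strictly between $x_1$ and $x_2$, which is what forces the separator's matched $0$ to lie outside $[x_1, x_2]$ and produce the crossing. It is worth noting that this argument uses only Property 1 and the crossing-free condition, so optimality of $A$ is not actually exploited for this lemma; it is merely the setting in which the lemma will be applied.
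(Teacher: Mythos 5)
Your proof is correct, and the core argument---the ${0'}^{\ell_3}$ separator between $C_1$ and $C_2$ is trapped strictly inside the interval spanned by the pair $\{x_1,x_2\}$ while every $0^{\ell_4}$ block lies outside that interval, so Property~1 forces a crossing---is exactly the paper's proof. The only difference is that you dispose of the $\beta$ and $\gamma$ cases up front via the alphabet observation ($\textrm{CG}_\beta(t) \in {\Sigma'}^\ast$, $\textrm{CG}_\gamma(t) \in \Sigma^\ast$), whereas the paper applies the separator argument uniformly to all three types; both work, and your closing remark that optimality of $A$ is never actually used is also accurate.
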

\begin{proof}
There is a substring $S^\diamond = {0'}^{\ell_3}$ located between $C_1$ and $C_2$. The existence of a pair in $A$ linking a letter in $C_1$ and a letter in $C_2$ makes it impossible for any letter in $S^\diamond$ be matched to letters in any $0^{\ell_4}$, which is a contradiction to the assumption that $A$ has property (P1).
\end{proof}

\begin{lemma}\label{lem-9}
Let $A$ be an optimal RNA folding of $S_G$ satisfying (P1). For each $\{x,y\} \in \{ \{\alpha, \beta\}$, $\{\alpha, \gamma\}, \{\beta, \gamma\}\}$, there do not exist two distinct type-$x$ clique gadgets $C_1$ and $C_2$ and two not necessarily distinct type-$y$ clique gadgets $C_3$ and $C_4$ such that $C_1 \overset{A}\longleftrightarrow C_3$ and $C_2 \overset{A}\longleftrightarrow C_4$.
\end{lemma}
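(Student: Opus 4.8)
The plan is to argue by contradiction, exploiting the fact that the two witnessing links ``trap'' the separator that sits between the two \emph{distinct} $x$ clique gadgets, so that the $0'$ letters of that separator cannot be matched anywhere, contradicting Property~1. Suppose such $C_1, C_2, C_3, C_4$ exist; without loss of generality $C_1$ precedes $C_2$ in $S_G$. Both are $x$ clique gadgets, hence both lie in the same ``$x$-block'' of $S_G$ (the bracketed region containing all $\textrm{CG}_x(t)$). Fix a pair $e_1 = \{a_1, b_1\} \in A$ witnessing $C_1 \overset{A}\longleftrightarrow C_3$ with $a_1 \in C_1$, $b_1 \in C_3$, and a pair $e_2 = \{a_2, b_2\} \in A$ witnessing $C_2 \overset{A}\longleftrightarrow C_4$ with $a_2 \in C_2$, $b_2 \in C_4$. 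Since $C_1 \neq C_2$ are distinct $x$ clique gadgets in the same block, there is a full separator ${0'}^{\ell_3}$, call it $P$, lying strictly between $C_1$ and $C_2$; in particular $a_1 < p < a_2$ for every position $p$ of $P$. Note that distinctness is used only here, to produce $P$: we never need $C_3$ and $C_4$ to be distinct.

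Next I would invoke Property~1 to produce the trapped pair. By Property~1 every $0'$ of $P$ is matched in $A$ to a $0$ inside one of the three $0^{\ell_4}$ blocks of $S_G$; fix one such matched pair $\{p,z\}$ with $p \in P$ and $z$ in some $0^{\ell_4}$ block $Z$. The key structural observation is that each $0^{\ell_4}$ block is contiguous and disjoint from the $x$-block, so it lies entirely before or entirely after it; combined with $a_1 < p < a_2$ this forces $z \notin [a_1, a_2]$, i.e. $z < a_1$ or $z > a_2$, while $p$ sits strictly between $a_1$ and $a_2$. Thus the arc $\{p,z\}$ has exactly one endpoint inside the open interval $(a_1, a_2)$ and the other outside the $x$-block entirely.

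The heart of the proof is then a short positional case analysis showing that in every case $\{p,z\}$ crosses $e_1$ or $e_2$. I would organize it by whether the $y$-block precedes or follows the $x$-block (the blocks are ordered $\alpha, \beta, \gamma$) and by which of the three fixed $0^{\ell_4}$ blocks contains $z$. Because the positions of $a_1 < a_2$ (in the $x$-block), of $b_1, b_2$ (in the $y$-block), and of the three $0^{\ell_4}$ blocks are all determined by the block order, each case reduces to checking a single crossing pattern $i_1 < i_2 < j_1 < j_2$ against $e_1$ or $e_2$. For instance, when the $x$-block precedes the $y$-block and $z$ lies before the $x$-block, one gets $z < a_1 < p < b_1$, which is exactly a crossing of $\{p,z\}$ with $e_1$; and when $z$ lies just after the $x$-block one gets $p < a_2 < z < b_2$, a crossing with $e_2$. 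Each crossing contradicts that $A$ is a valid (crossing-free) RNA folding, so no such $C_1, C_2, C_3, C_4$ can exist.

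I expect the main obstacle to be the bookkeeping of orderings in this case analysis rather than any single hard idea. Because $S_G$ is not symmetric---the $0^{\ell_4}$ blocks occupy fixed, non-symmetric positions and the three clique-gadget types use different alphabets---I cannot collapse the two block-orderings by a symmetry argument and must verify both, checking all three candidate locations of $z$ in each; one must also be careful that the crossing partner of $\{p,z\}$ is not always the ``nearer'' arc. Finally, it is worth observing that optimality of $A$ is never used: only Property~1 and crossing-freeness enter the argument, so the statement in fact holds for \emph{every} RNA folding of $S_G$ satisfying Property~1.
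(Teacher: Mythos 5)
Your proof is correct and takes essentially the same approach as the paper: both identify the separator ${0'}^{\ell_3}$ lying between the two distinct $x$ clique gadgets and show that the two linking arcs cut it off from every $0^{\ell_4}$ block, contradicting Property~1. The paper compresses this into a one-line reachability observation (the separator can only reach $C_1,C_2,C_3,C_4$ and the regions between them), whereas you carry out the positional crossing case analysis explicitly, but the underlying idea is identical.
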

\begin{proof}
There is a substring $S^\diamond = {0'}^{\ell_3}$ located between $C_1$ and $C_2$. Since $C_1 \overset{A}\longleftrightarrow C_3$ and $C_2 \overset{A}\longleftrightarrow C_4$, letters in $S^\diamond$ can only be matched to (i) letters in  $C_1$, $C_2$, $C_3$, and $C_4$, (i) letters located between  $C_1$ and $C_2$, and (iii) letters located between $C_3$ and $C_4$. This contradicts the assumption that $A$ has property (P1).
\end{proof}

\begin{lemma}\label{lem-10}
Let $A$ be an optimal RNA folding of $S_G$ satisfying (P1).
Let $x \in \{\alpha, \beta, \gamma\}$.
Suppose that there are two distinct type-$x$ clique gadgets $C_1$ and $C_2$  such that $C_1 \overset{A}\longleftrightarrow C_3$ and $C_2 \overset{A}\longleftrightarrow C_4$, where  $C_3$ and $C_4$ are two not necessarily distinct clique gadgets.
Then $A \in \mathcal{M}_x$.
\end{lemma}
\begin{proof}
Suppose that $C_3$ is a type-$y$ clique gadget, and $C_4$ is a type-$z$ clique gadget. By Lemma~\ref{lem-8} and Lemma~\ref{lem-9}, the three symbols $x$, $y$, and $z$ must be distinct, and so $C_3 \neq C_4$.

To prove that  $A \in \mathcal{M}_x$, it suffices to show that $A$ satisfies ($\text{P2}_x'$).
Suppose that ($\text{P2}_x'$) is not met, then there are two clique gadgets $C_5$ and $C_6$ that are linked in $A$, and $\{C_5, C_6\} \notin \{\{C_1, C_3\},\{C_2, C_4\}\}$. We show that this leads to a contradiction.

Observe that none of  $C_5$ and $C_6$ can be a type-$x$ clique gadget. Suppose that $C_5$ is of type-$x$. Then $C_6$ is either type-$y$ or type-$z$ by Lemma~\ref{lem-8}. In any case, Lemma~\ref{lem-9} is violated.
Therefore, without loss of generality, we assume $C_5$ is of type-$y$. Then, by Lemma~\ref{lem-8}, $C_6$ must be of type-$z$.

Since $C_1$ and $C_2$ are distinct, there must be a substring $S^\diamond = {0'}^{\ell_3}$ located between $C_1$ and $C_2$.  Since $C_1$ is linked to a type-$y$ clique gadget, and since $C_2$ is linked to a type-$z$ clique gadget, letters in $S^\diamond$ can only be paired up with letters in the substring $S' = 0^{\ell_4}$ bordering both ${0'}^{\ell_3}  \underset{{t \in \mathcal{C}_k}}{\bigcirc}  \left(   \CG_y(t) {0'}^{\ell_3} \right)$ and ${0'}^{\ell_3}  \underset{{t \in \mathcal{C}_k}}{\bigcirc}  \left(  \CG_z(t) {0'}^{\ell_3} \right)$, viewing $S_G$ as a circular string. However, the existence of a pair linking a letter in $C_5$ (which is of type-$y$) and a letter in $C_6$ (which is of type-$z$) implies that no letter in $S'$ can be matched with a letter in ${0'}^{\ell_3}$ without a crossing.  This contradicts the assumption that $A$ has property (P1).
\end{proof}

The following lemma  follows from Lemma~\ref{lem-7} and Lemma~\ref{lem-10}.

\begin{lemma}\label{lem-11}
There is an optimal RNA folding of $S_G$ that belongs to  $\mathcal{M}_\alpha \cup \mathcal{M}_\beta \cup \mathcal{M}_\gamma \cup \mathcal{M}_{\alpha, \beta, \gamma}$.
\end{lemma}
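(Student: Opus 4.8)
The plan is to take the optimal RNA folding $A$ with Property~1 supplied by Lemma~\ref{lem-7} and argue, by a single dichotomy on the structure of its linked clique gadgets, that $A$ already lies in one of the four sets. Since each of $\mathcal{M}_\alpha,\mathcal{M}_\beta,\mathcal{M}_\gamma$ and $\mathcal{M}_{\alpha,\beta,\gamma}$ requires Property~1 as its first defining clause, once $A$ is chosen it only remains to verify the constraint each set places on the linked pairs. The split is governed by whether some ``type'' $x \in \{\alpha,\beta,\gamma\}$ contributes two distinct clique gadgets that are linked to something.

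\textbf{Case 1: some type has two distinct linked gadgets.} Here I would assume there is an $x \in \{\alpha,\beta,\gamma\}$ and two distinct $x$ clique gadgets $C_1, C_2$ with $C_1 \overset{A}\longleftrightarrow C_3$ and $C_2 \overset{A}\longleftrightarrow C_4$ for some $C_3, C_4$. Let $y, z$ be the types of $C_3, C_4$. Lemma~\ref{lem-10} applies verbatim: invoking Lemma~\ref{lem-8} and Lemma~\ref{lem-9} it forces $x, y, z$ to be pairwise distinct, so $\{y,z\} = \{\alpha,\beta,\gamma\}\setminus\{x\}$, and it further asserts that $\{C_1,C_3\}$ and $\{C_2,C_4\}$ are the only linked pairs of clique gadgets in $A$. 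Writing $t_{x,1}, t_{x,2}$ for the cliques encoded by $C_1, C_2$ (with $t_{x,1}\neq t_{x,2}$ since the gadgets are distinct) and $t_y, t_z$ for those of $C_3, C_4$, the two linked pairs are exactly $\{(x,t_{x,1}),(y,t_y)\}$ and $\{(x,t_{x,2}),(z,t_z)\}$, which, after relabeling $y,z$ to present the non-central pairs in the standard order, is precisely the defining condition of $\mathcal{M}_x$. Hence $A \in \mathcal{M}_x \subseteq \mathcal{M}_\alpha \cup \mathcal{M}_\beta \cup \mathcal{M}_\gamma$.

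\textbf{Case 2: no type has two distinct linked gadgets.} Then for each $x$ at most one $x$ clique gadget appears in any linked pair. I would define $t_\alpha, t_\beta, t_\gamma$ to be the cliques encoded by these (at most one per type) linked gadgets, choosing an arbitrary clique of $\mathcal{C}_k$ for any type participating in no linked pair; this is harmless because the defining clause of $\mathcal{M}_{\alpha,\beta,\gamma}$ is an implication, so a type absent from all linked pairs is handled vacuously. By Lemma~\ref{lem-8} no linked pair joins two gadgets of the same type, so every linked pair joins two distinct types and therefore can only involve $(\alpha,t_\alpha),(\beta,t_\beta),(\gamma,t_\gamma)$, which is exactly the condition for $A \in \mathcal{M}_{\alpha,\beta,\gamma}$.

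The argument is essentially bookkeeping rather than a new idea, which is why the statement ``directly follows'' from Lemma~\ref{lem-7} and Lemma~\ref{lem-10}. The only points demanding genuine care are: in Case~1, confirming that Lemma~\ref{lem-10} truly forces the three types to be distinct and the two pairs to be the \emph{only} linked pairs, so that the observed structure matches $\mathcal{M}_x$ for the correct central type $x$; and in Case~2, correctly disposing of the types that contribute no linked pair by appealing to the vacuous satisfaction of the implication in the definition of $\mathcal{M}_{\alpha,\beta,\gamma}$. I do not expect a substantive obstacle beyond reconciling the case structure with the exact wording of the four definitions.
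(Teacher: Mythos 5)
Your proposal is correct and follows essentially the same route as the paper: both take the Property-1 optimal folding from Lemma~\ref{lem-7} and split on whether some type $x$ has two distinct linked clique gadgets, invoking Lemma~\ref{lem-10} to land in $\mathcal{M}_x$ in that case and placing $A$ in $\mathcal{M}_{\alpha,\beta,\gamma}$ otherwise. The paper states this dichotomy in two lines; you have merely filled in the same bookkeeping explicitly.
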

\begin{proof}
By Lemma~\ref{lem-7}, we restrict our consideration to optimal RNA foldings having (P1), and let $A$ be any such optimal RNA folding.
There are two cases.
\begin{description}
\item[Case 1:]  For each $x \in \{\alpha, \beta, \gamma\}$, there is at most one type-$x$ clique gadget that is linked to other clique gadgets. Then $A$ satisfies (P2), and so $A \in \mathcal{M}_{\alpha, \beta, \gamma}$.
\item[Case 2:]  For some $x \in \{\alpha, \beta, \gamma\}$, there are two distinct type-$x$ clique gadgets that are linked to other clique gadgets. By Lemma~\ref{lem-10}, $A \in \mathcal{M}_x$.\qedhere
\end{description}
\end{proof}

We are now in a position to prove the main lemma in this section, which shows that there is an optimal RNA folding satisfying (P1) and (P2).

\begin{lemma}\label{lem-12}
There is an optimal RNA folding of $S_G$ that belongs to $\mathcal{M}_{\alpha, \beta, \gamma}$.
\end{lemma}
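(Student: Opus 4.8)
The plan is to prove Lemma~\ref{lem-12} by starting from Lemma~\ref{lem-11}, which already hands us an optimal RNA folding $A$ belonging to $\mathcal{M}_\alpha \cup \mathcal{M}_\beta \cup \mathcal{M}_\gamma \cup \mathcal{M}_{\alpha,\beta,\gamma}$. If $A \in \mathcal{M}_{\alpha,\beta,\gamma}$ we are done, so the real work is to handle the case $A \in \mathcal{M}_x$ for some $x \in \{\alpha,\beta,\gamma\}$ and show that we can transform $A$ into an equally good folding lying in $\mathcal{M}_{\alpha,\beta,\gamma}$. By symmetry it suffices to treat one representative value of $x$; I would take $x = \alpha$, since the $\alpha$ clique gadget is the one containing both a $\textrm{CLG}$ (via $p(\textrm{CLG}(t)^R)$) and a $\textrm{CNG}$ block, and so is the ``hinge'' that an $\mathcal{M}_\alpha$ folding uses to link to both a $\beta$ gadget and a $\gamma$ gadget simultaneously.

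Concretely, if $A \in \mathcal{M}_\alpha$, then by definition there are distinct $t_{\alpha,1}, t_{\alpha,2}$ and cliques $t_\beta, t_\gamma$ so that the only inter-gadget links are $\textrm{CG}_\alpha(t_{\alpha,1}) \overset{A}\longleftrightarrow \textrm{CG}_\beta(t_\beta)$ and $\textrm{CG}_\alpha(t_{\alpha,2}) \overset{A}\longleftrightarrow \textrm{CG}_\gamma(t_\gamma)$. The strategy is a local surgery argument: I would delete all pairs of $A$ that involve letters inside these four clique gadgets, producing a partial folding $A_0$ whose size is $|A| - (\text{pairs removed})$, and then rebuild a folding on those four gadgets that recovers at least as many pairs while using only a single $\alpha$ gadget. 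The bound on how many pairs the surgery can cost comes directly from Lemma~\ref{lem-4}: items (4),(5),(6) cap the number of pairs attainable by any $0^{\ell_4}\textrm{CG}_x(t)\,0^{\ell_4}\textrm{CG}_y(t')$ block, so the total removed from $A$ is at most $\text{RNA}(0^{\ell_4}\textrm{CG}_\alpha \, 0^{\ell_4}\textrm{CG}_\beta) + \text{RNA}(0^{\ell_4}\textrm{CG}_\alpha\,0^{\ell_4}\textrm{CG}_\gamma) \le (3.1\ell_1 + 2\ell_2) + (1.1\ell_1 + 2\ell_2)$, plus the $0'$-to-$0^{\ell_4}$ pairs of Property~1 which we keep. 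I would then show that folding the \emph{three} gadgets $\textrm{CG}_\alpha(t_{\alpha,1}), \textrm{CG}_\beta(t_\beta), \textrm{CG}_\gamma(t_\gamma)$ together (the $\mathcal{M}_{\alpha,\beta,\gamma}$ configuration, as in Fig.~\ref{fig-3}) yields at least $6\ell_2 + 3\ell_1$ pairs from the $1/1'$ and $0/0'$ runs alone, which already dominates the $4.2\ell_1 + 4\ell_2$ upper bound on what was removed, since $\ell_2 \gg \ell_1$. The second $\alpha$ gadget $\textrm{CG}_\alpha(t_{\alpha,2})$, now unlinked, becomes blocked and contributes its trapped value $2\ell_2 + \min(\ell_{\textrm{CLG},1},\ell_{\textrm{CNG},1})$ from Lemma~\ref{lem-4}(1); this must also be accounted for, so the cleanest accounting is to compare the total pairs from $\{C_{\alpha,1}, C_{\alpha,2}, C_\beta, C_\gamma\}$ before and after.

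The main obstacle is making this counting genuinely rigorous rather than hand-wavy: I must verify that after the surgery the rebuilt pairs do not cross any surviving pairs of $A_0$ (in particular the Property~1 pairings between each ${0'}^{\ell_3}$ and its $0^{\ell_4}$ must remain valid, which holds because blocking a gadget only frees up $0$'s and never forces a crossing), and I must confirm the arithmetic inequality that the gain on the three-gadget configuration plus the trapped contribution of the freed $\alpha$ gadget is at least the two-block maxima that were destroyed. The inequality is driven entirely by the gap between $\ell_2$ and $\ell_1, \ell_0$: the $\mathcal{M}_\alpha$ configuration wastes one full $\ell_2$-run because $\textrm{CG}_\alpha(t_{\alpha,1})$ cannot simultaneously donate its $1'$-mass to a $\beta$ gadget and a $\gamma$ gadget, whereas the single-$\alpha$ configuration of Fig.~\ref{fig-3} threads $\textrm{CLG}(t_\alpha)\!\to\!\textrm{CNG}(t_\beta)$, $\textrm{CLG}(t_\alpha)\!\to\!\textrm{CNG}(t_\gamma)$, and $\textrm{CLG}(t_\beta)\!\to\!\textrm{CNG}(t_\gamma)$ all at once. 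I therefore expect the core of the proof to be a careful side-by-side tally showing the $\mathcal{M}_{\alpha,\beta,\gamma}$ rearrangement is never worse, so that an optimal folding may be assumed to lie in $\mathcal{M}_{\alpha,\beta,\gamma}$, completing the lemma.
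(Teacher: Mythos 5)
Your overall strategy is the same as the paper's: invoke Lemma~\ref{lem-11}, then rule out the case $A \in \mathcal{M}_x$ by comparing the pairs attributable to the four linked clique gadgets (bounded via Lemma~\ref{lem-4}(4)--(6)) against the $6\ell_2 + 3\ell_1$ that the single three-gadget configuration achieves. The paper phrases this as a global upper bound on $|A|$ measured against the lower bound $m_1+m_2$ of Lemma~\ref{lem-5}, while you phrase it as a local exchange, but the accounting is the same.

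The genuine gap is your claim that ``by symmetry it suffices to treat one representative value of $x$.'' There is no such symmetry: $\textrm{CG}_\alpha$, $\textrm{CG}_\beta$, $\textrm{CG}_\gamma$ have structurally different compositions of primed and unprimed runs, so the three relevant two-gadget bounds in Lemma~\ref{lem-4} differ ($3.1\ell_1+2\ell_2$, $1.1\ell_1+2\ell_2$, and $1.1\ell_1+4\ell_2$), as do the trapped contributions of the freed gadget ($2\ell_2+\min(\ell_{\textrm{CLG},1},\ell_{\textrm{CNG},1})$ for $\alpha$, $2\ell_1+\ell_{\textrm{CLG},0}+\ell_{\textrm{CNG},0}$ for $\beta$, and $0$ for $\gamma$). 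Moreover, the case you chose to present, $x=\alpha$, is the easy one; your justification that the inequality is ``driven entirely by the gap between $\ell_2$ and $\ell_1$'' fails for $x=\beta$ and $x=\gamma$. For $x=\gamma$, the pairs destroyed are bounded by $(1.1\ell_1+2\ell_2)+(1.1\ell_1+4\ell_2)=2.2\ell_1+6\ell_2$, the freed $\gamma$ gadget contributes nothing when trapped, and the rebuilt configuration yields only $6\ell_2+3\ell_1$: the $\ell_2$ terms cancel exactly and the surgery wins by a margin of just $0.8\ell_1$, which depends on the precise constants rather than on $\ell_2 \gg \ell_1$. The paper checks all three cases separately ($|A|\le m_1+2\ell_2+4.2\ell_1-\min(\ell_{\textrm{CLG},1},\ell_{\textrm{CNG},1})$ for $\alpha$, and $|A|\le m_1+6\ell_2+2.2\ell_1$ up to lower-order corrections for $\beta$ and $\gamma$), and your proof is incomplete without doing the same.
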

\begin{proof}
In view of Lemma~\ref{lem-11}, it suffices to show that for any $A \in \mathcal{M}_\alpha \cup \mathcal{M}_\beta \cup \mathcal{M}_\gamma$, we have $|A| < \RNA(S_G)$.
Let $A \in \mathcal{M}_x$.
Let $t_{x,1}, t_{x,2}, t_{y}, t_{z} \in \mathcal{C}_k$ and  $\{y,z\} = \{\alpha, \beta, \gamma\} \setminus \{x\}$ be the parameters specified in the definition of $\mathcal{M}_x$.  That is, if $\CG_{u_1}(t_1) \overset{A}\longleftrightarrow \CG_{u_2}(t_2)$, then $\{(u_1, t_1), (u_2,t_2)\} \in \{ \{(\alpha,t_{\alpha,1}),(\beta, t_\beta)\},\{(\alpha,t_{\alpha,2}),(\gamma, t_\gamma)\} \}$.
 Each pair in $A$ falls into one of the following cases.
\begin{description}
\item[Case 1:] The pair links a letter $0'$ in some substring ${0'}^{\ell_3}$ to a letter $0$ in some substring $0^{\ell_4}$. Note that there are exactly $3(|\mathcal{C}_k|+1) \ell_3$ number of such pairs in $A$.
\item[Case 2:] The pair contains a letter in some $\CG_{u}(t)$, where $(u,t) \notin \{(x,t_{x,1}),(x,t_{x,2}), (y, t_y), (z,t_z)\}$. Observe that any letter in such $\CG_{u}(t)$ can only be matched to letters within the same clique gadget $\CG_{u}(t)$ or some substring $0^{\ell_4}$, and so the number of pairs in $A$ that fall into this case is upper bounded by
    $(|\mathcal{C}_k| - 2) \max\limits_{t \in \mathcal{C}_k}\RNA\left(0^{\ell_4} \CG_{x}(t)\right)
    + (|\mathcal{C}_k| - 1) \max\limits_{t \in \mathcal{C}_k} \RNA\left(0^{\ell_4} \CG_{y}(t)\right)
    + (|\mathcal{C}_k| - 1) \max\limits_{t \in \mathcal{C}_k} \RNA\left(0^{\ell_4} \CG_{z}(t)\right)$.
\item[Case 3:]  The pair involves a letter in some $\CG_{u}(t)$, where $(u,t) \in \{(x,t_{x,1}),(x,t_{x,2}), (y, t_y), (z,t_z)\}$. The number of such pairs is upper bounded by
    $\max\limits_{t, t' \in \mathcal{C}_k} \RNA\left(0^{\ell_4} \CG_{x}(t) 0^{\ell_4} \CG_{y}(t')\right)
    + \max\limits_{t, t' \in \mathcal{C}_k} \RNA$ $\left(0^{\ell_4} \CG_{x}(t) 0^{\ell_4} \CG_{z}(t')\right)$.
\end{description}
Using the above calculation and the formulas in Lemma~\ref{lem-4}, we can derive $|A|  < m_1 + m_2$, as follows. Note that $m_2 \geq 6 \ell_2 + 3 \ell_1$.
\[
|A| \leq
\begin{cases}
m_1 + 2 \ell_2 + 4.2 \ell_1 - \min (\ell_{\CLG, 1}, \ell_{\CNG, 1})
&\text{ if $x = \alpha$}\\
m_1 + 6 \ell_2 + 2.2 \ell_1 - \ell_{\CLG, 0} - \ell_{\CNG, 0}
&\text{ if $x = \beta$}\\
m_1 + 6 \ell_2 + 2.2 \ell_1
&\text{ if $x = \gamma$}
\end{cases}
\]
Note that $m_1 + m_2 \leq \RNA(S_G)$ by Lemma~\ref{lem-5}.
\end{proof}

\subsection{Calculating $\RNA(S_G)$ \label{ss-4}}

In this section, we prove that $\RNA(S_G) = m_1 + m_2$, and finish the proof of Theorem~\ref{thm-2}.
In view of Lemma~\ref{lem-12}, in the calculation of $\RNA(S_G)$, we can restrict our consideration to RNA foldings in $\mathcal{M}_{\alpha, \beta, \gamma}$. Based on the structural property of RNA foldings in $\mathcal{M}_{\alpha, \beta, \gamma}$, we first reduce the calculation of $\RNA(S_G)$ to the calculation of optimal RNA foldings of simpler strings.

\begin{lemma}\label{lem-13}
$\RNA(S_G) \leq m_1 + \max\limits_{t_{\alpha}, t_{\beta}, t_{\gamma} \in \mathcal{C}_k} \RNA(0^{\ell_4} \CG_\alpha(t_\alpha) 0^{\ell_4} \CG_\beta(t_\beta) 0^{\ell_4} \CG_\gamma(t_\gamma))$.
\end{lemma}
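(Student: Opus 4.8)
The plan is to take an optimal RNA folding $A \in \mathcal{M}_{\alpha,\beta,\gamma}$ (which exists by Lemma~\ref{lem-12}) and show that all its pairs can be accounted for by the $m_1$ term together with a single folding of the three-gadget sequence $0^{\ell_4} \textrm{CG}_\alpha(t_\alpha) 0^{\ell_4} \textrm{CG}_\beta(t_\beta) 0^{\ell_4} \textrm{CG}_\gamma(t_\gamma)$, where $t_\alpha, t_\beta, t_\gamma$ are the three distinguished cliques furnished by membership in $\mathcal{M}_{\alpha,\beta,\gamma}$.

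First I would classify the pairs of $A$ into two groups. By the definition of $\mathcal{M}_{\alpha,\beta,\gamma}$, every clique gadget other than $\textrm{CG}_\alpha(t_\alpha)$, $\textrm{CG}_\beta(t_\beta)$, $\textrm{CG}_\gamma(t_\gamma)$ is blocked, so its letters pair only within itself or to some $0^{\ell_4}$. The pairs linking a $0'$ in some ${0'}^{\ell_3}$ to a $0$ in some $0^{\ell_4}$ contribute exactly $3(|\mathcal{C}_k|+1)\ell_3$ by Property~1, and the pairs internal to the blocked gadgets contribute at most $\max_t \text{RNA}(0^{\ell_4}\textrm{CG}_x(t))$ per blocked gadget; summing over the $(|\mathcal{C}_k|-1)$ blocked gadgets of each of the three types and invoking parts (1)--(3) of Lemma~\ref{lem-4} yields precisely the $m_1$ bound for this group. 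The remaining pairs involve only the three distinguished gadgets and the $0^{\ell_4}$ blocks adjacent to them.

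The key step is to argue that the remaining pairs, together with the $0^{\ell_4}$ substrings they use, fit inside a folding of the contiguous string $0^{\ell_4} \textrm{CG}_\alpha(t_\alpha) 0^{\ell_4} \textrm{CG}_\beta(t_\beta) 0^{\ell_4} \textrm{CG}_\gamma(t_\gamma)$. Here I would exploit the crossing-free (planarity) structure: since the three distinguished gadgets appear in the cyclic order $\alpha, \beta, \gamma$ in $S_G$ and each pair among the remaining ones joins two of these gadgets (or a gadget to a nearby $0^{\ell_4}$), one can ``project'' all such pairs onto the three prepended $0^{\ell_4}$ blocks and the three gadgets without creating crossings. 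Because $\ell_4$ exceeds the total number of unprimed symbols that could be requested from any single block, the finitely many $0^{\ell_4}$ blocks of $S_G$ can be consolidated so that each distinguished gadget draws its $0$-partners from its own preceding $0^{\ell_4}$; this is exactly the situation modeled by the three-term concatenation, and the resulting pair count is at most $\text{RNA}(0^{\ell_4}\textrm{CG}_\alpha(t_\alpha)0^{\ell_4}\textrm{CG}_\beta(t_\beta)0^{\ell_4}\textrm{CG}_\gamma(t_\gamma))$, which is in turn at most the maximum over all triples $t_\alpha, t_\beta, t_\gamma \in \mathcal{C}_k$.

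The main obstacle I anticipate is the consolidation argument: one must verify that rerouting the $0$-endpoints of the distinguished gadgets onto their immediately preceding $0^{\ell_4}$ blocks never forces a crossing and never exhausts a block, so that no pairs are lost in passing from the cyclic arrangement of $S_G$ to the linear three-gadget string. The slack $\ell_4 \gg \ell_3 \gg \ell_0$ (the same gap exploited in Lemma~\ref{lem-7}) is what makes this rerouting feasible, and I would lean on the ordering of the gadgets and the non-crossing condition to complete the bound.
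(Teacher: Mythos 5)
Your proposal is correct and follows essentially the same route as the paper: invoke Lemma~\ref{lem-12} to get an optimal folding in $\mathcal{M}_{\alpha,\beta,\gamma}$, partition its pairs into the ${0'}^{\ell_3}$--$0^{\ell_4}$ pairs, the pairs touching blocked gadgets (bounded via Lemma~\ref{lem-4}(1)--(3) to give $m_1$), and the pairs touching the three distinguished gadgets. The ``consolidation/rerouting'' step you worry about is not actually needed: every pair in the third category has both endpoints in the subsequence of $S_G$ formed by the three $0^{\ell_4}$ blocks and the three distinguished gadgets, which occur in exactly the order $0^{\ell_4}\textrm{CG}_\alpha(t_\alpha)0^{\ell_4}\textrm{CG}_\beta(t_\beta)0^{\ell_4}\textrm{CG}_\gamma(t_\gamma)$, so restricting the non-crossing folding to that subsequence immediately yields the claimed bound.
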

\begin{proof}
By Lemma~\ref{lem-12}, there is an optimal RNA folding of $S_G$ in $\mathcal{M}_{\alpha, \beta, \gamma}$.
For any $A \in \mathcal{M}_{\alpha, \beta, \gamma}$, let $t_{\alpha}, t_{\beta}, t_{\gamma} \in \mathcal{C}_k$ be the three cliques in the definition of (P2). Each pair in $A$ falls into one of the following cases.
\begin{description}
\item[Case 1:] The pair links a letter $0'$ in some substring ${0'}^{\ell_3}$ to a letter $0$ in some substring $0^{\ell_4}$. Note that there are exactly $3(|\mathcal{C}_k|+1) \ell_3$ number of such pairs in $A$.
\item[Case 2:]
The pair contains a letter in some $\CG_{u}(t)$,  where $(u,t) \notin \{(\alpha,t_{\alpha}),(\beta,t_{\beta}), (\gamma, t_\gamma)\}$. Observe that any letter in such $\CG_{u}(t)$ can only be matched to letters within the same clique gadget $\CG_{u}(t)$ or some substring $0^{\ell_4}$, and so the number of pairs in $A$ that fall into this case is upper bounded by
    $(|\mathcal{C}_k| - 1) \max\limits_{t \in \mathcal{C}_k}\RNA(0^{\ell_4} \CG_{\alpha}(t))
    + (|\mathcal{C}_k| - 1) \max\limits_{t \in \mathcal{C}_k} \RNA(0^{\ell_4} \CG_{\beta}(t))
    + (|\mathcal{C}_k| - 1) \max\limits_{t \in \mathcal{C}_k} \RNA(0^{\ell_4} \CG_{\gamma}(t))$.
\item[Case 3:] The pair involves a letter in some $\CG_{u}(t)$, where $(u,t) \in \{(\alpha,t_{\alpha}),(\beta,t_{\beta}), (\gamma, t_\gamma)\}$. The number of such  pairs is upper bounded by
    $\RNA(0^{\ell_4} \CG_\alpha(t_\alpha) 0^{\ell_4} \CG_\beta(t_\beta) 0^{\ell_4} \CG_\gamma(t_\gamma))$.
\end{description}
Applying the formulas in Lemma~\ref{lem-4}, we have $|A| = m_1 + \RNA(0^{\ell_4} \CG_\alpha(t_\alpha) 0^{\ell_4} \CG_\beta(t_\beta) 0^{\ell_4} \CG_\gamma(t_\gamma))$. Hence we conclude the proof.
\end{proof}

For any choices of three $k$-cliques  $t_{\alpha}, t_{\beta}, t_{\gamma} \in \mathcal{C}_k$, we define:
$$S_{t_{\alpha}, t_{\beta}, t_{\gamma}} = {1}^{\ell_2} \circ S_{t_{\gamma}, t_{\alpha}} \circ  {1}^{\ell_2} \circ S_{t_{\alpha}, t_{\beta}, }  \circ {1'}^{2 \ell_2}  \circ S_{t_{\beta}, t_{\gamma}},$$
where
\begin{align*}
S_{t_{\gamma}, t_{\alpha}} &= {0}^{\ell_1} \CNG(t_\gamma) p({\CLG(t_\alpha)}^R) {0'}^{\ell_1},\\
S_{t_{\alpha}, t_{\beta}} &= {0}^{\ell_1} \CNG(t_\alpha) p({\CLG(t_\beta)}^R) {0'}^{\ell_1},\\
S_{t_{\beta}, t_{\gamma}} &= {0'}^{\ell_1} p(\CNG(t_\beta)) {\CLG(t_\gamma)}^R {0}^{\ell_1}.\\
\end{align*}
Note that $S_{t_{\alpha}, t_{\beta}, t_{\gamma}}$ is simply a cyclic shift of the concatenation of $\CG_\alpha(t_\alpha)$, $\CG_\beta(t_\beta)$, and $\CG_\gamma(t_\gamma)$ after removing the sequences of $1$s and $1'$s at the beginning and the end of these clique gadgets.
Lemma~\ref{lem-15}, together with Lemma~\ref{lem-13}, reduces the calculation of $\RNA(S_G)$ to the calculation of $\RNA(S_{t_{\alpha}, t_{\beta}, t_{\gamma}})$.
Lemma~\ref{lem-14} and Lemma~\ref{lem-15-aux} are auxiliary lemmas.

\begin{lemma}\label{lem-14}
Let $S = S_1 \circ S_2 \circ S_3 \in \{0,1,0',1'\}^{\ast}$ be a string, where the substring $S_2$ is either $1 1'$ or $1' 1$. Then $\RNA(S) = \RNA(S_1 \circ S_3) + 1$.
\end{lemma}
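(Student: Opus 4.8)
The plan is to prove $\text{RNA}(S) = \text{RNA}(S_1 \circ S_3) + 1$ by two inequalities. For the lower bound $\text{RNA}(S) \geq \text{RNA}(S_1 \circ S_3) + 1$, I would start from any optimal RNA folding $A'$ of $S_1 \circ S_3$ and construct a valid folding of $S$ of size $|A'| + 1$. Since $S_2$ is either $11'$ or $1'1$, the two middle letters form a matching pair $\{1, 1'\}$ that is mutually adjacent; adding this single pair to $A'$ (with indices shifted appropriately to account for the two inserted positions) keeps the folding crossing-free, because the new pair sits at two consecutive positions and therefore cannot interleave with any existing pair of $A'$. This yields a folding of $S$ with $|A'| + 1 = \text{RNA}(S_1 \circ S_3) + 1$ pairs, establishing the lower bound.

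For the upper bound $\text{RNA}(S) \leq \text{RNA}(S_1 \circ S_3) + 1$, I would take an optimal folding $A$ of $S$ and argue that deleting the two positions of $S_2$ removes at most one pair from $A$ while leaving a valid folding of $S_1 \circ S_3$. The key observation is that the two adjacent positions comprising $S_2$ can participate in at most two pairs total, but I claim they cannot contribute two independent pairs whose removal drops the count by two without compensation. The cleaner route is a case analysis on how the two letters of $S_2$ are used in $A$: if neither is matched, then $A$ is already a folding of $S$ that avoids $S_2$, giving $|A| \leq \text{RNA}(S_1 \circ S_3)$; if they are matched to each other, removing that one pair yields a folding of $S_1 \circ S_3$ of size $|A| - 1$; and if at least one is matched to an outside letter, I would show one can rematch the two $S_2$ letters to each other without decreasing $|A|$, reducing to the previous case.

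The main obstacle will be the last case, where one or both letters of $S_2$ are matched outside. Here I would use an exchange argument: suppose the $1$ in $S_2$ is matched to some $1'$ outside and the $1'$ in $S_2$ is matched to some $1$ outside (or one of them is unmatched). Because the two positions of $S_2$ are consecutive, any pair using one of them and an external partner can be rerouted. Concretely, I would replace the (at most two) external pairs touching $S_2$ with the single internal pair $\{1, 1'\}$ within $S_2$; this removes at most two pairs and adds one, so it could in principle decrease $|A|$ by one — but since $A$ is optimal and the lower bound already guarantees a folding of size $\text{RNA}(S_1 \circ S_3) + 1$, I instead argue directly that after deleting $S_2$ the remaining pairs of $A$ (those not touching $S_2$) form a valid folding of $S_1 \circ S_3$, so $|A| - (\text{pairs touching } S_2) \leq \text{RNA}(S_1 \circ S_3)$, and since the pairs touching two consecutive positions number at most two, I must rule out losing two. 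The crossing-freeness of $A$ forces that the two letters of $S_2$, being adjacent, cannot both match to external partners on the same side without nesting; a short argument shows the two external partners would have to be adjacent in the nesting order, so one of them can be paired with an $S_2$ letter instead, keeping the net loss at exactly one pair. Combining both bounds gives the equality.
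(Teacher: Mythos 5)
Your proposal is correct and takes essentially the same route as the paper: the paper's entire proof is the exchange argument you describe, showing by the same three-case analysis (both $S_2$ letters unmatched, one matched, both matched externally) that some optimal folding of $S$ pairs the $1$ and $1'$ of $S_2$ with each other. One phrasing in your hardest case is garbled --- ``one of them can be paired with an $S_2$ letter instead'' should read that the two external partners $x$ (a $1'$) and $y$ (a $1$) are re-paired \emph{with each other} while the two $S_2$ letters pair internally, which keeps $|A|$ unchanged and is exactly the paper's move.
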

\begin{proof}
It suffices to show that there exists an optimal RNA folding of $S$  where the two letters in $S_2$ are paired up.
Let $A$ be any optimal RNA folding of $S$. We show that it is possible to modify $A$ in such a way that the two letters in $S_2$ are paired up, and the total number of matched pairs is unchanged.
\begin{description}
\item[Case 1:] If the two letters in $S_2$ are already paired up, then no modification is needed.
\item[Case 2:] If exactly one of the two letters in $S_2$ is matched in $A$, we first unmatch it, and then we pair up the two letters in $S_2$.
\item[Case 3:] Suppose that both  two letters in $S_2$ are matched to letters not in $S_2$. Suppose that the letter $1 \in S_2$ is paired up with $x$ in $A$, and the letter $1' \in S_2$ is paired up with $y$ in $A$. We removing these two pairs from $A$,  and then we add the two pairs $\{x,y\}$ and $\{1,1'\}$ to $A$.\qedhere
\end{description}
\end{proof}

\begin{lemma}\label{lem-15-aux}
In any optimal RNA folding of $0^{\ell_4} \CG_\alpha(t_\alpha) 0^{\ell_4} \CG_\beta(t_\beta) 0^{\ell_4} \CG_\gamma(t_\gamma)$, no letter within the three substrings $0^{\ell_4}$ is matched.
\end{lemma}
\begin{proof}
For ease of notation, we write $S^\star$ to denote $0^{\ell_4} \CG_\alpha(t_\alpha) 0^{\ell_4} \CG_\beta(t_\beta) 0^{\ell_4} \CG_\gamma(t_\gamma)$.
We first state a few simple observations.
\begin{description}
\item[(O1)] By simply matching only the letters in $\CG_\alpha(t_\alpha), \CG_\beta(t_\beta)$, and $\CG_\gamma(t_\gamma)$, as described in Fig.~\ref{fig-3}, we  infer that $\RNA(S^\star) \geq 6 \ell_2 + 3 \ell_1$.
\item[(O2)]
The total number of $0'$ and $1'$ in $S^\star$ is at most $6 \ell_2 + 3.1 \ell_1$, and so
$\RNA(S^\star) \leq 6 \ell_2 + 3.1 \ell_1$.
\item[(O3)]
The absolute difference between the number of $1$ and the number of $1'$ in $S^\star$ is at most $0.1 \ell_1$.
\end{description}

Note that $\CG_\gamma(t_\gamma)$ does not contain $0'$.
To prove the lemma, we show that in any optimal RNA folding $A$ of $S^\star$, no letter in $0^{\ell_4}$ is matched to letters in $\CG_\beta(t_\beta)$ and $\CG_\alpha(t_\alpha)$.
We write $S_1$, $S_2$, and $S_3$ to denote the first, second, and the third substring $0^{\ell_4}$.

We show that there is no pair in $A$ linking a letter $0' \in \CG_\beta(t_\beta)$ to a letter $0 \in S_2 \cup S_3$.  Recall that $\CG_\beta(t_\beta) = {1'}^{\ell_2} p({\CLG(t_\beta)}^R) {0'}^{\ell_1}  {1'}^{2\ell_2} {0'}^{\ell_1} p(\CNG(t_\beta)) {1'}^{\ell_2}$. If there is such a pair, then at least $\ell_2$ amount of $1'$ cannot participate in the RNA folding. Therefore, by (O2), $|A| \leq (6 \ell_2 + 3.1 \ell_1) - \ell_2 \leq
5 \ell_2 + 3.1 \ell_1$.
However, by (O1), $\RNA(S^\star) \geq 6 \ell_2 + 3 \ell_1 > |A|$, contradicting with the assumption that $A$ is optimal.

Next, we show that there is no pair in $A$ linking a letter $0' \in \CG_\beta(t_\beta)$ to a letter $0 \in S_1$.
Recall that $\CG_\alpha(t_\alpha) = {1'}^{2 \ell_2} p({\CLG(t_\alpha)}^R) {0'}^{\ell_1}  {1}^{\ell_2} {0}^{\ell_1} \CNG(t_\alpha) {1}^{\ell_2}$.
 Suppose that there is such a pair. Then the $3\ell_2$ amount of $1'$ in (i) the substring ${1'}^{2\ell_2}$ of $\CG_\alpha(t_\alpha)$ and (ii) the  first substring ${1'}^{\ell_2}$ of $\CG_\beta(t_\beta)$ can only be matched to letters in $\CG_\alpha(t_\alpha)$. However, the amount of $1$ in $\CG_\alpha(t_\alpha)$ is at most $2.1 \ell_1$, and so at least $0.9 \ell_2$ amount of $1'$ are not matched. Therefore, by (O1) and (O2), $|A| \leq (6 \ell_2 + 3.1 \ell_1) - 0.9 \ell_2 <  \RNA(S^\star)$, contradicting with the assumption that $A$ is optimal.

Lastly, we show that there is no pair in $A$ linking a letter $0' \in \CG_\alpha(t_\alpha)$  to a letter $0  \in S_1 \cup S_2 \cup S_3$. Suppose that there is such a pair. We show that at least $\ell_2$ amount of $1'$ cannot participate in the RNA folding. Then, by (O1) and (O2), $|A| \leq (6 \ell_2 + 3.1 \ell_1) - \ell_2 <  \RNA(S^\star)$, contradicting with the assumption that $A$ is optimal. We divide the analysis into cases.
\begin{description}
    \item[Case 1:] A letter $0' \in \CG_\alpha(t_\alpha)$ is matched to a letter $0 \in S_1$. Then all letters in the substring ${1'}^{2 \ell_2}$ of  $\CG_\alpha(t_\alpha)$  cannot participate in the RNA folding.
    \item[Case 2:] A letter $0' \in \CG_\alpha(t_\alpha)$ is matched to a letter $0 \in S_2$. Then all letters in the two substrings ${1}^{\ell_2}$ of $\CG_\alpha(t_\alpha)$  can only be matched to letters within $p(\text{CLG}(t_\alpha)^R)$. Therefore, at least
        $2 \ell_2 - |p(\text{CLG}(t_\alpha)^R)| >  2 \ell_2 - 0.1 \ell_1$ amount of $1$ are unmatched. By (O3), the absolute difference between the number of $1$ and the number of $1'$ in $S^\star$ is at most $0.1 \ell_1$, and so at least $2 \ell_2 - 0.2 \ell_1 > \ell_2$ amount of $1'$ cannot participate in the RNA folding.
     \item[Case 3:] A letter $0' \in \CG_\alpha(t_\alpha)$ is matched to a letter $0 \in S_3$. Then all  $1'$ in $\CG_\beta(t_\beta)$ can only be matched to letters in $\CG_\alpha(t_\alpha)$. Observe that the number of $1'$ in $\CG_\beta(t_\beta)$ is at least $\ell_2$ more than the number of  $1$ in $\CG_\alpha(t_\alpha)$, and so at least $\ell_2$ amount of $1'$  cannot participate in the RNA folding.\qedhere
\end{description}
\end{proof}

\begin{lemma}\label{lem-15}
$\RNA(0^{\ell_4} \CG_\alpha(t_\alpha) 0^{\ell_4} \CG_\beta(t_\beta) 0^{\ell_4} \CG_\gamma(t_\gamma)) = 4 \ell_2 + \RNA(S_{t_{\alpha}, t_{\beta}, t_{\gamma}})$.
\end{lemma}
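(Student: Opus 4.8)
The plan is to prove the claimed identity by analyzing how an optimal RNA folding of the concatenation $T = 0^{\ell_4} \textrm{CG}_\alpha(t_\alpha) 0^{\ell_4} \textrm{CG}_\beta(t_\beta) 0^{\ell_4} \textrm{CG}_\gamma(t_\gamma)$ must treat the long runs of $1$'s and $1'$'s that appear at the boundaries of the three clique gadgets. Recall that
$$\textrm{CG}_\alpha(t_\alpha) = {1'}^{2 \ell_2} p({\textrm{CLG}(t_\alpha)}^R) {0'}^{\ell_1}  {1}^{\ell_2} {0}^{\ell_1} \textrm{CNG}(t_\alpha) {1}^{\ell_2},$$
$$\textrm{CG}_\beta(t_\beta) = {1'}^{\ell_2} p({\textrm{CLG}(t_\beta)}^R) {0'}^{\ell_1}  {1'}^{2\ell_2} {0'}^{\ell_1} p(\textrm{CNG}(t_\beta)) {1'}^{\ell_2},$$
$$\textrm{CG}_\gamma(t_\gamma) = {1}^{\ell_2} {\textrm{CLG}(t_\gamma)}^R {0}^{\ell_1}  {1}^{\ell_2} {0}^{\ell_1} \textrm{CNG}(t_\gamma) {1}^{2\ell_2}.$$
The key observation is that when we write out $T$ and cyclically shift it (treating the string as circular, which is legitimate because the leftmost ${1'}^{2\ell_2}$ of $\textrm{CG}_\alpha$ and the rightmost ${1}^{2\ell_2}$ of $\textrm{CG}_\gamma$ sit at the two ends and can only pair with each other), the trailing ${1}^{\ell_2}$ of $\textrm{CG}_\alpha$ meets the leading ${1'}^{\ell_2}$ of $\textrm{CG}_\beta$, and the trailing ${1'}^{\ell_2}$ of $\textrm{CG}_\beta$ meets the leading ${1}^{\ell_2}$ of $\textrm{CG}_\gamma$, and the outermost ${1'}^{2\ell_2}$ and ${1}^{2\ell_2}$ wrap around. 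This is exactly the cyclic structure recorded in $S_{t_\alpha,t_\beta,t_\gamma}$, where the $4\ell_2$ term will account for two runs of length $\ell_2$ that get fully matched and removed.

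First I would argue that there is an optimal folding of $T$ in which the $2\ell_2$ leftmost $1'$'s (from $\textrm{CG}_\alpha$) are matched to $2\ell_2$ of the rightmost $1$'s (from $\textrm{CG}_\gamma$'s ${1}^{2\ell_2}$ block), the ${1}^{\ell_2}$ at the right end of $\textrm{CG}_\alpha$ is matched to the ${1'}^{\ell_2}$ at the left end of $\textrm{CG}_\beta$, and the ${1'}^{\ell_2}$ at the right end of $\textrm{CG}_\beta$ is matched to the ${1}^{\ell_2}$ at the left end of $\textrm{CG}_\gamma$. Since $\ell_2$ dominates $\ell_1$ and $\ell_0$ (the gadget lengths and the $0/0'$ runs are all $o(\ell_2)$), an exchange argument of exactly the kind used in Lemma~\ref{lem-14} shows these matchings can be forced without decreasing the folding size: any $1'$ in the outer block that is not matched to an outer $1$ can be rematched there, and similarly for the two $\ell_2$-on-$\ell_2$ interfaces. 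Each of these three interfaces contributes a block of $\ell_2$ matched pairs, but only two full $\ell_2$-runs are consumed on each side simultaneously, and bookkeeping the exact count gives the additive $4\ell_2$: the outer wrap contributes $2\ell_2$, while the two inner interfaces each contribute $\ell_2$ but share the remaining unmatched $1$ and $1'$ material that is folded into $S_{t_\alpha,t_\beta,t_\gamma}$ itself.

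Next I would show that once these $4\ell_2$ boundary pairs are fixed, removing the matched runs leaves precisely the string $S_{t_\alpha,t_\beta,t_\gamma}$ (up to the cyclic reassociation described in the text), so that by the standard fact that deleting a fully-internally-matched substring reduces $\textrm{RNA}$ by exactly the number of removed pairs, we get $\textrm{RNA}(T) = 4\ell_2 + \textrm{RNA}(S_{t_\alpha,t_\beta,t_\gamma})$. The reverse inequality $\textrm{RNA}(T) \geq 4\ell_2 + \textrm{RNA}(S_{t_\alpha,t_\beta,t_\gamma})$ is the easy direction: starting from any folding of $S_{t_\alpha,t_\beta,t_\gamma}$, I simply add back the $4\ell_2$ boundary pairs, which are non-crossing with everything in $S_{t_\alpha,t_\beta,t_\gamma}$ by construction.

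The main obstacle will be the exchange argument that pins down the boundary matchings: I must verify that an optimal folding of $T$ cannot do better by using the $1$'s and $1'$'s in the boundary runs to pair with $1$'s and $1'$'s buried inside the CNG/CLG gadgets, or by leaving some boundary letters unmatched in favor of longer-range pairings. The counting subtlety is that there are asymmetric block sizes ($2\ell_2$ versus $\ell_2$) at the wrap-around and at the interfaces, so I must be careful that the shared/leftover $\ell_2$-run of $1'$ in the middle of $\textrm{CG}_\beta$ (namely the ${1'}^{2\ell_2}$ block) is accounted for inside $S_{t_\alpha,t_\beta,t_\gamma}$ rather than double-counted in the $4\ell_2$ term. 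Because $\ell_2 \gg \ell_1, \ell_0$, the gain from matching a boundary $1'$ against an interior $1$ of a gadget is strictly dominated by matching it against the aligned boundary run, so the exchange always succeeds; making this quantitative comparison rigorous, while tracking exactly which runs survive into $S_{t_\alpha,t_\beta,t_\gamma}$, is where the real work lies.
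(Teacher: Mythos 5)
There is a genuine gap: your proposal never confronts the three $0^{\ell_4}$ separator blocks, and that is where essentially all of the work in this lemma lies. The competing foldings you must rule out are not primarily ones that pair boundary $1$'s with $1$'s buried inside the CNG/CLG gadgets (that loss is indeed easy to bound), but ones that pair the $0'$-runs and the $0'$'s inside $p(\textrm{CLG})$, $p(\textrm{CNG})$ with the huge supply of $0$'s in the $0^{\ell_4}$ blocks. This is a real temptation for the optimum: Lemma~\ref{lem-4}(2) shows that a lone $\textrm{CG}_\beta(t)$ next to $0^{\ell_4}$ earns $2\ell_1+\ell_{\textrm{CLG},0}+\ell_{\textrm{CNG},0}$ pairs exactly this way. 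The paper's proof is almost entirely devoted to showing, by a case analysis on which $0^{\ell_4}$ a $0'$ of $\textrm{CG}_\alpha$ or $\textrm{CG}_\beta$ could reach, that any such pair forfeits at least roughly $\ell_2$ worth of $\{1,1'\}$ pairs while gaining only $O(\ell_1)$, so that in every optimal folding the $0^{\ell_4}$ blocks are entirely unmatched and can be deleted. Only after that deletion do the boundary runs you want to match actually become adjacent.

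This also undermines your appeal to ``an exchange argument of exactly the kind used in Lemma~\ref{lem-14}'': that lemma applies only to an adjacent $11'$ or $1'1$ substring, where swapping partners can never create a crossing. In $T$ the trailing ${1}^{\ell_2}$ of $\textrm{CG}_\alpha$ and the leading ${1'}^{\ell_2}$ of $\textrm{CG}_\beta$ are separated by an entire $0^{\ell_4}$, and rematching a $1'$ from the outer ${1'}^{2\ell_2}$ block to a distant $1$ in $\textrm{CG}_\gamma$ can cross existing pairs, so the exchange does not go through as stated. The correct order of operations (which the paper follows) is: (i) prove no letter of any $0^{\ell_4}$ is matched in an optimal folding, hence $\text{RNA}(T)=\text{RNA}(\textrm{CG}_\alpha(t_\alpha)\textrm{CG}_\beta(t_\beta)\textrm{CG}_\gamma(t_\gamma))$; (ii) cyclically shift (RNA is invariant under cyclic shifts since non-crossing matchings correspond to non-crossing chord diagrams on a circle); (iii) now the complementary runs ${1}^{2\ell_2}{1'}^{2\ell_2}$, ${1}^{\ell_2}{1'}^{\ell_2}$, ${1'}^{\ell_2}{1}^{\ell_2}$ are genuinely adjacent and Lemma~\ref{lem-14} peels them off one letter at a time, yielding exactly the $4\ell_2$ term. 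Your run-level bookkeeping of which blocks survive into $S_{t_\alpha,t_\beta,t_\gamma}$ is correct, but without step (i) the argument does not close.
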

\begin{proof} We bound $\RNA(0^{\ell_4} \CG_\alpha(t_\alpha) 0^{\ell_4} \CG_\beta(t_\beta) 0^{\ell_4} \CG_\gamma(t_\gamma))$ as follows.
\begin{align*}
&\RNA(0^{\ell_4} \CG_\alpha(t_\alpha) 0^{\ell_4} \CG_\beta(t_\beta) 0^{\ell_4} \CG_\gamma(t_\gamma))\\
&\hspace{0.7cm} = \RNA(\CG_\alpha(t_\alpha)\CG_\beta(t_\beta) \CG_\gamma(t_\gamma)) & (\text{Lemma~\ref{lem-15-aux}})\\
&\hspace{0.7cm} = \RNA(
{1'}^{2 \ell_2} p({\CLG(t_\alpha)}^R) {0'}^{\ell_1}  {1}^{\ell_2} {0}^{\ell_1} \CNG(t_\alpha) {1}^{\ell_2}
{1'}^{\ell_2} p({\CLG(t_\beta)}^R){0'}^{\ell_1}  {1'}^{2\ell_2} & (\text{by definition})\\
&\hspace{1.2cm}  {0'}^{\ell_1} p(\CNG(t_\beta)) {1'}^{\ell_2}
{1}^{\ell_2} {\CLG(t_\gamma)}^R {0}^{\ell_1}  {1}^{\ell_2} {0}^{\ell_1} \CNG(t_\gamma) {1}^{2\ell_2}
)\\
&\hspace{0.7cm} =  \RNA(
{1}^{\ell_2} {0}^{\ell_1} \CNG(t_\gamma) {1}^{2\ell_2}
{1'}^{2 \ell_2} p({\CLG(t_\alpha)}^R) {0'}^{\ell_1}  {1}^{\ell_2} {0}^{\ell_1} \CNG(t_\alpha){1}^{\ell_2}
{1'}^{\ell_2} & (\text{cyclic shift})\\
&\hspace{1.2cm}  p({\CLG(t_\beta)}^R) {0'}^{\ell_1}  {1'}^{2\ell_2}
{0'}^{\ell_1} p(\CNG(t_\beta)) {1'}^{\ell_2}
{1}^{\ell_2} {\CLG(t_\gamma)}^R {0}^{\ell_1}
)\\
&\hspace{0.7cm} = 4 \ell_2 +  \RNA(
{1}^{\ell_2} {0}^{\ell_1} \CNG(t_\gamma)  p({\CLG(t_\alpha)}^R) {0'}^{\ell_1}  {1}^{\ell_2} {0}^{\ell_1} \CNG(t_\alpha)p({\CLG(t_\beta)}^R) {0'}^{\ell_1} & (\text{Lemma~\ref{lem-14}})\\
&\hspace{1.2cm}  {1'}^{2\ell_2}
{0'}^{\ell_1} p(\CNG(t_\beta))  {\CLG(t_\gamma)}^R {0}^{\ell_1}
)\\
&\hspace{0.7cm} = 4 \ell_2 + \RNA(S_{t_{\alpha}, t_{\beta}, t_{\gamma}}). & (\text{by definition})
\end{align*}
For the third equality, we  move ${1}^{\ell_2} {0}^{\ell_1} \CNG(t_\gamma) {1}^{2\ell_2}$ from the end of the sequence to the beginning.
The fourth equality follows by applying Lemma~\ref{lem-14} iteratively to remove the substrings ${1}^{2\ell_2} {1'}^{2\ell_2}$, ${1}^{\ell_2} {1'}^{\ell_2}$, and ${1'}^{\ell_2} {1}^{\ell_2}$.
\end{proof}

Lemma~\ref{lem-16} shows that $\RNA(S_G) = m_1 + m_2$ by calculating the exact value of $\RNA(S_{t_{\alpha}, t_{\beta}, t_{\gamma}})$.

\begin{lemma}\label{lem-16}
$\RNA(S_G) = m_1 + m_2$.
\end{lemma}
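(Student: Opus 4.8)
The plan is to establish the two bounds $\text{RNA}(S_G)\ge m_1+m_2$ and $\text{RNA}(S_G)\le m_1+m_2$ separately. The lower bound is nothing but Lemma~\ref{lem-5}, so the entire effort lies in the matching upper bound. By Lemma~\ref{lem-13} and Lemma~\ref{lem-15} we have $\text{RNA}(S_G)\le m_1+4\ell_2+\max_{t_\alpha,t_\beta,t_\gamma\in\mathcal{C}_k}\text{RNA}(S_{t_\alpha,t_\beta,t_\gamma})$, so it suffices to show, for every fixed triple,
\[
\text{RNA}(S_{t_\alpha,t_\beta,t_\gamma})\le 2\ell_2+3\ell_1+\tfrac{3}{2}\ell_0-\tfrac12\big(\delta_{\text{LCS}}(\textrm{CLG}(t_\alpha),\textrm{CNG}(t_\gamma))+\delta_{\text{LCS}}(\textrm{CLG}(t_\beta),\textrm{CNG}(t_\alpha))+\delta_{\text{LCS}}(\textrm{CLG}(t_\gamma),\textrm{CNG}(t_\beta))\big).
\]
Taking the maximum over all triples turns the right-hand side into $2\ell_2+3\ell_1+\tfrac32\ell_0$ minus half the minimum of the sum of the three $\delta_{\text{LCS}}$ terms; adding $m_1+4\ell_2$ yields precisely $m_1+m_2$, because the minimization ranges over all ordered triples and hence the cyclic labeling appearing here and the labeling used in the definition of $m_2$ give the same minimum value.

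Everything therefore reduces to computing $\text{RNA}(S_{t_\alpha,t_\beta,t_\gamma})$. The key structural observation is that once the three long runs $1^{\ell_2}$, $1^{\ell_2}$ and $1'^{2\ell_2}$ are removed, $S_{t_\alpha,t_\beta,t_\gamma}$ splits into the three blocks $[A]=0^{\ell_1}\textrm{CNG}(t_\gamma)\,p(\textrm{CLG}(t_\alpha)^R)0'^{\ell_1}$, $[B]=0^{\ell_1}\textrm{CNG}(t_\alpha)\,p(\textrm{CLG}(t_\beta)^R)0'^{\ell_1}$ and $[C]=0'^{\ell_1}p(\textrm{CNG}(t_\beta))\,\textrm{CLG}(t_\gamma)^R 0^{\ell_1}$, each of which is an all-unprimed string immediately adjacent to an all-primed string. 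Using $0'^{\ell_1}=p(0^{\ell_1})$ and $p(U)p(V)=p(UV)$, each block can be written in the form $W\circ p(W'^{R})$ (for $[C]$, the mirror form $p(W)\circ W'^{R}$), so by the LCS/RNA correspondence of Section~\ref{sec.prelim} its value is an LCS, e.g.\ $\text{RNA}([A])=\text{LCS}(0^{\ell_1}\textrm{CNG}(t_\gamma),\,0^{\ell_1}\textrm{CLG}(t_\alpha))$. Invoking the elementary identity $\text{LCS}(0^{a}X,0^{a}Y)=a+\text{LCS}(X,Y)$ (a short exchange argument aligns the leading zero-blocks and leaves a common subsequence of $X$ and $Y$) gives $\text{RNA}([A])=\ell_1+\tfrac12(\ell_0-\delta_{\text{LCS}}(\textrm{CLG}(t_\alpha),\textrm{CNG}(t_\gamma)))$, and likewise for $[B]$ and $[C]$.

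The genuinely delicate part is justifying that an optimal folding of $S_{t_\alpha,t_\beta,t_\gamma}$ actually decomposes in this way, that is, that it may be taken to match all $2\ell_2$ long $1$'s with the $2\ell_2$ long $1'$'s and to confine every other pair to within a single block. This is exactly where the separation $\ell_2\gg\ell_1\gg\ell_0$ is used, and I expect it to be the main obstacle. I would argue, in the style of the case analysis already carried out in Lemma~\ref{lem-15}, that any folding which leaves a long $1$ or $1'$ unmatched, or which sends a $0'$ of some block to a $0$ lying outside the bracket that should enclose that block, loses $\Omega(\ell_2)$ pairs while gaining at most $O(\ell_0)$, and so cannot be optimal. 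Once the long runs are nested canonically, the inner $1^{\ell_2}$ pairing with the near half of $1'^{2\ell_2}$ encloses $[B]$, the outer $1^{\ell_2}$ pairing with the far half encloses $[A]$, and $[C]$ sits at top level to the right; the three blocks then occupy pairwise non-interacting regions and $\text{RNA}(S_{t_\alpha,t_\beta,t_\gamma})=2\ell_2+\text{RNA}([A])+\text{RNA}([B])+\text{RNA}([C])$. Substituting the three LCS values gives the displayed bound, and combining with Lemma~\ref{lem-5} yields $\text{RNA}(S_G)=m_1+m_2$.
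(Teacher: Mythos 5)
Your proposal is correct and follows essentially the same route as the paper: reduce to $\text{RNA}(S_{t_\alpha,t_\beta,t_\gamma})$ via Lemmas~\ref{lem-13} and~\ref{lem-15}, force the canonical nesting by counting arguments exploiting $\ell_2 \gg \ell_1 \gg \ell_0$, and evaluate each of the three blocks as an LCS (the paper is slightly less explicit than you about the $\mathrm{LCS}(0^aX,0^aY)=a+\mathrm{LCS}(X,Y)$ step and the cyclic relabeling, but both are handled the same way). The structural step you flag as the main obstacle is exactly what the paper supplies, via two claims for any optimal folding: each $1^{\ell_2}$ sends some $1$ into ${1'}^{2\ell_2}$ (otherwise $|A|\le \ell_2+3.3\ell_1 < 2\ell_2+3\ell_1$), and each block's ${0}^{\ell_1}$ and ${0'}^{\ell_1}$ are bracketed together (here the comparison is $\ell_1$ achievable versus at most $0.1\ell_1$, i.e.\ an $\Omega(\ell_1)$ rather than $\Omega(\ell_2)$ gap, but the conclusion---that all pairs are confined within blocks---is the one you anticipate).
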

\begin{proof}
By Lemma~\ref{lem-5}, we already have $\RNA(S_G) \geq m_1 + m_2$.
By Lemma~\ref{lem-13} and Lemma~\ref{lem-15}, we have
$\RNA(S_G) \leq m_1 + 4 \ell_2  +
\max\limits_{t_{\alpha}, t_{\beta}, t_{\gamma} \in \mathcal{C}_k} \RNA(S_{t_{\alpha}, t_{\beta}, t_{\gamma}})$.
Thus, to prove the lemma, it suffices to show that $\RNA(S_{t_{\alpha}, t_{\beta}, t_{\gamma}}) = 2 \ell_2 + 3 \ell_1 + \frac{3}{2} \ell_0
 - Q(t_{\alpha}, t_{\beta}, t_{\gamma})$. Recall that $Q(t_{\alpha}, t_{\beta}, t_{\gamma})$ is defined as
 $\frac{1}{2}\big(
 \dLCS (\CLG(t_\alpha), \CNG(t_\beta)) +
 \dLCS(\CLG(t_\alpha), \CNG(t_\gamma)) +
 \dLCS (\CLG(t_\beta), \CNG(t_\gamma))
 \big)$.

First of all, we calculate a simple lower bound on $\RNA(S_{t_{\alpha}, t_{\beta}, t_{\gamma}})$.
Pairing up letters not residing in clique node gadgets and clique list gadgets yields an RNA folding of $S_{t_{\alpha}, t_{\beta}, t_{\gamma}}$ with cardinality $2\ell_2 + 3 \ell_1$, and so $\RNA(S_{t_{\alpha}, t_{\beta}, t_{\gamma}}) \geq 2 \ell_2 + 3 \ell_1$.
We claim that for any optimal RNA folding $A$ of $S_{t_{\alpha}, t_{\beta}, t_{\gamma}}$, the following two statements are true.
\begin{description}
\item[(S1)] For each of the two substrings ${1}^{\ell_2}$, there is a letter $1$ paired up with a letter $1'$ in the substring ${1'}^{2\ell_2}$.
\item[(S2)] For each $S' \in \{S_{t_{\gamma}, t_{\alpha}}, S_{t_{\alpha}, t_{\beta}}, S_{t_{\beta}, t_{\gamma}}\}$, there is a pair linking a letter $0'$ in ${0'}^{\ell_1} \subseteq S'$ and a letter $0$ in ${0}^{\ell_1} \subseteq S'$.
\end{description}

To prove the  statement (S1), suppose that a substring ${1}^{\ell_2}$ does not have any letter matched to a letter in the substring ${1'}^{2\ell_2}$. We show that this leads to a contradiction.
Observe that the number of $1'$ in $S_{t_{\alpha}, t_{\beta}, t_{\gamma}}$ that does not belong to  ${1'}^{2\ell_2}$ is at most $0.1 \ell_1$. Thus, $|A|$ is at most the total number of $0'$ plus the total number of $1$ minus $(\ell_2 - 0.1 \ell_1)$.
By a simple calculation, $|A| \leq (3 \ell_1 + 0.1 \ell_1) + (2 \ell_2 + 0.1 \ell_1) - (\ell_2 - 0.1 \ell_1) = \ell_2 + 3.3 \ell_1 < 2 \ell_2 + 3 \ell_1$, contradicting with the known lower bound of $\RNA(S_{t_{\alpha}, t_{\beta}, t_{\gamma}})$.

To prove the  statement (S2) suppose that there is a substring $S' \in \{S_{t_{\gamma}, t_{\alpha}}, S_{t_{\alpha}, t_{\beta}}, S_{t_{\beta}, t_{\gamma}}\}$ that has no pair linking a letter $0'$ in ${0'}^{\ell_1} \subseteq S'$ and a letter $0$ in ${0}^{\ell_1} \subseteq S'$. Due to (S1), any pair in $A$ involving letters in ${0'}^{\ell_1} \subseteq S'$ or ${0}^{\ell_1} \subseteq S'$ are confined to be within $S'$. Therefore, the number of pairs in $A$ involving letters in $S'$ is at most $|S| - 2\ell_1 \leq 0.1 \ell_1$.
This is certainly not optimal, since simply matching all $0'$ in ${0'}^{\ell_1}$ to all $0$ in ${0}^{\ell_1}$ gives us $\ell_1$ amount of pairs.

We can infer from the above two statements that for each $S' \in \{S_{t_{\gamma}, t_{\alpha}}, S_{t_{\alpha}, t_{\beta}}, S_{t_{\beta}, t_{\gamma}}\}$, letters within $S'$ are only matched to letters within $S$ in any optimal RNA folding of $S_{t_{\alpha}, t_{\beta}, t_{\gamma}}$.
As a result,
\begin{align*}
\RNA(S_{t_{\alpha}, t_{\beta}, t_{\gamma}})
& =  \RNA({1}^{\ell_2} \circ {1}^{\ell_2} \circ {1'}^{2\ell_2}) + \RNA(S_{t_{\gamma}, t_{\alpha}}) + \RNA(S_{t_{\alpha}, t_{\beta}}) +\RNA(S_{t_{\beta}, t_{\gamma}} )\\
&= 2 \ell_2 + 3 \ell_1 + \RNA(\CNG(t_\gamma) p({\CLG(t_\alpha)}^R))+ \RNA(\CNG(t_\alpha) p({\CLG(t_\beta)}^R)) \\
& \hspace{0.4cm} + \RNA(p(\CNG(t_\beta)) {\CLG(t_\gamma)}^R)\\
&=2 \ell_2 + 3 \ell_1 + \frac{3}{2} \ell_0
 - \frac{1}{2} \big(
 \dLCS (\CLG(t_\alpha), \CNG(t_\beta)) +
 \dLCS(\CLG(t_\alpha), \CNG(t_\gamma))\\
 & \hspace{0.4cm} +
 \dLCS (\CLG(t_\beta), \CNG(t_\gamma))
 \big).\qedhere
\end{align*}
\end{proof}

We are ready to prove Theorem~\ref{thm-2}.



\begin{proof}[Proof of Theorem~\ref{thm-2}]
Throughout the proof, $k$ is treated as a constant.
Given a graph $G$, we construct the  string $S_G$. According to Lemma~\ref{lem-1} and Lemma~\ref{lem-length}, the length of $S_G$ is ${O}( n^{k+1} \log n)$, and $S_G$ can be constructed in time ${O}\left( n^{k+1} \log n\right)$.
We let $t_{\alpha}, t_{\beta}, t_{\gamma} \in \mathcal{C}_k$ be chosen to minimize
\begin{align*}
&Q(t_{\alpha}, t_{\beta}, t_{\gamma})\\
&=\frac{1}{2}\big(
 \dLCS (\CLG(t_\alpha), \CNG(t_\beta)) +
 \dLCS(\CLG(t_\alpha), \CNG(t_\gamma)) +
 \dLCS (\CLG(t_\beta), \CNG(t_\gamma))
 \big).
 \end{align*}
By Lemma~\ref{lem-3}, there exists a number $c_1$ meeting the following conditions.
\begin{itemize}
 \item The number $c_1$ depends only on $n$ and $k$, and $Q(t_{\alpha}, t_{\beta}, t_{\gamma})  \geq 3 c_1 / 2$.
 \item If $Q(t_{\alpha}, t_{\beta}, t_{\gamma})  = 3 c_1 / 2$, then each of $t_{\alpha} \cup t_{\beta}$, $t_{\alpha} \cup t_{\gamma}$, and $t_{\beta} \cup t_{\gamma}$ is a $2k$-clique; in other words, $t_{\alpha} \cup t_{\beta} \cup t_{\gamma}$ is a $3k$-clique.
 \item If $Q(t_{\alpha}, t_{\beta}, t_{\gamma}) > 3c_1 / 2$, then the graph has no $3k$-clique.
\end{itemize}
According to Lemma~\ref{lem-16}, ${\RNA}(S_G) = m_1 + m_2$. By its definition, $m_1$ only depends on $n$ and $k$; and $m_2 = 6 \ell_2 + 3 \ell_1 + \frac{3}{2} \ell_0 - \min_{t_\alpha, t_\beta, t_\gamma \in \mathcal{C}_k} Q(t_\alpha, t_\beta, t_\gamma)$.  Hence we are able to infer whether  $G$ has a $3k$-clique from the value of ${\RNA}(S_G)$, which can be calculated in time $T\left({O}\left(n^{k+1} \log n\right)\right)$.
\end{proof}

\section{Hardness of Dyck Edit Distance Problem}

In this section, we consider the Dyck edit distance problem. The goal of this section is to present a simple reduction from RNA folding problem (with alphabet size 4) to Dyck edit distance problem (with alphabet size 10).

\paragraph{Dyck Edit Distance.}
Recall that the Dyck edit distance problem asks for the minimum number of edits to transform a given
string to a well-balanced parentheses of s different types.
The formal definition of the problem is as follows. Given $S \in (\Sigma \cup \Sigma')^n$, the goal of the Dyck edit distance problem is to find a minimum number of edit operations (insertion, deletion, and substitution) that transform $S$ into a string in the Dyck context free language.

Given $\Sigma$ and its corresponding $\Sigma'$, the Dyck context free language is defined by the grammar with following production rules: $\mathbf{S} \rightarrow \mathbf{SS}$, $\forall x \in \Sigma, \mathbf{S} \rightarrow x\mathbf{S}x'$, and $\mathbf{S} \rightarrow \epsilon$ (empty string). Note that for each $x \in \Sigma$, the two symbols $x$ and $x'$ represent one type of parenthesis.

\paragraph{An Alternate Formulation.} An alternative definition of the Dyck edit distance problem is  as follows.
Given a sequence $S \in (\Sigma \cup \Sigma')^n$, find a minimum cost set $A \subseteq \{(i,j) \ | \ 1 \leq i < j \leq n \}$ satisfying the following conditions:
\begin{itemize}
\item $A = A_M \cup A_S$ has no crossing pair.
\item $A_M$ contains only pairs of the form $(x,x')$, $x \in \Sigma$ (i.e. for all $(i,j)\in A_M$, we have $S[i]=x$, $S[j]=x'$, for some $x \in \Sigma$). $A_M$ corresponds to the set of matched pairs.
\item $A_S$ does not contain any pair of the form $(y',x)$, $x,y \in \Sigma$ (i.e. for all $(i,j)\in A_S$ we have either $S[i] \in \Sigma$ or $S[j] \in \Sigma'$). $A_S$ corresponds to the set of pairs that can be fixed by one substitution operation per each pair.
\item Let $D$ be the set of letters in $S$ that do not belong to any pair in $A$. Each letter in $D$ requires one deletion/insertion operation to fix.
\end{itemize}
The cost of $A$ is then defined as $|A_S|+|D|$, and the Dyck edit distance of the string $S$ is the cost of a minimum cost set meeting the above conditions.

\paragraph{Discussion.}
Dyck edit distance problem can be thought of as an asymmetric version of the RNA folding problem that also handles substitution, in addition to deletion and insertion. Though these two problems look similar, they can behave quite differently. For example, in Section~\ref{sec.intro} we describe a simple reduction from LCS to RNA folding; since LCS is basically the edit distance problem without substitution, one might feel that the same reduction also reduces the edit distance problem to the Dyck edit distance problem. However, this is not true. The following example shows that the edit distance between two strings $X$ and $Y$ cannot be inferred from the Dyck edit distance of $X \circ p(Y^R)$.
Both the two strings $X_1 = ababa$ and $X_2 = abbaa$ require  4 edit operations to transform into the string $Y = caaac$; but the Dyck edit distance of $X_1 \circ p(Y^R) = ababac'a'a'a'c'$ is 4 (by deleting all $b$ and $c'$), while the Dyck edit distance of $X_2 \circ p(Y^R) = abbaac'a'a'a'c'$ is 3 (by deleting all $c'$ and substituting the second $b$ with $b'$).

Intuitively, the substitution operation makes Dyck edit distance more complicated than RNA folding. Indeed,  the conditional lower bound  for Dyck edit distance shown in~\cite{ABV15} requires a big alphabet size of 48 and a longer proof.
In the remainder of this section, we prove Theorem~\ref{thm-3} by demonstrating a simple reduction from RNA folding problem (with alphabet size 4) to Dyck edit distance problem (with alphabet size 10). This improves upon the hardness result in \cite{ABV15}, and justifies the intuition that Dyck edit distance is at least as hard as RNA folding.




\begin{proof}[Proof of Theorem~\ref{thm-3}]
For notational simplicity, we let the alphabet for the RNA folding problem be $\Sigma \cup \Sigma'= \{0,0',1,1'\}$ instead of $\{A,C,G,U\}$. Let $S$ be any string in ${(\Sigma \cup \Sigma')^n}$. We define the string $S_{\text{Dyck}}$ as the result of applying the following operations on $S$:
\begin{itemize}
\item Replace each letter $0$ with the sequence $S_{0} = aeb'aeb'$.
\item Replace each letter $0'$ with the sequence $S_{0'} = bba'a'$.
\item Replace each letter $1$ with the sequence $S_{1} = ced'ced'$.
\item Replace each letter $1'$ with the sequence $S_{1'} = ddc'c'$.
\end{itemize}
It is clear that $S_{\text{Dyck}}$ is a sequence of length at most $6n$ on  the alphabet $\{a,b,c,d,e\} \cup \{a',b',c',d',e'\}$, though the letter $e'$ is not used. We claim that the Dyck edit distance of $S_{\text{Dyck}}$ is $\frac{|S_{\text{Dyck}}|}{2} - 2\RNA(S)$.

\paragraph{Upper Bound.} We show that the Dyck edit distance of $S_{\text{Dyck}}$ is at most $\frac{|S_{\text{Dyck}}|}{2} - 2\RNA(S)$. Given an optimal RNA folding of $S$, we construct a crossing-free matching $A$ with cost  $\frac{|S_{\text{Dyck}}|}{2} - 2\RNA(S)$ as follows.

\medskip

\noindent {For matched pairs in the RNA folding of $S$:}

\begin{itemize}
\item For each matched pair $(0,0')$ in the RNA folding of $S$,  we add two pairs $(a,a'),(a,a')$  to $A_M$, and add three pairs $(e,b'),(e,b'),(b,b)$ to $A_S$ in its corresponding pair of substrings $(S_0 = \mathbf{a}(eb')\mathbf{a}(eb'), S_{0'} = (bb)\mathbf{a'a'})$ in $S_{\text{Dyck}}$.
\item For each matched pair $(0',0)$ in the RNA folding of $S$, we add two pairs $(b,b'),(b,b')$ to $A_M$, and add three pairs $(a',a'),(a,e),(a,e)$ to $A_S$ in its corresponding pair of substrings $(S_{0'} = \mathbf{bb}(a'a'), S_{0} = (ae)\mathbf{b'}(ae)\mathbf{b'})$ in $S_{\text{Dyck}}$.
\item Similarly, for each matched pair $(1,1'),(1',1)$ in the RNA folding of $S$, we add two pairs to $A_M$ and three pairs to $A_S$.
\end{itemize}

\noindent {For unmatched letters in $S$:}

\begin{itemize}
\item For each unmatched letter $0$ in $S$, we add three pairs $(a,b'),(e,b'),(a,e)$ to $A_S$ in  its corresponding substring $S_0 = (a(eb')(ae)b')$. Similarly, for each unmatched letter $1$, we  add three pairs to $A_S$.
\item For each unmatched letter $0'$ in $S$, we add two pairs $(b,b),(a',a')$ to $A_S$ in  its corresponding substring $S_0 = (bb)(a'a')$. Similarly, for each unmatched letter $1'$, we  add two pairs to $A_S$.
\end{itemize}

The set $A_M$ has size $2\RNA(S)$, the set $A_S$ has size $\frac{|S_{\text{Dyck}}| - 4\RNA(S)}{2}$, and $D$ is an empty set. Therefore, the cost of $A$ is $\frac{|S_{\text{Dyck}}| - 4\RNA(S)}{2} = \frac{|S_{\text{Dyck}}|}{2} - 2\RNA(S)$.

\paragraph{Lower Bound.} We show that the Dyck edit distance of $S_{\text{Dyck}}$ is at least $\frac{|S_{\text{Dyck}}|}{2} - 2\RNA(S)$. Given a crossing-free matching $A$ (on the string $S_{\text{Dyck}}$) of cost $C$, we  recover an RNA folding of $S$ that has $\geq \frac{|S_{\text{Dyck}}|}{4} - \frac{C}{2}$ number of matched pairs.

We build a multi-graph $G=(V,E)$ such that $V$ is the set  of all substrings $S_0$, $S_{0'}$, $S_1$, and $S_{1'}$ that constitute $S_{\text{Dyck}}$, and the number of  edges between two substrings in $V$ is the number of pairs in $A_M$ linking letters between these two substrings. Note that $|V|=n$ and $|E|=A_M$. It is clear that $C \geq \frac{|S_{\text{Dyck}}| - 2|E|}{2}$, since $|A_S|+|D| \geq \frac{|S_{\text{Dyck}}| - 2|A_M|}{2} = \frac{|S_{\text{Dyck}}| - 2|E|}{2}$. We show that we can obtain an RNA folding of $S$ that has size $\geq \frac{|E|}{2}$. Note that $\frac{|E|}{2} \geq \frac{|S_{\text{Dyck}}|}{4} - \frac{C}{2}$.
We make the following three observations.
\begin{description}
\item[(O1)] $G$ has degree at most 2. The reason is that at most two letters in each substring  $S_0$, $S_{0'}$, $S_1$, $S_{1'}$ can participate in pairings of the form $(x,x')$,  $x \in \{a,b,c,d\}$, without crossing.
\item[(O2)] In the graph $G$, each edge either (i) links a substring $S_0$ with a substring $S_{0'}$,  or (ii) links a substring $S_1$ with a substring $S_{1'}$.  The reason is that any pairing of the form $(x,x')$,  $x \in \{a,b,c,d\}$, must be made between $S_0$ and $S_{0'}$, or between $S_1$ and $S_{1'}$.
\item[(O3)] $G$ does not contain any cycle of odd length. This is due to (O2).
\end{description}
In view of (O2), a (graph-theoretic) matching $M \subseteq E$ of $G$ naturally corresponds to a size-$|M|$ RNA folding of $S$, as follows. For each edge, which is a pair of substrings in $S_{\text{Dyck}}$, in $M$, we add its corresponding pair of letters in $S$ to the RNA folding. By (O1) and (O3), $G$  in a graph of maximum degree 2 without odd cycles, and a maximum matching in such a graph has size at least $\frac{|E|}{2}$, and so we conclude the proof.
\end{proof}

The reason that the letter $e$ is essential in the proof is briefly explained as follows. Suppose that $e$ is removed. For  each matched pair $(0,0')$ in the RNA folding of $S$,  after adding two pairs $(a,a')$ and $(a,a')$  to $A_M$, the letter $b'$ between two letters $a$ in $S_0=ab'ab'$ cannot participate in any matching. Hence  some letters have to be in $D$ according to our construction of the crossing-free matching $A$, which implies that our construction might not be optimal.

Consider the case $S = (0 0' 0')$. We would have $S_{\text{Dyck}} = ab'ab'bba'a'bba'a'$ after removing $e$. If we match the two pairs $(a,a')$ and $(a,a')$ in $\mathbf{a}b'\mathbf{a}b'bb\mathbf{a'a'}bba'a'$, then the cost will be at least $5$ (three substitutions and two deletions are needed). However, there is a solution that uses only 4 substitutions: $\mathbf{a}(b'\mathbf{a}(b'(bb)a')\mathbf{a'}(bb)a')\mathbf{a'}$.

Note that if substitution is not allowed in the definition of Dyck edit distance, then the letter $e$ in the above proof is not needed, and this lowers the alphabet size requirement from 10 to 8.

\section{Conclusion}
In this paper we present a conditional lower bound of RNA folding problem with alphabet size 4, and demonstrate a simple reduction from RNA folding problem to Dyck edit distance problem. One open problem that still remains is whether it is possible to reduce Dyck edit distance problem to RNA folding problem (i.e., the reverse of Theorem~\ref{thm-3}).
The ``standard'' RNA folding problem only finds an optimal pseudoknot-free fold for an RNA sequence;  however, the ``real world'' RNA folding includes  pseudoknots, and is more complicated. There are variants of RNA folding problem that consider pseudoknots; see e.g.,~\cite{FG12} and the citations therein. It would be interesting to see whether the techniques presented in this paper and~\cite{ABV15,BK15} can be adapted to provide meaningful lower bounds for these problems.

\bigskip
\noindent{\bf Acknowledgements.} The author thanks Seth Pettie for helpful discussions and comments.

\bibliographystyle{plain}
\bibliography{references}

\end{document}